\ifdefined\pdfoutput\pdfoutput=1\fi
\documentclass[prx,aps,twocolumn,notitlepage,superscriptaddress,showpacs,nofootinbib]{revtex4-2}

\usepackage{amsmath, amsthm,commath}
\usepackage{graphicx}
\usepackage[dvipsnames,svgnames,table,cmyk]{xcolor}
\usepackage[colorlinks]{hyperref}
\hypersetup{
	colorlinks = true,
	urlcolor = {blue},
	citecolor = {blue},
	linkcolor= {blue}
}

\usepackage[charter,cal=cmcal,sfscaled=false]{mathdesign}
\usepackage{booktabs}
\usepackage{multirow}
\usepackage{caption}
\usepackage{mathrsfs}

\DeclareMathOperator{\Tr}{Tr}
\newcommand{\id}{\mathbb I}
\newcommand{\dd}{\,\mathrm d}
\newtheorem{theorem}{Theorem}

\newtheorem{proposition}[theorem]{Proposition}
\newtheorem{corollary}[theorem]{Corollary}

\begin{document}

\title{Resource Compatibility in Optimal Quantum Symmetry Testing}

\begin{abstract}
Can a symmetry be tested optimally without using the quantum resource naturally associated with it?
For parallel subgroup-versus-Haar testing, we derive a quantitative representation-theoretic converse for every protocol with zero type-I error: the excess type-II error is bounded below by the sum of two nonnegative penalties, one for weight on suboptimal subgroup types and one for deviations of within-type Schur profiles from their typewise extremal profiles.
At optimality, both penalties vanish, yielding support and weight locking for every optimal system marginal.
Disjointness of the resulting locked set and the free marginal set rules out resource-free optimality, whereas overlap alone does not establish attainability.
For the diagonal torus in dimension \(d\ge2\), optimal testing without coherence is possible only with a single query.
We determine both the exact optimal type-II error over all protocols with incoherent system marginals and the minimum system-marginal coherence required to attain the unrestricted optimum.
For the orthogonal subgroup in dimension \(d\ge2\), a real protocol attains the optimum at every query number.
For the qutrit Clifford group, Wigner-positive optimal protocols exist at three and four queries but not from five through nine; whether they exist beyond nine queries remains open.
Together, these results relate the subgroup- and query-dependent resource requirements of optimal symmetry testing to structural constraints on optimal system marginals.
\end{abstract}

\hypersetup{pdftitle={Resource Compatibility in Optimal Quantum Symmetry Testing},pdfauthor={Yimeng Cao, Ranyiliu Chen, Chengkai Zhu, Xin Wang}}

\author{Yimeng Cao}
\email{yimengcao728@gmail.com}
\affiliation{Thrust of Artificial Intelligence, Information Hub,\\
The Hong Kong University of Science and Technology (Guangzhou), Guangzhou 511453, China}

\author{Ranyiliu Chen}
\email{chenranyiliu@quantumsc.cn}
\affiliation{Quantum Science Center of Guangdong-Hong Kong-Macao Greater Bay Area, Shenzhen 518045, China}

\author{Chengkai Zhu}
\affiliation{QudeLeap Research, Shanghai 200030, China}

\author{Xin Wang}
\email{felixxinwang@hkust-gz.edu.cn}
\affiliation{Thrust of Artificial Intelligence, Information Hub,\\
The Hong Kong University of Science and Technology (Guangzhou), Guangzhou 511453, China}
\maketitle

\par\noindent\textbf{Introduction.}---Quantum process tomography reconstructs an unknown quantum process from
suitably chosen preparations and measurements~\cite{ChuangNielsen,PCP}, whereas
many tasks ask only whether the process has a specified structural property.
Quantum property testing formalizes this reduced objective~\cite{Montanaro2013a}.
For unitary dynamics, existing query models test structural properties of a
fixed black-box unitary, including general unitary
properties~\cite{Wang2011a,She2023}, Clifford or stabilizer
structure~\cite{Low2009Clifford,GrossNezamiWalter2021,Hinsche2026CliffordTesting},
and close-versus-far subgroup membership~\cite{unitarysubgroup1}. These settings
impose a promise on a single unknown unitary. By contrast, we distinguish two
averaged hypotheses: the same unitary is sampled once either from Haar measure
on a subgroup or from Haar measure on \(\mathrm U(d)\), and that realization is reused
in all queries. Related symmetry-testing tasks have also been studied for
Hamiltonians, states, and channels
\cite{LaBordeWilde2022,LaBordeRethinasamyWilde2023}.

The subgroup structure also connects the testing problem to quantum resource
theories~\cite{ChitambarGour2019}, which distinguish free states and operations
from resourceful ones. Symmetry restrictions in particular give rise to
resource theories of asymmetry and quantum reference frames
\cite{BartlettRudolphSpekkens2007,GourSpekkens2008}. Diagonal and real unitaries are naturally associated
with coherence~\cite{baumgratz2014quantifying,streltsov2017colloquium} and
imaginarity~\cite{RTOI2,RTOI1}, respectively. Clifford and stabilizer
structures are central to magic~\cite{Veitch2014StabilizerResource,HowardCampbell2017}; for
qutrits, a pure state has a nonnegative discrete Wigner function if and only if
it is a stabilizer state~\cite{gross2006hudson}, while Wigner negativity is a
resource for quantum computation and controls the overhead of quasiprobability
simulation~\cite{veitch2012negative,MariEisert2012,PashayanWallmanBartlett2015}.
Together with the operational role of resources in discrimination
tasks~\cite{TakagiRegula2019,NapoliEtAl2016,WuEtAl2021Operational,PianiWatrous2009,TakagiEtAl2019Subchannel}, these
connections motivate a resource-sensitive version of symmetry testing.
Attainment of the unrestricted optimum by a free protocol shows that the
associated resource is unnecessary, whereas ruling out such attainment
establishes a genuine resource requirement.

Recent work introduced finite-query hypothesis tests for prescribed symmetries
of unknown unitary dynamics~\cite{yuao}. For qubits, that work also compared a repetition strategy with protocols using correlated probes.
For parallel subgroup-versus-Haar
testing at zero type-I error, Hayashi et al.~\cite{hayashi2025predicting}
determined the optimal type-II error and proved attainability. These results
determine how well the hypotheses can be distinguished. Yet the optimal value,
even together with one protocol that attains it, neither characterizes the
system marginals of all optimizers nor quantifies how closely near-optimal
marginals satisfy the structural constraints obeyed at optimality.
This motivates the central question: can subgroup-versus-Haar
testing be performed optimally using only free probes and free
measurements?

In this work, we derive a quantitative representation-theoretic converse for
every zero-type-I-error parallel protocol. It lower-bounds the excess type-II
error by penalties for weight on suboptimal subgroup types and deviations of
within-type Schur profiles from their typewise extremal profiles, yielding exact
support and weight locking at optimality and quantitative stability near it.

Applied to the diagonal torus and orthogonal subgroup for \(d\ge2\), and to the qutrit Clifford group, the framework reveals three qualitatively different behaviors.
For the diagonal torus, optimal testing without coherence is possible only with a single query.
We also determine the exact optimal type-II error and its query scaling for protocols with incoherent system marginals, as well as the minimum system-marginal coherence required to attain the unrestricted optimum.
For the orthogonal subgroup, a real protocol \mbox{attains} the optimum at every query number.
For the qutrit Clifford group, Wigner-positive protocols attain the optimum at \(m=3,4\), whereas no optimal system marginal is Wigner-positive for \(m=5,\ldots,9\).
Because Wigner positivity is preserved under partial trace, this marginal obstruction extends to complete protocols; whether Wigner-positive optimality is possible for \(m\ge10\) remains open.
Figure~\ref{fig:resource-constrained-testing} summarizes these regimes.

\par\noindent\textbf{Unitary subgroup testing.}---Let
\(\mathcal H=\mathbb C^d\) be the single-system Hilbert space, and fix a
closed subgroup \(G\subseteq\mathrm U(d)\). Let \(\mu_G\) and
\(\mu_{\mathrm{Haar}}\) denote the normalized Haar measures on \(G\) and
\(\mathrm U(d)\), respectively. We consider the hypotheses
\begingroup
\abovedisplayskip=10pt minus 6pt
\belowdisplayskip=10pt minus 6pt
\begin{equation}
H_0:\ U\sim\mu_G,
\qquad
H_1:\ U\sim\mu_{\mathrm{Haar}}.
\label{eq:subgroup-haar-hypotheses}
\end{equation}
\endgroup

\begin{figure}[!t]
  \centering
  \includegraphics[width=\columnwidth]{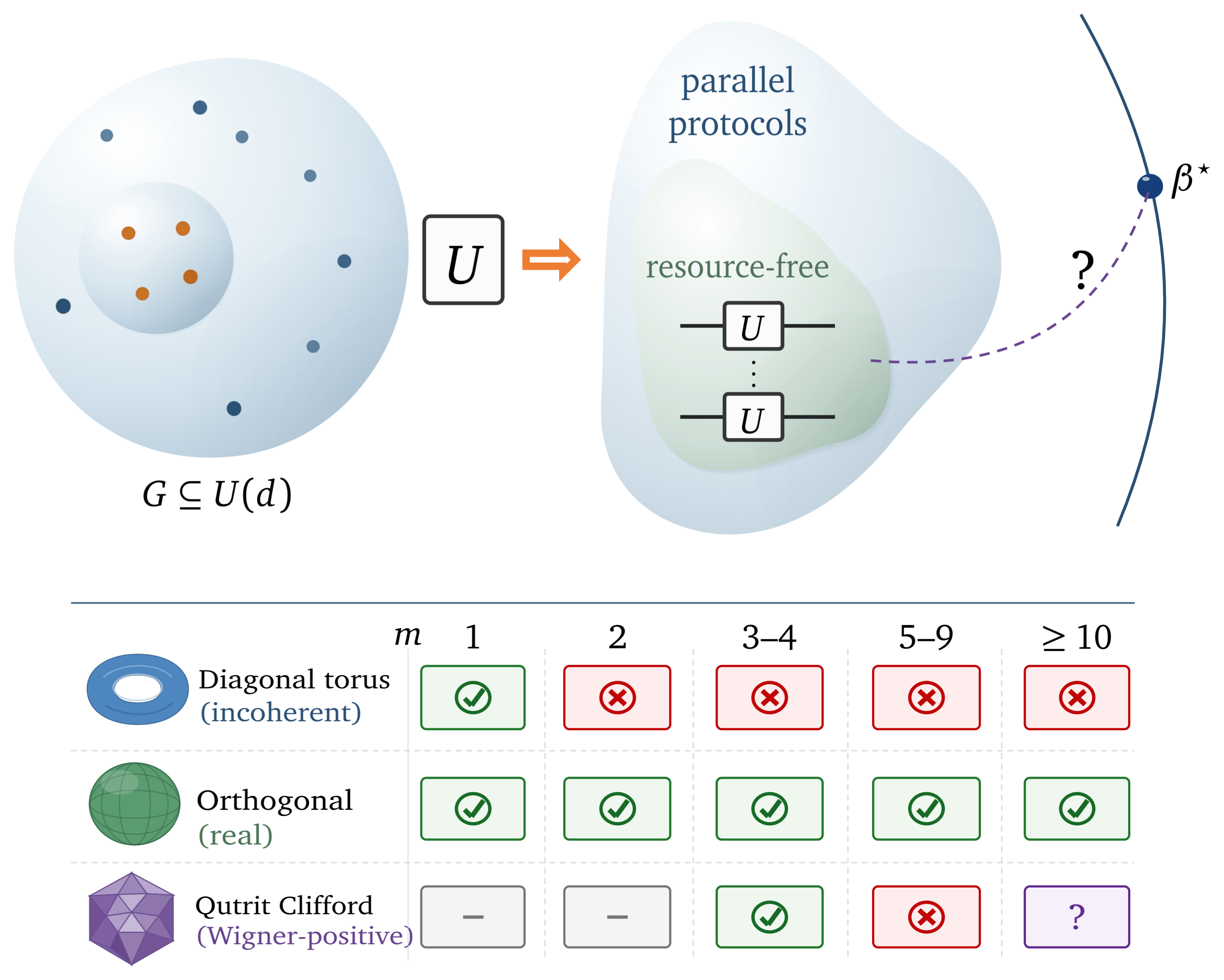}
  \caption{Resource-free unitary subgroup testing.
The upper schematic asks whether free parallel protocols attain the
unrestricted optimum \(\beta^\star\); the lower panel summarizes attainable,
ruled-out, trivial, and open query regimes for the three subgroup tests.}
  \label{fig:resource-constrained-testing}
\end{figure}

For a positive integer \(m\), the unknown unitary is sampled once under
whichever hypothesis holds, and the same realization is used in all \(m\)
parallel queries.

An \(m\)-query parallel protocol may use an arbitrary ancillary system
\(R\) with Hilbert space \(\mathcal H_R\). A protocol consists of a probe
state \(\sigma\) on
\(\mathcal H^{\otimes m}\otimes\mathcal H_R\) and a binary POVM
\(M=\{M_0,M_1\}\) on the same joint space, where outcome \(0\) accepts
\(H_0\). The probe may be mixed and entangled across the query systems and
the ancilla. The unknown unitary acts as \(U^{\otimes m}\) on the query
systems and trivially on \(R\), with \(\id_R\) denoting the identity
operator on \(\mathcal H_R\).

\begingroup
\clubpenalty=10000
Define the joint output state \(\sigma_V\) (\(V\in\mathrm U(d)\)) and the
average type-I and type-II errors by
\begin{equation}
\begin{aligned}
\sigma_V
&:=
(V^{\otimes m}\otimes\id_R)
\sigma
(V^{\otimes m}\otimes\id_R)^\dagger,
\\
\alpha(\sigma,M)
&:=
1-
\int_G
\Tr(M_0\sigma_g)
\dd\mu_G(g),
\\
\beta(\sigma,M)
&:=
\int_{\mathrm U(d)}
\Tr(M_0\sigma_U)
\dd\mu_{\mathrm{Haar}}(U).
\end{aligned}
\label{eq:testing-errors}
\end{equation}

Since \(\mu_G\) has full support on \(G\) and the acceptance probability is
continuous in the unitary, \(\alpha=0\) holds if and only if every subgroup
unitary is accepted with certainty. This is the zero-type-I-error endpoint of
binary quantum hypothesis testing~\cite{Helstrom1976,WangRenner2012}. We denote by
\(\beta_{d,m}^{\star}(0)\) the infimum of \(\beta(\sigma,M)\) over all
\(m\)-query parallel protocols satisfying this constraint. We suppress the dependence of
\(\beta_{d,m}^{\star}(0)\) and related quantities on the fixed subgroup \(G\)
throughout. The reduced probe state
\(\rho:=\Tr_R\sigma\) on the \(m\) query systems is the system marginal
studied below.
We call \(\beta^\star_{d,m}(0)\) the unrestricted optimum within the present
parallel architecture: no additional resource restriction, such as
incoherence, reality, or Wigner positivity, is imposed. A broader foundational
literature studies the optimal and perfect discrimination of unitary and
general quantum operations
\cite{Acin2001,Sacchi2005,ChiribellaDArianoPerinotti2008,DuanFengYing2009}; adaptive and indefinite-causal-order
strategies~\cite{Bavaresco_2021,Bavaresco2022} lie outside the present model.
\par
\endgroup

The query-system space \(\mathcal H^{\otimes m}\) carries commuting
tensor-power and permutation actions of \(\mathrm U(d)\) and the symmetric
group \(\mathfrak S_m\), respectively. Schur--Weyl duality~\cite{Group}, followed by
restriction of each \(\mathrm U(d)\)-irreducible component to \(G\), gives
\begin{equation}
\begin{aligned}
\mathcal H^{\otimes m}
&\cong
\bigoplus_{\substack{\lambda\vdash m\\ \ell(\lambda)\le d}}
U_\lambda\otimes S_\lambda,
\\
U_\lambda\!\downarrow_G
&\cong
\bigoplus_{\eta}
V_\eta\otimes\mathbb C^{\,n_{\eta,\lambda}}.
\end{aligned}
\label{eq:schur-weyl-branching}
\end{equation}
Here \(\lambda\) ranges over partitions of \(m\) with at most \(d\) parts,
with \(\ell(\lambda)\) denoting the number of nonzero parts. The spaces
\(U_\lambda\) and \(S_\lambda\) carry the corresponding irreducible
representations of \(\mathrm U(d)\) and \(\mathfrak S_m\), respectively.
In the second line, \(V_\eta\) is an irreducible \(G\)-module of type
\(\eta\), and \(n_{\eta,\lambda}\) is its multiplicity in
\(U_\lambda\!\downarrow_G\). We write \(d_\lambda:=\dim U_\lambda\) and
\(d_\eta:=\dim V_\eta\). A subgroup type \(\eta\) occurs at \(m\) queries
if and only if \(n_{\eta,\lambda}>0\) for at least one Schur label
\(\lambda\) appearing in Eq.~\eqref{eq:schur-weyl-branching}.

Let \(\mathcal I_m\) denote the set of subgroup types occurring at
\(m\) queries. For \(\eta\in\mathcal I_m\), define the
representation-theoretic score and its maximum by
\begin{equation}
h_m(\eta)
:=
\frac{1}{d_\eta}
\sum_\lambda d_\lambda n_{\eta,\lambda},
\qquad
h_{\max}
:=
\max_{\nu\in\mathcal I_m}h_m(\nu).
\label{eq:subgroup-type-score}
\end{equation}
The sum over \(\lambda\) runs over the Schur labels in
Eq.~\eqref{eq:schur-weyl-branching}, with \(n_{\eta,\lambda}=0\) whenever
\(V_\eta\) does not occur in \(U_\lambda\!\downarrow_G\). The maximizing
types form the subset
\(\mathcal I_m^\star
:=\bigl\{\eta\in\mathcal I_m:h_m(\eta)=h_{\max}\bigr\}\).

Hayashi et al.~\cite{hayashi2025predicting} showed that the optimal
type-II error under the zero-type-I-error constraint is
\begin{equation}
\beta_{d,m}^{\star}(0)
=
h_{\max}^{-1}.
\label{eq:optimal-type-II-error}
\end{equation}
They also established that this value is attained by an \(m\)-query
parallel protocol.

\par\noindent\textbf{Quantitative locking principle.}---To determine whether a resource
restriction is compatible with optimality, the scalar value in
Eq.~\eqref{eq:optimal-type-II-error} is not enough: one must constrain the
system marginals of optimal and near-optimal protocols. The converse below bounds the excess type-II error in terms of weight on
suboptimal types and deviations of the within-type Schur profiles from
their typewise extremal profiles. At optimality, the
resulting support and weight conditions define a convex locked marginal set.
If this set is disjoint from the free marginal set associated with a resource
restriction, no protocol with a free system marginal can be optimal; overlap
removes only this marginal obstruction, and attainability still requires an
explicit construction.

To state the bound, only the type and joint-sector weights of the system
marginal are needed. For each \(\eta\in\mathcal I_m\), let
\(\mathcal H_\eta\) be the corresponding \(G\)-isotypic subspace.
For each Schur label with \(n_{\eta,\lambda}>0\), let
\(\Pi_{\eta,\lambda}\) project onto the corresponding joint
\(G\)-type/Schur sector in Eq.~\eqref{eq:schur-weyl-branching}. Set
\[
\Lambda_\eta:=\{\lambda:n_{\eta,\lambda}>0\},
\qquad
\Pi_\eta:=\sum_{\lambda\in\Lambda_\eta}\Pi_{\eta,\lambda}.
\]
Then \(\Pi_\eta\) projects onto \(\mathcal H_\eta\).
The type weights and joint-sector weights of \(\rho\) are
\[
p_\eta:=\Tr(\Pi_\eta\rho),
\qquad
q_{\eta,\lambda}:=\Tr(\Pi_{\eta,\lambda}\rho),
\]
so that \(p_\eta=\sum_{\lambda\in\Lambda_\eta}q_{\eta,\lambda}\).
The conditional Schur profile \(\boldsymbol{\pi}_\eta\) is defined when
\(p_\eta>0\), whereas the typewise extremal profile
\(\boldsymbol{\pi}_\eta^\star\) is defined for every
\(\eta\in\mathcal I_m\):
\[
\boldsymbol{\pi}_\eta
:=
\left(
\frac{q_{\eta,\lambda}}{p_\eta}
\right)_{\lambda\in\Lambda_\eta},
\qquad
\boldsymbol{\pi}_\eta^\star
:=
\left(
\frac{d_\lambda n_{\eta,\lambda}}
     {d_\eta h_m(\eta)}
\right)_{\lambda\in\Lambda_\eta}.
\]
By Eq.~\eqref{eq:subgroup-type-score}, \(\boldsymbol{\pi}_\eta^\star\)
is a probability vector with full support on \(\Lambda_\eta\).
For probability vectors \(\mathbf a\) and \(\mathbf b\) on the same
finite set, with \(b_\lambda>0\), write
\(\chi^2(\mathbf a\Vert\mathbf b)
:=\sum_\lambda(a_\lambda-b_\lambda)^2/b_\lambda\) for the Pearson
divergence.

\begin{theorem}[Quantitative support and weight locking]
\label{thm:support-weight-locking}
Let \((\sigma,M)\) be an \(m\)-query parallel protocol satisfying
\(\alpha(\sigma,M)=0\), and let \(\rho=\Tr_R\sigma\). Then
\begin{align}
\beta(\sigma,M)-\beta^\star_{d,m}(0)
\geq{}&
\sum_{\eta:p_\eta>0}
p_\eta
\left(
\frac{1}{h_m(\eta)}
-
\frac{1}{h_{\max}}
\right)
\notag\\
&+
\sum_{\eta:p_\eta>0}
\frac{p_\eta}{h_m(\eta)}
\chi^2\!\left(
\boldsymbol{\pi}_\eta
\middle\Vert
\boldsymbol{\pi}_\eta^\star
\right).
\label{eq:quantitative-locking}
\end{align}
The two terms on the right-hand side are nonnegative. If moreover
\(\beta(\sigma,M)=\beta_{d,m}^{\star}(0)\), then
\begin{align}
\operatorname{supp}\rho
&\subseteq
\bigoplus_{\eta\in\mathcal I_m^\star}\mathcal H_\eta ,
\label{eq:support-locking}
\\
\boldsymbol{\pi}_\eta
&=
\boldsymbol{\pi}_\eta^\star ,
\label{eq:weight-locking}
\end{align}
where the second relation holds for every
\(\eta\in\mathcal I_m^\star\) with \(p_\eta>0\).
\end{theorem}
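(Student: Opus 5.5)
The plan is to reduce the zero-type-I-error constraint to a support condition, and then to lower-bound the Haar acceptance probability by a Cauchy--Schwarz argument carried out sector by sector in the decomposition~\eqref{eq:schur-weyl-branching}. A short computation shows that the right-hand side of Eq.~\eqref{eq:quantitative-locking}, after adding \(\beta^\star_{d,m}(0)=h_{\max}^{-1}\) from Eq.~\eqref{eq:optimal-type-II-error}, equals
\[
\sum_{\eta:p_\eta>0}\frac{p_\eta}{h_m(\eta)}\bigl(1+\chi^2(\boldsymbol\pi_\eta\Vert\boldsymbol\pi_\eta^\star)\bigr)
=\sum_{\eta:p_\eta>0}\sum_{\lambda\in\Lambda_\eta}\frac{d_\eta\,q_{\eta,\lambda}^2}{p_\eta\,d_\lambda n_{\eta,\lambda}}.
\]
This uses \(\sum_\lambda\pi_\lambda^2/\pi^\star_\lambda=1+\chi^2\) and \(\sum_\eta p_\eta=1\). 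So the core claim to prove is
\[
\beta(\sigma,M)\ \ge\ \sum_{\eta:p_\eta>0}\sum_{\lambda\in\Lambda_\eta}\frac{d_\eta q_{\eta,\lambda}^2}{p_\eta d_\lambda n_{\eta,\lambda}}.
\]
Nonnegativity of the two penalty terms is immediate: \(h_m(\eta)\le h_{\max}\) gives the first, and \(\chi^2\ge0\) gives the second.

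\textbf{Step 1: from \(\alpha=0\) to a support condition.} Since \(\alpha=0\) forces \(\Tr(M_0\sigma_g)=1\) for every \(g\in G\), the operator \(M_0\) acts as the identity on \(\operatorname{supp}\omega\), where \(\omega:=\int_G\sigma_g\dd\mu_G(g)\) is the \(G\)-twirl. We also have \(\int\sigma_U\dd\mu_{\mathrm{Haar}}=\xi:=\int\omega_U\dd\mu_{\mathrm{Haar}}\). Hence \(\beta\ge\Tr(P\xi)\), where \(P\) is the projector onto \(\operatorname{supp}\omega\).

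\textbf{Step 2: explicit forms of \(\omega\) and \(\xi\).} By Schur's lemma, \(\omega\) restricted to \(\mathcal H_\eta\otimes\mathcal H_R\) has the form \(\id_{V_\eta}/d_\eta\otimes\tau_\eta\), with \(\tau_\eta\) acting on \(\bigoplus_\lambda\mathbb C^{n_{\eta,\lambda}}\otimes S_\lambda\otimes\mathcal H_R\). Similarly \(\xi=\bigoplus_\lambda \id_{U_\lambda}/d_\lambda\otimes\Tr_{U_\lambda}(\Pi_\lambda\omega\Pi_\lambda)\).

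\textbf{Step 3: the key inequality.} For each pair \((\eta,\lambda)\) I would apply Cauchy--Schwarz in the form
\[
\Tr(\Pi_{\eta,\lambda}\omega)^2\le\Tr(P\,\Pi_{\eta,\lambda}\xi\,\Pi_{\eta,\lambda}P)\cdot\Tr(\omega\,\Pi_{\eta,\lambda}\xi^{-1}\Pi_{\eta,\lambda}\,\omega).
\]
Equivalently, I would use the operator inequality \(\Pi_{\eta,\lambda}\omega\Pi_{\eta,\lambda}\le c_{\eta,\lambda}\,\Pi_{\eta,\lambda}\xi\Pi_{\eta,\lambda}\) restricted to the relevant support, with \(c_{\eta,\lambda}=d_\lambda n_{\eta,\lambda}p_\eta/(d_\eta q_{\eta,\lambda})\). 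The value of this constant should come from the dimension count: the \(\eta\)-block occupies a fraction \(d_\eta/(d_\lambda n_{\eta,\lambda})\) of \(U_\lambda\)'s dimension, relative to the isotypic weight. Summing the resulting bounds \(\Tr(P\,\Pi_{\eta,\lambda}\xi)\ge d_\eta q_{\eta,\lambda}^2/(p_\eta d_\lambda n_{\eta,\lambda})\) over \(\eta,\lambda\) would give the core claim.

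\textbf{Main obstacle.} The hardest step is that \(P\) and \(\xi\) are not block-diagonal in \(\eta\). The Haar twirl mixes different \(G\)-types inside the same \(U_\lambda\), and \(\omega\) may carry coherences between different \(\lambda\)'s through the ancilla. My first attempt is to pinch along the joint sectors: the \(G\)-type projectors commute with \(\omega\), and the \(\lambda\)-projectors commute with \(\xi\). I would then use operator convexity of \((X,Y)\mapsto X Y^{-1}X\), together with the fact that \(P\) dominates each \(\Pi_{\eta,\lambda}\)-compressed support, so that the sector-wise Cauchy--Schwarz bounds add without cross terms. The weights \(p_\eta\) would be inserted by a second Cauchy--Schwarz step over \(\lambda\) within each type. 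This second step is what makes the bound linear in \(p_\eta\), rather than producing a single global inverse \(1/\Tr(\omega\xi^{-1}\omega)\).

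\textbf{Optimality case.} If \(\beta=\beta^\star_{d,m}(0)\), both nonnegative penalties vanish. The first term vanishing forces \(p_\eta=0\) whenever \(h_m(\eta)<h_{\max}\); since \(\rho=\Tr_R\sigma\) has the same type weights as \(\sigma\), this gives Eq.~\eqref{eq:support-locking}. The second term vanishing forces \(\chi^2(\boldsymbol\pi_\eta\Vert\boldsymbol\pi_\eta^\star)=0\) for every \(\eta\) with \(p_\eta>0\), which is Eq.~\eqref{eq:weight-locking}.
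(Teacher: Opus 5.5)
Your rewriting of the target as $\beta\ge\sum_{\eta,\lambda}d_\eta q_{\eta,\lambda}^2/(p_\eta d_\lambda n_{\eta,\lambda})$ is correct. So are Step~1 ($M_0\succeq P$, hence $\beta\ge\Tr(P\xi)$), Step~2, and the optimality case, and they follow the same skeleton as the paper. The gap is Step~3, which carries all of the content.

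The operator inequality you call ``equivalent'' already holds with the state-independent constant $d_\lambda n_{\eta,\lambda}/d_\eta$. Your $c_{\eta,\lambda}$ is that constant divided by $\pi_{\eta,\lambda}$. Taking traces against $P$ with your constant reproduces the target only if $\Tr(P\Pi_{\eta,\lambda}\omega\Pi_{\eta,\lambda})=q_{\eta,\lambda}$, i.e.\ only if $P$ dominates the support of $\Pi_{\eta,\lambda}\omega\Pi_{\eta,\lambda}$, as you assert in your obstacle paragraph. This fails whenever $\omega$ has coherence between two Schur sectors of the same type. For example, let the type-$\eta$ block of $\omega$ be $\frac{\id_{V_\eta}}{d_\eta}\otimes p_\eta\lvert\psi\rangle\langle\psi\rvert$ with $\lvert\psi\rangle$ spread over two sectors. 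Then $\Tr(P\Pi_{\eta,\lambda}\omega\Pi_{\eta,\lambda})=q_{\eta,\lambda}^2/p_\eta$, and your constant yields only $\Tr(P\Pi_{\eta,\lambda}\xi)\ge d_\eta q_{\eta,\lambda}^3/(d_\lambda n_{\eta,\lambda}p_\eta^2)$. The factor $\pi_{\eta,\lambda}$ must instead come from Cauchy--Schwarz against the support projector, $\Tr(P\Pi_{\eta,\lambda}\omega\Pi_{\eta,\lambda})\ge q_{\eta,\lambda}^2/p_\eta$ (the paper's $\pi_{\eta,\lambda}^2\le\Tr(Q_\eta\widetilde\tau_{\eta,\lambda})$), not from the constant.

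The factor $n_{\eta,\lambda}$ does not come from a dimension count either. The missing lemma is the positive-operator partial-trace bound $X\preceq n\,\id_{\mathbb C^n}\otimes\Tr_{\mathbb C^n}X$ for $X\succeq0$ on $\mathbb C^n\otimes\mathcal K$. It is saturated by states maximally entangled between the multiplicity space and $S_\lambda\otimes\mathcal H_R$, which is why $n_{\eta,\lambda}$ is genuinely needed. Apply it to the $(\eta,\lambda)$ compression of the type-$\eta$ block of $\omega$. Then note that $\Tr_{U_\lambda}(\Pi_\lambda\omega\Pi_\lambda)$ dominates the type-$\eta$ contribution alone, because the other types add positive terms that can be dropped. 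Together these give $\Pi_{\eta,\lambda}\omega\Pi_{\eta,\lambda}\preceq(d_\lambda n_{\eta,\lambda}/d_\eta)\,\Pi_{\eta,\lambda}\xi\Pi_{\eta,\lambda}$, which with the Cauchy--Schwarz step closes Step~3. The same inequality bounds the second factor in your $\xi^{-1}$ form of Cauchy--Schwarz by $p_\eta d_\lambda n_{\eta,\lambda}/d_\eta$, so that version also works once the lemma is supplied.

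Your diagnosis of the obstacle is also off. Both $P$ and $\xi$ commute with $\Pi_\eta\otimes\id_R$: $P$ because $\omega$ is $G$-invariant, and $\xi$ because it has the form $\bigoplus_\lambda\frac{\id_{U_\lambda}}{d_\lambda}\otimes T_\lambda$. The only $\eta$-mixing is therefore nonnegative cross-type contributions, which can be discarded. No pinching, operator convexity, or second Cauchy--Schwarz over $\lambda$ is needed, and $p_\eta=\Tr(\Pi_\eta\omega)$ enters directly through the single Cauchy--Schwarz step.
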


At zero excess error, Eq.~\eqref{eq:support-locking} excludes every suboptimal
type, while Eq.~\eqref{eq:weight-locking} fixes the Schur profile within each
maximizing type with \(p_\eta>0\). The weights \(p_\eta\) of distinct maximizing
types remain free.

Let \(\mathcal E_{\mathrm{lock}}\) denote the convex set of system marginals
satisfying Eqs.~\eqref{eq:support-locking} and~\eqref{eq:weight-locking}.
Geometrically, Eq.~\eqref{eq:support-locking} selects a face of the state
space, while Eq.~\eqref{eq:weight-locking} cuts out an affine slice within that
face. The free-set criterion stated above is therefore a direct consequence of
the locking conditions.

Beyond exact optimality, let
\(\varepsilon:=\beta(\sigma,M)-\beta^\star_{d,m}(0)\).
For zero-type-I-error protocols at fixed \(G,d,m\),
Eq.~\eqref{eq:quantitative-locking} bounds the total weight on suboptimal
types and the \(p_\eta\)-weighted Pearson deviation of the Schur profiles
from their typewise extremal profiles by \(O(\varepsilon)\).
Corollary~\ref{cor:supp-trace-stability} in Supplemental
Sec.~\ref{sec:supp-locking} places the system marginal within trace
distance \(O(\sqrt{\varepsilon})\) of \(\mathcal E_{\mathrm{lock}}\).
Consequently, a nonempty closed free marginal set disjoint from
\(\mathcal E_{\mathrm{lock}}\) entails a strictly positive excess-error
gap for protocols with marginals in that set.
This trace-distance estimate concerns \(\mathcal E_{\mathrm{lock}}\),
not the set of system marginals realized by optimal protocols.

The penalties depend only on \(p_\eta\) and \(q_{\eta,\lambda}\);
they do not determine the states within occupied joint sectors or require
coherences between those sectors to vanish. They do not characterize the
full protocol. A marginal no-go statement extends to protocols with free joint probes
when the relevant free property is preserved under partial trace, without
imposing a resource restriction on the POVM effects. Proofs and explicit
constants are given in Supplemental Sec.~\ref{sec:supp-locking}.

\par\noindent\textbf{Diagonal torus.}---Let
\(\mathbb T_d\subseteq\mathrm U(d)\) be the diagonal unitary subgroup,
with \(d\ge2\). Throughout this section, \(\alpha=0\).
The torus types at \(m\ge1\) queries are indexed by occupations
\(\mathbf n\in\mathbb Z_{\ge0}^d\), \(|\mathbf n|:=\sum_jn_j=m\).
The isotypic space \(W_{\mathbf n}\) is spanned by computational-basis
strings containing label \(j\) exactly \(n_j\) times and has dimension
\({N_{\mathbf n}=m!/\prod_jn_j!}\).
The corresponding irreducible torus representation is one-dimensional;
its multiplicity in \(U_\lambda\) is the weight multiplicity
\(K_{\lambda,\mathbf n}\). Write \(f^\lambda=\dim S_\lambda\).
As shown in Supplemental Sec.~\ref{sec:supp-diagonal-torus}, balanced
occupations \(\mathbf b\), whose entries differ by at most one, are exactly
the maximizing types, so \(h_m(\mathbf b)=h_{\max}\).

Let \(\Delta\) dephase all query systems in the computational product
basis; an incoherent marginal satisfies \(\Delta\rho=\rho\).
For \(\Lambda_{\mathbf n}=\{\lambda\vdash m:\ell(\lambda)\le d,
K_{\lambda,\mathbf n}>0\}\), the extremal profile of
Theorem~\ref{thm:support-weight-locking} and the incoherent profile are
\begin{equation}
\pi^\star_{\mathbf n,\lambda}
=\frac{d_\lambda K_{\lambda,\mathbf n}}{h_m(\mathbf n)},
\qquad
\pi^{\mathrm{inc}}_{\mathbf n,\lambda}
=\frac{f^\lambda K_{\lambda,\mathbf n}}{N_{\mathbf n}}.
\label{eq:torus-two-profiles}
\end{equation}
The joint projectors \(\Pi_{\mathbf n,\lambda}\) commute with query
permutations, which act transitively on strings of occupation
\(\mathbf n\); their diagonals on \(W_{\mathbf n}\) are therefore constant.
Thus every incoherent system marginal has profile
\(\boldsymbol\pi^{\mathrm{inc}}_{\mathbf n}\) in each occupied type
\((p_{\mathbf n}>0)\). Optimality instead locks the marginal to balanced
types with profiles \(\boldsymbol\pi^\star_{\mathbf b}\).
For \(m\ge2\), the two profiles differ on every balanced type, so no protocol with an
incoherent system marginal attains the unrestricted optimum.
At \(m=1\), a basis probe and
its return projection, together with the complementary effect, are all
diagonal and attain \(\beta^\star_{d,1}(0)=1/d\).

\begingroup
\displaywidowpenalty=10000
Let \(\beta^{\mathrm{inc}}_{d,m}\) be the infimum of \(\beta\) over all
\(m\)-query parallel protocols satisfying \(\alpha=0\) and
\(\Delta(\Tr_R\sigma)=\Tr_R\sigma\), with arbitrary ancilla \(R\),
joint probe otherwise unrestricted, and arbitrary measurement.
Preparing a basis string of occupation \(\mathbf n\) and projecting back
onto it gives the Haar acceptance probability
\begin{equation}
B(\mathbf n):=\int_{\mathrm U(d)}\prod_{j=1}^d|U_{jj}|^{2n_j}
\,\dd\mu_{\mathrm{Haar}}(U).
\label{eq:diagonal-occupation-error}
\end{equation}
\endgroup
\begin{theorem}[Incoherent-marginal optimum]
\label{thm:diagonal-error}
For \(d\ge2\) and \(m\ge1\), let \(\mathbf b\) be a balanced occupation. Then
\begin{equation}
\beta^{\mathrm{inc}}_{d,m}
=\min_{|\mathbf n|=m}B(\mathbf n)=B(\mathbf b).
\label{eq:diagonal-exact-optimum}
\end{equation}
The same optimum holds when the joint probe and both POVM effects are
required to be diagonal in the computational product basis tensored with a fixed ancilla basis.
It is attained without an ancilla by any computational-basis string
\(|x\rangle\in W_{\mathbf b}\), with
\(\sigma=M_0=|x\rangle\langle x|\) and \(M_1=\id-M_0\).
\end{theorem}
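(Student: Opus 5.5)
The plan is to prove the lower bound \(\beta\ge\sum_{\mathbf n}p_{\mathbf n}B(\mathbf n)\) for every protocol with \(\alpha=0\) and an incoherent system marginal. Combined with a majorization-type argument showing that balanced occupations minimize \(B\), and with the explicit basis-string protocol, this closes the proof. The key steps are a reduction of the acceptance effect to the support of the torus-twirled probe, a fidelity-type bound per occupation block, and the observation that \(\prod_i U_{x_ix_i}\) depends only on the occupation of \(x\).

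\emph{Step 1 (support reduction).} Let \(\tau:=\int_{\mathbb T_d}\sigma_g\dd\mu(g)\). Since every torus irrep is one-dimensional and the occupation labels \(\mathbf n\) index them, the twirl is the pinching \(\tau=\sum_{\mathbf n}\tau_{\mathbf n}\), where \(\tau_{\mathbf n}:=(\Pi_{\mathbf n}\otimes\id_R)\sigma(\Pi_{\mathbf n}\otimes\id_R)\). Zero type-I error gives \(\Tr(M_0\tau)=1\), hence \(M_0\ge P\), where \(P=\sum_{\mathbf n}P_{\mathbf n}\) projects onto \(\operatorname{supp}\tau\). Haar invariance gives \(\int\sigma_U=\int\tau_U\). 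Dropping the cross terms between distinct blocks, all of which contribute nonnegatively, yields
\[
\beta\ge\sum_{\mathbf n}\int_{\mathrm U(d)}\Tr\bigl(P_{\mathbf n}V\tau_{\mathbf n}V^\dagger\bigr)\dd\mu_{\mathrm{Haar}}(U),
\]
where \(V=U^{\otimes m}\otimes\id_R\).

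\emph{Step 2 (fidelity bound).} Write \(\tau_{\mathbf n}=\sum_k|\psi_k\rangle\langle\psi_k|\). Since \(P_{\mathbf n}\ge|\psi_k\rangle\langle\psi_k|/\|\psi_k\|^2\), we get
\[
\Tr\bigl(P_{\mathbf n}V\tau_{\mathbf n}V^\dagger\bigr)\ge\sum_k\frac{|\langle\psi_k|V|\psi_k\rangle|^2}{\|\psi_k\|^2}.
\]
Cauchy--Schwarz with weights \(\|\psi_k\|^2\), which sum to \(p_{\mathbf n}\), bounds the right-hand side below by \(|\Tr(\rho_{\mathbf n}U^{\otimes m})|^2/p_{\mathbf n}\). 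Here \(\rho_{\mathbf n}=\Pi_{\mathbf n}\rho\Pi_{\mathbf n}\) is the query marginal of \(\tau_{\mathbf n}\). Incoherence of the marginal now enters: \(\rho_{\mathbf n}=\sum_{x\in W_{\mathbf n}}r_x|x\rangle\langle x|\), so
\[
\Tr(\rho_{\mathbf n}U^{\otimes m})=\sum_x r_x\prod_i U_{x_ix_i}=p_{\mathbf n}\prod_jU_{jj}^{n_j},
\]
because the product depends only on the occupation of \(x\). Integrating gives \(p_{\mathbf n}B(\mathbf n)\), hence \(\beta\ge\sum_{\mathbf n}p_{\mathbf n}B(\mathbf n)\ge\min_{|\mathbf n|=m}B(\mathbf n)\).

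\emph{Step 3 (balanced occupations minimize \(B\)).} I expect this to be the main obstacle, since \(B\) has no simple closed form to compare directly. The plan is a Robin Hood transfer. Suppose \(n_i\ge n_j+2\), and set \(s=n_i-n_j\), \(a=|U_{ii}|^2\), \(b=|U_{jj}|^2\). Write \(A:=(ab)^{n_j}\prod_{k\ne i,j}|U_{kk}|^{2n_k}\), which is symmetric in \(i\leftrightarrow j\). Conjugation by the transposition matrix preserves Haar measure and swaps \(a\) and \(b\) while fixing \(A\), so \(\mathbb E[Aa^s]=\mathbb E[Ab^s]\) and \(\mathbb E[Aa^{s-1}b]=\mathbb E[Ab^{s-1}a]\). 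The elementary inequality \(a^{s-1}b+ab^{s-1}\le a^s+b^s\), which follows from \((a^{s-1}-b^{s-1})(a-b)\ge0\), then gives \(B(\mathbf n-e_i+e_j)\le B(\mathbf n)\). Iterating such transfers reaches a balanced \(\mathbf b\). All balanced occupations are permutations of one another and so have equal \(B\) by the same symmetry. Hence \(\min B=B(\mathbf b)\).

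\emph{Step 4 (attainability and the diagonal variant).} Take \(\sigma=M_0=|x\rangle\langle x|\) with \(x\in W_{\mathbf b}\). This protocol has \(\alpha=0\), since diagonal unitaries fix \(|x\rangle\) up to phase, and it gives \(\beta=\int|\langle x|U^{\otimes m}|x\rangle|^2=B(\mathbf b)\). It is fully diagonal and uses no ancilla. Every diagonal protocol has an incoherent marginal, so the diagonal class lies between this explicit protocol and the incoherent-marginal class. Its optimum therefore also equals \(B(\mathbf b)\).
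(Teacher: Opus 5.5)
Your proof is correct. Steps~3 and~4 match the paper: your symmetrized inequality \(a^s+b^s\ge a^{s-1}b+ab^{s-1}\) is exactly the smoothing identity \eqref{eq:supp-diagonal-smoothing}, and your treatment of the fully diagonal class is the same as the paper's.

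The converse, however, takes a genuinely different route. The paper proceeds in three steps:
\begin{itemize}
\item It evaluates the return test through Schur--Weyl duality, giving \(B(\mathbf n)=\bigl[1+\chi^2(\boldsymbol\pi^{\mathrm{inc}}_{\mathbf n}\Vert\boldsymbol\pi^\star_{\mathbf n})\bigr]/h_m(\mathbf n)\).
\item It uses the constant diagonal of the joint projectors \(\Pi_{\mathbf n,\lambda}\) to show that every incoherent marginal has profile \(\boldsymbol\pi^{\mathrm{inc}}_{\mathbf n}\) in each occupied type.
\item It inserts this profile into the general locking bound of Theorem~\ref{thm:support-weight-locking}.
\end{itemize}
You bypass all of this. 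Torus pinching, domination of the support projector by rank-one projectors, and a Cauchy--Schwarz step give
\[
\beta\ge\sum_{\mathbf n:p_{\mathbf n}>0}p_{\mathbf n}^{-1}\int|\Tr(\rho_{\mathbf n}U^{\otimes m})|^2\dd\mu_{\mathrm{Haar}}(U).
\]
Diagonality then collapses the amplitude to \(p_{\mathbf n}\prod_jU_{jj}^{n_j}\), so the integral equals \(p_{\mathbf n}B(\mathbf n)\) by definition.

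Your argument is self-contained and needs no Kostka numbers, irrep dimensions, or Schur profiles. Its intermediate block bound also keeps the full overlap amplitude rather than only the Schur-sector weights. By Schur orthogonality, it is therefore never weaker than the torus case of the typewise locking bound, and the two coincide on incoherent marginals.

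What the paper's route buys is interpretation and reuse. It writes the incoherent deficit \(B(\mathbf n)-1/h_m(\mathbf n)\) as \(\chi^2(\boldsymbol\pi^{\mathrm{inc}}_{\mathbf n}\Vert\boldsymbol\pi^\star_{\mathbf n})/h_m(\mathbf n)\). This is the same comparison of the two profiles that drives the no-go result for \(m\ge2\) and the minimum-coherence formula of Theorem~\ref{thm:minimum-marginal-coherence}.
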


Evaluating the bound of Theorem~\ref{thm:support-weight-locking}
at the fixed incoherent profiles gives
\begin{equation}
\beta\ge\sum_{\mathbf n:p_{\mathbf n}>0}p_{\mathbf n}B(\mathbf n)
\ge B(\mathbf b).
\label{eq:diagonal-short-converse}
\end{equation}
See Supplemental Sec.~\ref{sec:supp-diagonal-torus} for the Schur
evaluation and the Haar balancing comparison giving the last inequality.
The torus acts on a basis string by a phase, so the return test has
\(\alpha=0\) and Haar error \(B(\mathbf b)\), proving attainment.
The qubit repetition strategy was studied in Ref.~\cite{yuao};
Theorem~\ref{thm:diagonal-error} supplies the converse for the full class
with incoherent system marginals.

The complementary question is how much system-marginal coherence is
needed to attain the unrestricted optimum.
Let \(C_{\mathrm{rel}}(\rho)=D(\rho\Vert\Delta\rho)\) denote the relative
entropy of coherence, using base-two logarithms~\cite{baumgratz2014quantifying}.
Define \(C_{\min}(d,m)\) as the infimum of
\(C_{\mathrm{rel}}(\Tr_R\sigma)\) over all \(m\)-query parallel protocols,
including arbitrary ancillas, with \(\alpha=0\) and
\(\beta=\beta^\star_{d,m}(0)\).

\begin{theorem}[Minimum marginal coherence]
\label{thm:minimum-marginal-coherence}
For every \(d\ge2\), \(m\ge1\), and balanced occupation \(\mathbf b\),
the infimum is attained~and
\begin{equation}
C_{\min}(d,m)
=D(\boldsymbol\pi^\star_{\mathbf b}\Vert
\boldsymbol\pi^{\mathrm{inc}}_{\mathbf b}).
\label{eq:minimum-marginal-coherence}
\end{equation}
This value is independent of the balanced occupation chosen, zero at
\(m=1\), and strictly positive for every \(m\ge2\).
\end{theorem}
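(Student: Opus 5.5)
The plan is to show that $D(\boldsymbol\pi^\star_{\mathbf b}\Vert\boldsymbol\pi^{\mathrm{inc}}_{\mathbf b})$ is a lower bound for every optimal protocol, using the locking conditions of Theorem~\ref{thm:support-weight-locking} and data processing. I will then construct an optimal protocol whose system marginal has exactly this coherence.

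\emph{Lower bound.} Let $(\sigma,M)$ be optimal with $\alpha=0$, and let $\rho=\Tr_R\sigma$.
\begin{itemize}
\item By Theorem~\ref{thm:support-weight-locking}, $\rho$ is supported on $\bigoplus_{\mathbf b}W_{\mathbf b}$, the sum over balanced occupations. Each occupied balanced type has $q_{\mathbf b,\lambda}=p_{\mathbf b}\pi^\star_{\mathbf b,\lambda}$.
\item The type projectors $\Pi_{\mathbf n}$ are diagonal in the computational product basis. Hence $\Delta\rho$ has the same type weights $p_{\mathbf n}$.
\item $\Delta\rho$ is incoherent, so by the permutation-transitivity argument preceding Theorem~\ref{thm:diagonal-error} it has profile $\boldsymbol\pi^{\mathrm{inc}}_{\mathbf b}$ in each occupied type.
\item Apply data processing for relative entropy to the projective measurement $\{\Pi_{\mathbf n,\lambda}\}$. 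This gives
\[
C_{\mathrm{rel}}(\rho)\ \ge\ \sum_{\mathbf b}p_{\mathbf b}\,D(\boldsymbol\pi^\star_{\mathbf b}\Vert\boldsymbol\pi^{\mathrm{inc}}_{\mathbf b}).
\]
\end{itemize}
The summands do not depend on $\mathbf b$. Balanced occupations are coordinate permutations of one another, and $K_{\lambda,\mathbf n}$, $N_{\mathbf n}$ and $h_m(\mathbf n)$ are invariant under such permutations, since Weyl-group symmetry of weight multiplicities gives this for $K_{\lambda,\mathbf n}$. Because $\sum_{\mathbf b}p_{\mathbf b}=1$, the bound equals $D(\boldsymbol\pi^\star_{\mathbf b}\Vert\boldsymbol\pi^{\mathrm{inc}}_{\mathbf b})$, and this also proves the claimed independence of $\mathbf b$.

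\emph{Attainment.} Fix one balanced $\mathbf b$.
\begin{itemize}
\item Write $\Pi_{\mathbf b,\lambda}=P^\lambda_{\mathbf b}\otimes\id_{S_\lambda}$, where $P^\lambda_{\mathbf b}$ projects onto the $\mathbf b$-weight space of $U_\lambda$. Its rank is $f^\lambda K_{\lambda,\mathbf b}$.
\item Let $|\Phi_\lambda\rangle$ be a normalized maximally entangled state between the range of $\Pi_{\mathbf b,\lambda}$ and an ancilla. Take
\[
|\psi\rangle=\sum_\lambda\sqrt{\pi^\star_{\mathbf b,\lambda}}\,|\Phi_\lambda\rangle,\qquad \sigma=|\psi\rangle\langle\psi|,\qquad M_0=\sigma .
\]
\item The torus acts on $W_{\mathbf b}$ by a single scalar phase, so $\sigma_g=\sigma$ for all $g\in\mathbb T_d$. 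Hence $\alpha=0$.
\item By construction, $\langle\psi|U^{\otimes m}\otimes\id_R|\psi\rangle=\sum_\lambda \pi^\star_{\mathbf b,\lambda}\Tr(P^\lambda_{\mathbf b}U_\lambda)/K_{\lambda,\mathbf b}$.
\item Schur orthogonality gives $\int|\Tr(P^\lambda_{\mathbf b}U_\lambda)|^2\dd\mu_{\mathrm{Haar}}=K_{\lambda,\mathbf b}/d_\lambda$, with no cross terms between distinct $\lambda$.
\item Using $\pi^\star_{\mathbf b,\lambda}/(d_\lambda K_{\lambda,\mathbf b})=1/h_m(\mathbf b)$, this yields
\[
\beta=\sum_\lambda(\pi^\star_{\mathbf b,\lambda})^2/(d_\lambda K_{\lambda,\mathbf b})=1/h_{\max}=\beta^\star_{d,m}(0).
\]
\end{itemize}

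\emph{Coherence of this marginal.} The marginal is $\rho=\sum_\lambda\pi^\star_{\mathbf b,\lambda}\Pi_{\mathbf b,\lambda}/(f^\lambda K_{\lambda,\mathbf b})$. The diagonal of $\Pi_{\mathbf b,\lambda}$ is constant on $W_{\mathbf b}$ and equals $f^\lambda K_{\lambda,\mathbf b}/N_{\mathbf b}$. Therefore $\Delta\rho=\id_{W_{\mathbf b}}/N_{\mathbf b}$, which commutes with $\rho$. A direct evaluation then gives $D(\rho\Vert\Delta\rho)=\sum_\lambda\pi^\star_{\mathbf b,\lambda}\log\bigl(\pi^\star_{\mathbf b,\lambda}/\pi^{\mathrm{inc}}_{\mathbf b,\lambda}\bigr)$. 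The infimum is therefore attained and equals the claimed value.

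\emph{Special cases.} At $m=1$ we have $\Lambda_{\mathbf b}=\{(1)\}$, so both profiles equal $1$ and the value is zero. For $m\ge2$ the two profiles differ on every balanced type, as stated before Theorem~\ref{thm:diagonal-error}, so the relative entropy is strictly positive.

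The main obstacle is the attainment step: one must check that the explicit purification gives exactly $1/h_{\max}$. This depends on the scalar action of the torus on $W_{\mathbf b}$, which secures $\alpha=0$, and on the vanishing of cross terms between distinct $\lambda$ under Haar averaging. The lower bound itself is a clean consequence of locking plus data processing.
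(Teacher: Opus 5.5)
\textbf{Converse and attainment.} These parts match the paper's proof. For the converse, you use support and weight locking, observe that $\Delta\rho$ keeps the type weights and has the incoherent profile, and apply data processing under the joint occupation/Schur measurement. For attainment, you take the same block-diagonal state $h_{\max}^{-1}\sum_\lambda (d_\lambda/f^\lambda)\Pi_{\mathbf b,\lambda}$, purify it, set $M_0$ equal to the probe projector, and use Schur orthogonality to get $\beta=1/h_{\max}$. One precision is needed: the ancilla supports of the $|\Phi_\lambda\rangle$ must be mutually orthogonal for distinct $\lambda$, as with the paper's labels $|\lambda,a,t\rangle_R$. Otherwise $\Tr_R\sigma$ acquires cross-sector blocks and your coherence evaluation no longer applies, although $\beta$ is unaffected.

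\textbf{Gap: strict positivity for $m\ge2$.} You support it only by citing the main-text remark that the two profiles differ on every balanced type. That remark is stated without proof, and the paper establishes it inside the proof of this very theorem. Nothing you have proved shows $\boldsymbol\pi^\star_{\mathbf b}\ne\boldsymbol\pi^{\mathrm{inc}}_{\mathbf b}$, and this step, not attainment, is where an actual idea is required.

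\textbf{How to close it.}
\begin{enumerate}
\item Since $\pi^\star_{\mathbf b,\lambda}/\pi^{\mathrm{inc}}_{\mathbf b,\lambda}=d_\lambda N_{\mathbf b}/(f^\lambda h_{\max})$ and both profiles have full support on $\Lambda_{\mathbf b}$, a vanishing divergence would force $d_\lambda/f^\lambda$ to be constant on $\Lambda_{\mathbf b}$.
\item The label $(m)$ is active, with $K_{(m),\mathbf b}=1$.
\item The label $(m-1,1)$ is also active. Jacobi--Trudi gives $s_{(m-1,1)}=H_{m-1}H_1-H_m$, so $K_{(m-1,1),\mathbf b}=\#\{j:b_j>0\}-1=\min\{m,d\}-1\ge1$.
\item The Weyl dimension and hook-length formulas give $d_{(m)}/f^{(m)}=\binom{d+m-1}{m}$ and $d_{(m-1,1)}/f^{(m-1,1)}=\binom{d+m-2}{m}$.
\item The quotient of these two ratios is $(d+m-1)/(d-1)>1$, contradicting constancy.
\end{enumerate}
With this argument added, your proof is complete.
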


For an optimal marginal, data processing under the joint occupation/Schur
measurement bounds the coherence below by the type-weighted average of
the profile divergences. Coordinate permutations of balanced occupations
make these divergences identical.
The explicit attaining marginal constructed in Supplemental
Sec.~\ref{sec:supp-diagonal-torus} is supported on a single balanced
occupation space. Its purification and the joint return projection give
\(\alpha=0\) and \(\beta=h_{\max}^{-1}\).
For \(m\ge2\), this marginal is invariant under the collective torus
action but has computational-product-basis coherence.

Proposition~\ref{prop:supp-diagonal-scaling} in Supplemental
Sec.~\ref{sec:supp-diagonal-torus} gives the following rates
for every fixed \(d\ge2\) as \(m\to\infty\):
\begin{equation}
\begin{gathered}
\beta^{\mathrm{inc}}_{d,m}=\Theta_d\!\left(m^{-d(d-1)/2}\right),\\
\beta^\star_{d,m}(0)=\Theta_d\!\left(m^{-d(d-1)}\right).
\end{gathered}
\label{eq:diagonal-scaling}
\end{equation}
At \(\alpha=0\) and the same target type-II error \(\delta\downarrow0\),
the minimum query number for incoherent system marginals is of the order
of the square of the unrestricted minimum, with constants depending on
\(d\).
For qubits, Proposition~\ref{prop:supp-qubit-coherence-rate} gives the
minimum-coherence rate as \(m\to\infty\):
\begin{equation}
C_{\min}(2,m)=\left(\frac23-\frac1{6\ln2}\right)m
+O(\log m)\ \mathrm{bits}.
\label{eq:qubit-coherence-rate}
\end{equation}
This minimum concerns the full query-system marginal of an exactly
optimal protocol; it does not include the total resource cost of the
joint probe and measurement.

\par\noindent\textbf{Orthogonal subgroup.}---Unlike coherence in diagonal-torus testing, imaginarity can be avoided
at every query number for the orthogonal subgroup. The parallel-testing optimization is invariant under entrywise
complex conjugation, so an optimal tester, which jointly represents the
probe and measurement, can be chosen real and then realized by a real
protocol.

Let \(\mathrm O(d)\subseteq\mathrm U(d)\) be the subgroup of real
orthogonal matrices. With the computational product basis on the query systems
and a reference basis on the ancillary system both fixed, we call a protocol
real if its probe state and both POVM effects have real
matrix entries in the resulting product basis.

\begin{theorem}[Real attainability for orthogonal-subgroup testing]
\interlinepenalty=10000\relax
For every \(d\ge2\) and \(m\ge1\), the optimum
\(\beta_{d,m}^{\star}(0)\) for
\(\mathrm O(d)\)-versus-\(\mathrm U(d)\) testing is attained by a real
\(m\)-query parallel protocol.
\label{thm:orthogonal-real-attainability}
\end{theorem}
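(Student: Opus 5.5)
The plan follows the remark preceding the statement: write the zero-type-I-error optimization as a convex program over testers, show that it is invariant under complex conjugation, average a given optimizer with its conjugate to obtain a real optimal tester, and then realize that tester by a real probe and real POVM.

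\textbf{Step 1: tester formulation.} For a parallel protocol \((\sigma,M)\) the acceptance probability of \(V\) is
\[
\Tr(M_0\sigma_V)=\Tr\bigl(T\,(V^{\otimes m})^{\top}\otimes\bar V^{\otimes m}\cdots\bigr).
\]
It is cleaner to use the Choi form. The acceptance probability equals \(\Tr\bigl(T\,|V^{\otimes m}\rangle\!\rangle\langle\!\langle V^{\otimes m}|\bigr)\) for the tester \(T:=\Tr_R\bigl[(\id\otimes\sigma^{\top_{\rm in}})^{1/2}\cdots\bigr]\), or equivalently the standard link product of \(\sigma\) and \(M_0\). The testers realizable by some parallel protocol are the operators \(0\le T\le \rho^{\top}\otimes\id\) with \(\rho\) a state on \(\mathcal H^{\otimes m}\). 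This is the standard characterization: given such \(T\), one takes the purification \(\sigma=|\sqrt{\rho}\rangle\!\rangle\langle\!\langle\sqrt{\rho}|\) with \(R\cong\mathcal H^{\otimes m}\) and \(M_0=(\rho^{-1/2}\otimes\id)T(\rho^{-1/2}\otimes\id)\), with the transposes handled appropriately on the support of \(\rho\). In this form the constraint \(\alpha=0\) reads \(\langle\!\langle g^{\otimes m}|T|g^{\otimes m}\rangle\!\rangle=1\) for all \(g\in\mathrm O(d)\), and the objective is \(\beta=\Tr(T\,\Omega)\) with \(\Omega=\int|U^{\otimes m}\rangle\!\rangle\langle\!\langle U^{\otimes m}|\dd\mu_{\rm Haar}\). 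By the result of Hayashi et al.\ quoted in Eq.~\eqref{eq:optimal-type-II-error}, some feasible \(T^\star\) attains \(h_{\max}^{-1}\).

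\textbf{Step 2: conjugation invariance and averaging.} Let \(\overline{X}\) denote entrywise conjugation in the fixed product basis. Conjugation preserves positivity and the ordering \(0\le\overline T\le\overline{\rho}^{\top}\otimes\id\), and \(\overline\rho\) is again a state. For real orthogonal \(g\) we have \(\overline{|g^{\otimes m}\rangle\!\rangle}=|g^{\otimes m}\rangle\!\rangle\), so the \(\alpha=0\) constraints are preserved. Haar measure is invariant under \(U\mapsto\bar U\), so \(\overline\Omega=\Omega\) and \(\Tr(\overline T\Omega)=\overline{\Tr(T\Omega)}=\Tr(T\Omega)\), this being real. Hence \(\overline{T^\star}\) is also optimal, and by convexity of the feasible set and linearity of the objective, \(T_r:=(T^\star+\overline{T^\star})/2\) is an optimal tester. \(T_r\) is real symmetric, and its associated marginal \(\rho_r=(\rho+\overline\rho)/2\) is real.

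\textbf{Step 3: realization by a real protocol.} This is the main obstacle: one must check that the canonical realization preserves reality. Since \(\rho_r\) is real symmetric positive, \(\sqrt{\rho_r}\) and its pseudo-inverse on the support are real, and so are the transposes. Hence \(\sigma=|\sqrt{\rho_r}\rangle\!\rangle\langle\!\langle\sqrt{\rho_r}|\) is real in the computational basis tensored with the computational ancilla basis \(R\cong\mathcal H^{\otimes m}\). Likewise \(M_0=(\rho_r^{-1/2}\otimes\id)T_r(\rho_r^{-1/2}\otimes\id)\) is real, and compressing to the support keeps this so. Set \(M_1=\id-M_0\), which is also real. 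The constraint \(T_r\le\rho_r^{\top}\otimes\id\) gives \(0\le M_0\le\id\) on the relevant support, and on its complement, where the probe has no weight, we set \(M_0\) to zero or to the identity. We then verify through the link-product identity that \((\sigma,M)\) reproduces \(T_r\), so that \(\alpha=0\) and \(\beta=h_{\max}^{-1}\). If the support bookkeeping causes trouble, a fallback is to avoid the tester formalism altogether. Directly average the protocol \((\sigma,M)\) with \((\overline\sigma,\overline M)\) using a classical ancilla flag \(|0\rangle,|1\rangle\), which gives a block-diagonal protocol \(\tfrac12(\sigma\otimes|0\rangle\langle0|+\overline\sigma\otimes|1\rangle\langle1|)\). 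Its error probabilities are the same, but its entries are not yet real. One then uses the standard realification that replaces each complex entry by a real \(2\times2\) block, \(a+ib\mapsto\begin{pmatrix}a&-b\\ b&a\end{pmatrix}\), acting on a qubit ancilla. This construction commutes with real unitaries \(g^{\otimes m}\) tensored with the identity, but it does \emph{not} commute with complex \(U\). That is why the conjugation-averaging route of Steps 1–3 is the intended one.
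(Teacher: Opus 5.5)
Your proposal is correct and follows essentially the paper's route. The paper likewise uses the tester program with $0\preceq T\preceq\id_B\otimes\tau$ and averages an optimal tester with its entrywise conjugate using $\overline{\Omega_G}=\Omega_G$ and $\overline{\Omega_{\mathrm U}}=\Omega_{\mathrm U}$. It then realizes the resulting real tester by the probe $|S\rangle\rangle$ with $S=\tau^{1/2}$ and the effect $M_0=(\id_B\otimes S^+)T(\id_B\otimes S^+)$, where real symmetry of $S$ makes all transposes trivial.

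The differences are cosmetic. You take existence of an optimal tester from Hayashi et al.\ rather than from compactness of the feasible set. Neither your garbled Step~1 formulas nor the discarded realification fallback is needed. The link-product check you defer is exactly the paper's identity $(V^{\otimes m}\otimes\id_R)|S\rangle\rangle=(\id_B\otimes S)|V^{\otimes m}\rangle\rangle$, combined with $T=(\id_B\otimes P_\tau)T(\id_B\otimes P_\tau)$.
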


Supplemental Sec.~\ref{sec:supp-orthogonal-real} constructs an ancilla-assisted real protocol attaining the optimum.

\par\noindent\textbf{Qutrit Clifford group.}---We take
\(\mathcal C_3\subseteq\mathrm U(3)\) to be the finite linear qutrit Clifford
group, equipped with the uniform measure. The analysis has two steps: first
identify the score-maximizing Clifford type or types, which determine the
locked marginal set, and then decide whether that set intersects the
Wigner-positive states. The relevant type family is determined by
\(m\bmod 3\); for \(m\ge5\), the unique maximizer depends only on
\(m\bmod 6\).

Since
\(\mathcal C_3\) is a unitary \(2\)-design~\cite{gross2007evenly}, the
Clifford and Haar \(m\)-fold twirls coincide for \(m=1,2\).
Consequently, \(\beta_{3,m}^{\star}(0)=1\) for \(m=1,2\), and the first
informative query number is \(m=3\). For an \(r\)-qutrit register and
\(\mathbf u=(u_1,\ldots,u_r)\in(\mathbb F_3^2)^r\), let
\(A_{\mathbf u}:=A_{u_1}\otimes\cdots\otimes A_{u_r}\) denote the
corresponding tensor-product phase-point operator~\cite{gross2006hudson}. For a
state \(\tau\) and an effect \(E\), define
\begin{equation}
W_\tau(\mathbf u)
:=
3^{-r}\Tr(A_{\mathbf u}\tau),
\qquad
W(E\mid\mathbf u)
:=
\Tr(EA_{\mathbf u}).
\label{eq:qutrit-wigner-functions}
\end{equation}
A state or effect is Wigner-positive if the corresponding function in
Eq.~\eqref{eq:qutrit-wigner-functions} is nonnegative at every
phase-space point. A binary POVM is Wigner-positive
if both effects are. A protocol with a qutrit ancillary register, possibly
trivial, is Wigner-positive if its joint probe and both POVM effects are
Wigner-positive.

Since \(\Tr A_{\mathbf u}=1\) for every \(\mathbf u\), the
always-accept measurement \(M_0=\id\) and \(M_1=0\) is
Wigner-positive; with any stabilizer probe it attains
\(\beta^\star_{3,m}(0)=1\) at \(m=1,2\).

\begin{theorem}
[Query regimes for Wigner-positive qutrit-Clifford testing]
\label{thm:clifford-wigner-threshold}
For \(\mathcal C_3\)-versus-\(\mathrm U(3)\) testing, the following hold.
\par\noindent\textup{(i)} For \(m\in\{3,4\}\), the optimum
\(\beta_{3,m}^{\star}(0)\) is attained by an ancilla-free Wigner-positive
\(m\)-query parallel protocol.
\par\noindent\textup{(ii)} For \(m=5,\ldots,9\), the system marginal
\(\rho\) of every optimal \(m\)-query parallel protocol is not
Wigner-positive. Consequently, no Wigner-positive \(m\)-query parallel
protocol attains \(\beta^\star_{3,m}(0)\).
\end{theorem}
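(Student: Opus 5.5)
The plan has three stages: compute the Clifford branching data for each \(m\), build explicit Wigner-positive protocols for \(m=3,4\), and rule out Wigner-positive optimal marginals for \(m=5,\ldots,9\) through the locked set \(\mathcal E_{\mathrm{lock}}\) of Theorem~\ref{thm:support-weight-locking}.

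\emph{Branching data.} I would first compute, for each \(m=3,\ldots,9\), the decomposition in Eq.~\eqref{eq:schur-weyl-branching} restricted to \(\mathcal C_3\). This means computing the characters of the \(\mathrm U(3)\)-irreps \(U_\lambda\) on the conjugacy classes of the finite group \(\mathcal C_3\), and taking inner products with the irreducible characters of \(\mathcal C_3\). The conjugacy-class eigenvalues give the Schur polynomials directly. From the resulting multiplicities \(n_{\eta,\lambda}\) I obtain the scores \(h_m(\eta)\), the maximizing set \(\mathcal I_m^\star\), the value \(h_{\max}\), and the profiles \(\boldsymbol\pi^\star_\eta\). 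This step is a finite computation, checked by computer algebra.

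\emph{Part (i): attainability at \(m=3,4\).} I expect the maximizing types to include a one-dimensional Clifford character, or a small irrep, realized inside a stabilizer-rich subspace. The natural candidate at \(m=3\) is a type supported on the span of Clifford-symmetric stabilizer states, such as the stabilizer code spaces fixed by \(U^{\otimes 3}\) up to phase. I would look for a stabilizer state \(|\psi\rangle\) on \(m\) qutrits with two properties:
\begin{itemize}
\item Its marginal satisfies the locking conditions, Eqs.~\eqref{eq:support-locking}--\eqref{eq:weight-locking}.
\item Its orbit under \(\mathcal C_3^{\otimes m}\)-diagonal action spans a space \(P\) such that the return test with \(M_0\) equal to the projector onto \(P\) gives \(\alpha=0\) and \(\beta=h_{\max}^{-1}\).
\end{itemize}
Wigner positivity then has two parts. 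The probe is a stabilizer state, so it is Wigner-positive by Gross's theorem~\cite{gross2006hudson}. For the effects, I aim for \(P\) to be a stabilizer code projector, meaning an average of a commuting Weyl subgroup. Its phase-point values \(\Tr(PA_{\mathbf u})\) are then \(0\) or positive, and the values of \(\id-P\) lie in \([0,1]\). This needs a direct check, because \(\Tr(A_{\mathbf u}P)\le 1\) is not automatic. The \(\beta\) value comes from the Haar average of \(\Tr(P\,U^{\otimes m}|\psi\rangle\langle\psi|U^{\dagger\otimes m})\), which Schur's lemma evaluates. If one code space does not give exactly \(h_{\max}^{-1}\), I would take a Clifford-invariant mixture over the stabilizer orbit, which keeps the probe Wigner-positive.

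\emph{Part (ii): obstruction at \(m=5,\ldots,9\).} The branching computation should show that \(\mathcal I_m^\star\) is a single type \(\eta^\star\) of dimension greater than one. By Theorem~\ref{thm:support-weight-locking}, every optimal \(\rho\) then satisfies \(\Pi_{\eta^\star}\rho\Pi_{\eta^\star}=\rho\) and has Schur weights fixed by \(\boldsymbol\pi^\star_{\eta^\star}\). I would show that \(\mathcal E_{\mathrm{lock}}\) misses the Wigner-positive polytope through a witness. The goal is a linear functional \(X=\sum_{\mathbf u}c_{\mathbf u}A_{\mathbf u}\) with all \(c_{\mathbf u}\ge0\), so that \(\Tr(X\tau)\ge0\) for every Wigner-positive \(\tau\). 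At the same time \(X\) should be affine in the locking data and strictly negative on \(\mathcal E_{\mathrm{lock}}\). Concretely, I seek coefficients \(x_0\) and \(x_\lambda\), with
\[
X=x_0(\id-\Pi_{\eta^\star})+\sum_\lambda x_\lambda\Pi_{\eta^\star,\lambda}.
\]
Its value on every locked state equals the constant \(\sum_\lambda x_\lambda\pi^\star_\lambda<0\), while its phase-point expansion \(c_{\mathbf u}=3^{-m}\Tr(XA_{\mathbf u})\) must be nonnegative. Both sides are invariant under the Clifford group and permutations, which lets me reduce the condition over \(9^m\) phase points to finitely many orbits. The search is then a small linear-programming feasibility problem, solved exactly over the rationals for each \(m\).

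Because Wigner positivity survives partial trace, the marginal statement in (ii) gives the protocol-level consequence. The main obstacle I anticipate is the witness step. Evaluating \(\Tr(\Pi_{\eta^\star,\lambda}A_{\mathbf u})\) requires the isotypic projectors expressed as Clifford-character sums, \(\Pi_{\eta}=\frac{d_\eta}{|\mathcal C_3|}\sum_g\overline{\chi_\eta(g)}g^{\otimes m}\), composed with Young symmetrizers. Both the orbit reduction and exact arithmetic are needed to keep this tractable up to \(m=9\).
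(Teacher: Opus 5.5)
Your proposal has two concrete gaps: the witness family in part (ii) is provably empty at $m=6$, and the recipe in part (i) of accepting exactly the probe's orbit span is never Wigner-positive at $m=3$.

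\textbf{Part (ii): the witness ansatz is too small.} The linear program you describe is infeasible already at $m=6$. There $\eta^\star=\chi_1$ is one-dimensional with the single active sector $(5,1)$, so $\Pi:=\Pi_{\chi_1}$ has rank $f^{(5,1)}=5$. Your functional is $X=x_0(\id-\Pi)+x_1\Pi$, with value $x_1<0$ on the locked set.
\begin{itemize}
\item Since $A_0=F^2$, the origin operator $A_{\mathbf 0}=(F^{\otimes 6})^2$ is itself a tensor-power Clifford element.
\item The relations $(FP)^3=e^{i\pi/6}\id_3$ and $P^3=F^4=\id_3$ force $F^{\otimes6}$ to act as $-1$ on every one-dimensional type at six queries. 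Hence $A_{\mathbf 0}$ acts as $+\id$ on $\mathcal H_{\chi_1}$, and $\Tr(\Pi A_{\mathbf 0})=5$.
\item Using $\Tr A_{\mathbf 0}=1$, the condition $c_{\mathbf 0}\ge0$ reads $-4x_0+5x_1\ge0$, which forces $x_0<0$.
\item But $\sum_{\mathbf u}A_{\mathbf u}=3^m\id$ gives $\sum_{\mathbf u}c_{\mathbf u}=\Tr X=724x_0+5x_1<0$, so some $c_{\mathbf u}$ is negative.
\end{itemize}
The underlying problem is that your $X$ sees the locked set only through trace functionals. It would therefore have to separate every Wigner-nonnegative operator with the locked sector weights, positive semidefinite or not, and no such functional exists here.

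The missing idea is to use that a locked $\rho$ is positive semidefinite with support in $\mathcal H_{\eta^\star}$. Then the witness's blocks on all other joint sectors are completely unconstrained. The paper takes $B_m$ to be a nonnegative combination of permutation- and Clifford-invariant orbit averages $H_{\mathcal O}=|\mathcal O|^{-1}\sum_{\mathbf u\in\mathcal O}A_{\mathbf u}$, which lie in the joint commutant. It evaluates $\Tr(B_m\rho)=\sum_\lambda w_{m,\lambda}\Tr(K_{B_m,\lambda}\rho_\lambda)$ using only the multiplicity-space blocks of $\Pi_{\eta^\star,\lambda}B_m\Pi_{\eta^\star,\lambda}$; cross-sector coherences drop out because $B_m$ preserves each joint sector. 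For $m=7,8,9$ some multiplicity spaces are two-dimensional, so the weighted combination must also be negative for every multiplicity-space state. At $m=9$ the effective $2\times2$ blocks of the three orbit averages combine to $-\tfrac{16}{805}\id_2$.

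\textbf{Part (i): the accepting effect cannot equal the orbit span.} At $m=3$ the maximizing type $\tau_{01}$ has the two-dimensional irreducible isotypic space $\mathcal K=\operatorname{span}\{|G\rangle,|S\rangle\}\subset\operatorname{Sym}^3(\mathbb C^3)$. By support locking every optimal probe lies in $\mathcal K$, so the orbit span you propose to accept is exactly $\mathcal K$. This space is not a stabilizer code, since its dimension is not a power of $3$. The return test $\{\Pi_{\mathcal K},\id-\Pi_{\mathcal K}\}$ is not Wigner-positive: $\Tr(\Pi_{\mathcal K}A_{\mathbf 0})=2$ gives $W(\id-\Pi_{\mathcal K}\mid\mathbf 0)=-1$. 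Your fallback of mixing the probe cannot remove negativity of an effect.

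The accepting effect must strictly contain the orbit span, enlarged only by directions the Haar-twirled probe $\Pi_{\mathrm{sym}}/10$ never sees. The paper rejects only $\operatorname{Sym}^3(\mathbb C^3)\ominus\mathcal K$, whose Wigner function is the indicator of triples with exactly two distinct entries. Your stabilizer-code idea also works once decoupled from the orbit span:
\begin{itemize}
\item With probe $|G\rangle$, take $M_0$ to be the projector onto the code of $\langle X^{\otimes3},Z^{\otimes3}\rangle$, which equals $\mathcal K\oplus\Lambda^3(\mathbb C^3)$.
\item Its Wigner function is the indicator of $u_1+u_2+u_3=0$. For stabilizer codes in odd prime dimension these values are automatically $0$ or $1$.
\item It gives $\alpha=0$ and $\beta=2/10=1/5$.
\end{itemize}
For $m=4$, append a spectator query qutrit on which the measurement acts trivially, as the paper does.
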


For \(m=5,\ldots,9\), the maximizing type is unique, so locking cuts out a
single explicit affine slice, which exact phase-space witnesses separate from
the Wigner-positive state set. The protocol-level conclusion
follows because Wigner positivity is preserved under partial trace, so every
Wigner-positive probe has a Wigner-positive system marginal. Uniqueness of the
maximizing type is not by itself sufficient: the maximizer is also unique at
\(m=3\), where a Wigner-positive optimal protocol exists.

Theorem~\ref{thm:supp-clifford-finite-window-wigner} in Supplemental
Sec.~\ref{sec:supp-clifford-wigner} gives a strictly positive lower bound
on the excess type-II error of protocols with \(\alpha=0\) and
Wigner-positive system marginals at \(m=5,\ldots,9\).
Although type selection is
settled for all \(m\ge5\), the present witnesses are query-specific; beyond
nine queries, the missing ingredient is a uniform phase-space separation
rather than further type classification. Detailed proofs and exact
certificates are given in Supplemental
Secs.~\ref{sec:supp-clifford-representation}--\ref{sec:supp-certificates}.

\par\noindent\textbf{Discussion.}---The quantitative converse connects resource
compatibility to the support and Schur weights of optimal system marginals.
For the diagonal torus, the exact error with incoherent system marginals and the minimum
system-marginal coherence are two quantitative consequences of this
structure. The former identifies the full restricted-class performance
and its query scaling; the latter is attained by an explicit optimal
protocol. The orthogonal and Clifford applications exhibit \mbox{different
mechanisms}: conjugation symmetry permits real attainment, whereas
phase-space witnesses exclude Wigner-positive optimal marginals in the
specified query window and yield a strict finite-window performance gap.

\begingroup
\clubpenalty=10000
Resource requirements on the system marginal established at zero type-I error need not persist when nonzero type-I error is allowed.
For example, in the two-query qubit torus test, an incoherent probe with a nondiagonal measurement attains the unrestricted optimum at certain positive average type-I-error tolerances, whereas protocols with diagonal joint probes and measurement effects remain suboptimal at the same tolerances.
Propositions~\ref{prop:supp-fully-diagonal-ROC} and~\ref{prop:supp-qubit-measurement-example} in Supplemental Sec.~\ref{sec:supp-diagonal-torus} give the precise comparison and constructions.
\par
\endgroup

\begingroup
\looseness=1
Within the present parallel architecture, the converse applies to every closed
subgroup of \(\mathrm U(d)\), but turning it into a resource statement requires
subgroup-specific information about the maximizing types and the free set.
At zero type-I error and without resource restrictions, parallel
protocols attain the same optimal type-II error as adaptive and
indefinite-causal-order strategies~\cite{hayashi2025predicting}, but
this equality does not imply equal resource requirements. The present locking
conditions concern parallel probes; resource-constrained attainability in
the larger strategy classes is not addressed. In
higher odd-prime Clifford dimensions,
the qutrit analysis suggests a concrete program: determine the type structure,
identify the resulting locked marginals, and construct either phase-space
witnesses or free optimal protocols. The qutrit conclusions do not
automatically extend, because both type selection and the separating witnesses
are dimension-dependent.
\par\endgroup

\begingroup
\looseness=1
Several questions remain. The two torus optima do not determine the
performance at a general coherence budget or the combined resource cost
of the joint probe and measurement. The qubit coherence rate applies to
exact optimizers; a uniform rate for protocols with small relative excess
type-II error requires further analysis. Proposition~\ref{prop:supp-finite-alpha} in
Supplemental Sec.~\ref{sec:supp-locking} extends the marginal converse to
nonzero type-I error, but does not by itself determine the general
resource-restricted error tradeoff.
Other open problems include uniform qutrit phase-space separation beyond
nine queries, sufficient conditions for a locked marginal to be realizable
by an optimal protocol, constraints on optimal measurement effects, and
stability relative to the actual optimal-marginal~set.
\par
\endgroup

\begingroup
\makeatletter
\patchcmd{\bibsection}
  {\addvspace{19\p@}}{\addvspace{15\p@}}
  {}{\PackageError{local-layout}{Unexpected bibliography heading spacing}{}}
\edef\textbalancetolerance{\the\dimexpr\baselineskip\relax}
\let\savedtextbalance\balance@two
\patchcmd{\balance@two}
  {\dimen@ii<.5\p@}{\dimen@ii<\textbalancetolerance\relax}
  {}{\PackageError{local-layout}{Unexpected REVTeX balance routine}{}}
\patchcmd{\balance@two}
  {\dimen@ii>-.5\p@}{\dimen@ii>-\textbalancetolerance\relax}
  {}{\PackageError{local-layout}{Unexpected REVTeX balance routine}{}}
\makeatother

\onecolumngrid
\clearpage
\endgroup

\begin{center}
\textbf{\large Supplemental Material}
\end{center}

\setcounter{section}{0}
\renewcommand{\thesection}{\Roman{section}}
\renewcommand{\thesubsection}{\Alph{subsection}}
\renewcommand{\theHsection}{supp.\arabic{section}}
\renewcommand{\theHsubsection}{supp.\arabic{section}.\arabic{subsection}}

\setcounter{equation}{0}
\renewcommand{\theequation}{S\arabic{equation}}
\renewcommand{\theHequation}{supp.\arabic{equation}}

\setcounter{theorem}{0}
\renewcommand{\thetheorem}{S\arabic{theorem}}
\renewcommand{\theHtheorem}{supp.\arabic{theorem}}

\section{Quantitative support and weight locking}
\label{sec:supp-locking}

We retain the testing framework and notation of the main text. For each
\(\eta\in\mathcal I_m\), let \(\mathcal H_\eta\) be the corresponding
\(G\)-isotypic subspace. Under the fixed Schur--Weyl and branching
decompositions, define the set of active Schur labels by
\[
\Lambda_\eta:=\{\lambda:n_{\eta,\lambda}>0\}.
\]
Let \(\Pi_{\eta,\lambda}\) project onto the joint
\(G\)-type/Schur sector
\(V_\eta\otimes\mathbb C^{\,n_{\eta,\lambda}}\otimes S_\lambda\), and
let \(\Pi_\eta:=\sum_{\lambda\in\Lambda_\eta}\Pi_{\eta,\lambda}\)
project onto \(\mathcal H_\eta\). For a state \(\rho\) on
\(\mathcal H^{\otimes m}\), set
\[
p_\eta:=\Tr(\Pi_\eta\rho),
\qquad
q_{\eta,\lambda}:=\Tr(\Pi_{\eta,\lambda}\rho).
\]
Thus
\(p_\eta=\sum_{\lambda\in\Lambda_\eta}q_{\eta,\lambda}\). Define the
Schur profile (for \(p_\eta>0\)) and the typewise extremal profile by
\[
\boldsymbol{\pi}_\eta
:=
\left(
\frac{q_{\eta,\lambda}}{p_\eta}
\right)_{\lambda\in\Lambda_\eta},
\qquad
\boldsymbol{\pi}_\eta^\star
:=
\left(
\frac{d_\lambda n_{\eta,\lambda}}
     {d_\eta h_m(\eta)}
\right)_{\lambda\in\Lambda_\eta}.
\]
Write \(\pi_{\eta,\lambda}\) and
\(\pi^\star_{\eta,\lambda}\) for the corresponding components.
Equation~\eqref{eq:subgroup-type-score} shows that
\(\boldsymbol{\pi}_\eta^\star\) is a probability vector with full
support on \(\Lambda_\eta\). For probability vectors \(\mathbf a\)
and \(\mathbf b\) on the same finite set, with \(b_\lambda>0\), write
\(\chi^2(\mathbf a\Vert\mathbf b)
:=\sum_\lambda(a_\lambda-b_\lambda)^2/b_\lambda\).
We use the convention that
\(\eta\notin\mathcal I_m\) implies \(\mathcal H_\eta=\{0\}\), and
hence \(p_\eta=0\). We call \(\eta\) occupied when \(p_\eta>0\). Every occupied
type then belongs to \(\mathcal I_m\), and hence \(h_m(\eta)>0\).

\begin{theorem}[Quantitative support and weight locking]
\label{thm:supp-support-weight-locking}
Let \((\sigma,M)\) be an \(m\)-query parallel protocol satisfying
\(\alpha(\sigma,M)=0\), and set \(\rho:=\Tr_R\sigma\). Then
\begin{align}
\beta(\sigma,M)-\beta^\star_{d,m}(0)
\geq{}&
\sum_{\eta:p_\eta>0}
p_\eta
\left(
\frac{1}{h_m(\eta)}
-
\frac{1}{h_{\max}}
\right)
+
\sum_{\eta:p_\eta>0}
\frac{p_\eta}{h_m(\eta)}
\chi^2\!\left(
\boldsymbol{\pi}_\eta
\middle\Vert
\boldsymbol{\pi}_\eta^\star
\right).
\label{eq:supp-global-bound}
\end{align}
If moreover
\(\beta(\sigma,M)=\beta_{d,m}^{\star}(0)=h_{\max}^{-1}\), then
\begin{align}
\operatorname{supp}\rho
&\subseteq
\bigoplus_{\eta\in\mathcal I_m^\star}\mathcal H_\eta ,
\label{eq:supp-support-locking}
\\
\frac{q_{\eta,\lambda}}{p_\eta}
&=
\frac{d_\lambda n_{\eta,\lambda}}{d_\eta h_{\max}} ,
\label{eq:supp-weight-locking}
\end{align}
where the second relation holds for every
\(\eta\in\mathcal I_m^\star\) with \(p_\eta>0\) and every
\(\lambda\in\Lambda_\eta\).
\end{theorem}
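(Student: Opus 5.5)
The plan is to reduce to a pure probe, write the type-II error as a Haar-twirled acceptance probability, and obtain a Cauchy--Schwarz bound type by type, with the \(\chi^2\) term arising from the within-type Schur sum.

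\textbf{Reduction.} First I would purify \(\sigma\), absorbing the purifying system into \(R\). This changes neither \(\alpha\), \(\beta\), nor \(\rho\), so \(\sigma=|\psi\rangle\langle\psi|\). Write \(\mathcal T_G\) and \(\mathcal T_U\) for the \(G\)- and \(\mathrm U(d)\)-twirls of \(\sigma\) under \(V\mapsto V^{\otimes m}\otimes\id_R\). Since \(\alpha=0\) forces \(\Tr(M_0\sigma_g)=1\) for all \(g\in G\), we get \(M_0\ge Q\), where \(Q\) is the projector onto \(\operatorname{supp}\mathcal T_G\). Hence
\[
\beta\ge\Tr(Q\,\mathcal T_U).
\]

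\textbf{Structure of the two twirls.} Schur's lemma, applied to the decomposition \eqref{eq:schur-weyl-branching} tensored with \(\mathcal H_R\), gives the block forms
\[
\mathcal T_G=\bigoplus_\eta \frac{\id_{V_\eta}}{d_\eta}\otimes\xi_\eta,
\qquad
\mathcal T_U=\bigoplus_\lambda \frac{\id_{U_\lambda}}{d_\lambda}\otimes\omega_\lambda .
\]
Here \(\xi_\eta\) acts on \(\bigoplus_{\lambda\in\Lambda_\eta}\mathbb C^{n_{\eta,\lambda}}\otimes S_\lambda\otimes\mathcal H_R\) and has trace \(p_\eta\). The operator \(\omega_\lambda=\Tr_{U_\lambda}(\Pi_\lambda\sigma\Pi_\lambda)\) has trace \(\sum_\eta q_{\eta,\lambda}\). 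In particular \(Q=\bigoplus_\eta \id_{V_\eta}\otimes Q_\eta\), where \(Q_\eta\) is the support projector of \(\xi_\eta\). Because \(\psi\) is pure, \(\operatorname{rank}\xi_\eta\le d_\eta\), by the Schmidt decomposition across \(V_\eta\) versus the rest.

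\textbf{Sector-wise estimate.} This is the main step and the expected obstacle. I want to show
\[
\Tr\bigl((\id_{V_\eta}\otimes Q_\eta)\,\Pi_{\eta,\lambda}\,\mathcal T_U\bigr)
\ge\frac{q_{\eta,\lambda}^2\,d_\eta}{p_\eta\,d_\lambda\,n_{\eta,\lambda}},
\qquad \lambda\in\Lambda_\eta .
\]
I would try a Cauchy--Schwarz argument of the form \(\Tr(XY)^2\le\Tr(X\mathcal T_U)\,\Tr(Y\mathcal T_U^{-1}Y)\) inside the single block \(U_\lambda\otimes S_\lambda\otimes\mathcal H_R\).
\begin{itemize}
\item Take \(X=\id_{V_\eta}\otimes Q_\eta\) and \(Y=\Pi_{\eta,\lambda}\mathcal T_G\Pi_{\eta,\lambda}\); then \(\Tr(XY)=q_{\eta,\lambda}\).
\item To bound \(\Tr(Y\mathcal T_U^{-1}Y)\), use that \(\mathcal T_U\) is maximally mixed on \(U_\lambda\), so the factor \(d_\lambda\) appears.
\item The \(G\)-twirl carries the factor \(1/d_\eta\). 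The multiplicity space \(\mathbb C^{n_{\eta,\lambda}}\) contributes at most a factor \(n_{\eta,\lambda}\), obtained by comparing the partial trace of \(\Pi_{\eta,\lambda}\sigma\) over \(V_\eta\otimes\mathbb C^{n_{\eta,\lambda}}\) with its partial trace over all of \(U_\lambda\).
\item The remaining normalisation \(1/p_\eta\) should come from the rank bound \(\operatorname{rank}Q_\eta\le d_\eta\), which restricts how \(\xi_\eta\) can spread its weight \(p_\eta\).
\end{itemize}
I am least sure of this bookkeeping. If it fails, I would first try working with the unnormalised vectors \(\Pi_{\eta,\lambda}\psi\) directly and apply Cauchy--Schwarz in their Schmidt coordinates. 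This is the same mechanism that yields \(\beta^\star=h_{\max}^{-1}\) in Ref.~\cite{hayashi2025predicting} when \(\sigma\) sits in one optimal sector.

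\textbf{Summation and algebra.} The sectors \((\eta,\lambda)\) are orthogonal and \(Q\) is block diagonal in \(\eta\), so summing the sector-wise estimate gives
\[
\beta\ge\sum_{\eta:p_\eta>0}\sum_{\lambda\in\Lambda_\eta}\frac{q_{\eta,\lambda}^2 d_\eta}{p_\eta d_\lambda n_{\eta,\lambda}}
=\sum_{\eta:p_\eta>0}\frac{p_\eta}{h_m(\eta)}\sum_\lambda\frac{\pi_{\eta,\lambda}^2}{\pi^\star_{\eta,\lambda}}
=\sum_{\eta:p_\eta>0}\frac{p_\eta}{h_m(\eta)}\bigl(1+\chi^2(\boldsymbol\pi_\eta\Vert\boldsymbol\pi^\star_\eta)\bigr).
\]
The last equality uses \(\sum_\lambda\pi^2/\pi^\star=1+\chi^2\). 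Subtracting \(h_{\max}^{-1}=\sum_\eta p_\eta h_{\max}^{-1}\), via \eqref{eq:optimal-type-II-error}, gives exactly \eqref{eq:supp-global-bound}. Both terms are nonnegative since \(h_m(\eta)\le h_{\max}\) and \(\chi^2\ge0\).

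\textbf{Equality case.} At \(\beta=h_{\max}^{-1}\) each term must vanish.
\begin{itemize}
\item The first term forces \(p_\eta=0\) whenever \(h_m(\eta)<h_{\max}\). Since \(\rho\ge0\), this gives \(\Pi_\eta\rho\Pi_\eta=0\) for those \(\eta\), which is \eqref{eq:supp-support-locking}.
\item The second term forces \(\boldsymbol\pi_\eta=\boldsymbol\pi^\star_\eta\) for every occupied \(\eta\). With \(h_m(\eta)=h_{\max}\) this is \eqref{eq:supp-weight-locking}.
\end{itemize}
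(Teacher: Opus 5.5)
\textbf{Verdict: genuine gap.} Your outer structure is sound and matches the paper's. The following steps are all correct:
\begin{itemize}
\item \(\alpha=0\) gives \(M_0\succeq Q\), where \(Q\) is the support projector of the \(G\)-twirl.
\item \(\Tr(Q\,\mathcal T_U)\) splits over joint sectors, because \(\mathcal T_U\) commutes with every \(\Pi_{\eta,\lambda}\otimes\id_R\).
\item The final \(\chi^2\) algebra and the equality case.
\end{itemize}
The gap is the sector-wise estimate, which carries the whole content of the theorem. The Cauchy--Schwarz route you sketch cannot deliver it.

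For notation, let \(\widetilde P_{\eta,\lambda}\) project the space of \(\xi_\eta\), namely \(\bigoplus_{\lambda'}\mathbb C^{n_{\eta,\lambda'}}\otimes S_{\lambda'}\otimes\mathcal H_R\), onto its \(\lambda\)-summand. Set \(\widetilde\xi_{\eta,\lambda}:=\widetilde P_{\eta,\lambda}\xi_\eta\widetilde P_{\eta,\lambda}\). There are three problems.
\begin{itemize}
\item \(\Tr(XY)\) equals \(\Tr(Q_\eta\widetilde\xi_{\eta,\lambda})\), not \(q_{\eta,\lambda}\). This is because \(Q_\eta\) need not commute with \(\widetilde P_{\eta,\lambda}\), and optimal probes are exactly the ones coherent across Schur sectors. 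Hilbert--Schmidt Cauchy--Schwarz with \(Q_\eta\xi_\eta^{1/2}=\xi_\eta^{1/2}\) gives only \(\Tr(Q_\eta\widetilde\xi_{\eta,\lambda})\ge q_{\eta,\lambda}^2/p_\eta\), with equality when \(\xi_\eta\) is pure.
\item That inequality is where \(1/p_\eta\) comes from. The rank bound \(\operatorname{rank}\xi_\eta\le d_\eta\) plays no role, and purification is unnecessary.
\item Dropping \(X\) from the second factor is lossy. The sharp bound is \(\Tr(Y\mathcal T_U^{-1}Y)\le d_\lambda n_{\eta,\lambda}q_{\eta,\lambda}/d_\eta\). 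Combined with the above, your inequality yields only \(d_\eta q_{\eta,\lambda}^3/(p_\eta^2 d_\lambda n_{\eta,\lambda})\), which is a factor \(\pi_{\eta,\lambda}\) short of your target.
\end{itemize}

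A concrete check: take the optimal two-query qubit torus protocol, whose probe purifies \((3|s\rangle\langle s|+|a\rangle\langle a|)/4\), with \(\boldsymbol\pi^\star=(3/4,1/4)\), \(d_\lambda n_{\eta,\lambda}=(3,1)\), and \(\beta=1/4\). In the symmetric sector, \(\Tr(XY)=9/16\), not \(q=3/4\). Both of the bounds above are attained in this example, and the summed bound is \(9/64+1/64=5/32<1/4\). So this route does not even recover \(\beta\ge h_{\max}^{-1}\), let alone locking.

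The missing ingredient is a direct operator domination of the Haar twirl by the subgroup twirl inside each joint sector:
\[
(\Pi_{\eta,\lambda}\otimes\id_R)\,\mathcal T_U\,(\Pi_{\eta,\lambda}\otimes\id_R)
\succeq
\frac{d_\eta}{d_\lambda n_{\eta,\lambda}}\,
(\Pi_{\eta,\lambda}\otimes\id_R)\,\mathcal T_G\,(\Pi_{\eta,\lambda}\otimes\id_R).
\]
It follows from two facts:
\begin{itemize}
\item \(\omega_\lambda\succeq\Tr_{\mathbb C^{n_{\eta,\lambda}}}\widetilde\xi_{\eta,\lambda}\), because the \(G\)-twirl has no cross terms between types. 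Your third bullet already points at this comparison.
\item The positive-operator bound \(\widetilde\xi_{\eta,\lambda}\preceq n_{\eta,\lambda}\,\id_{\mathbb C^{n_{\eta,\lambda}}}\otimes\Tr_{\mathbb C^{n_{\eta,\lambda}}}\widetilde\xi_{\eta,\lambda}\), proved by averaging phase conjugations on the multiplicity index.
\end{itemize}
Pairing the domination with \(\id_{V_\eta}\otimes Q_\eta\) gives \(\tfrac{d_\eta}{d_\lambda n_{\eta,\lambda}}\Tr(Q_\eta\widetilde\xi_{\eta,\lambda})\). Then \(\Tr(Q_\eta\widetilde\xi_{\eta,\lambda})\ge q_{\eta,\lambda}^2/p_\eta\) yields exactly your target. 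This is the paper's argument, run per type with \(\tau_\eta=\xi_\eta/p_\eta\) and without purifying.

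Alternatively, your Cauchy--Schwarz survives if you keep the compressed \(X\) in the second factor. This works because \(Y\mathcal T_U^{-1}Y\preceq\tfrac{d_\lambda n_{\eta,\lambda}}{d_\eta}Y\) on the relevant support, which gives \(\Tr(X\Pi_{\eta,\lambda}\mathcal T_U)\ge\tfrac{d_\eta}{d_\lambda n_{\eta,\lambda}}\Tr(XY)\).
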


The proof uses the following operator bound.

\emph{Positive-operator partial-trace bound.}
If \(X\succeq0\) acts on \(\mathbb C^n\otimes\mathcal K\), let
\(E_i:=|i\rangle\langle i|\otimes\id_{\mathcal K}\),
let \(\zeta:=e^{2\pi\mathrm{i}/n}\), and set
\(W:=\sum_{j=1}^n\zeta^jE_j\).
Each conjugate \(W^kXW^{-k}\) is positive, the \(k=0\) term equals
\(X\), and summing over the phases removes the off-diagonal blocks.
Hence
\begin{equation}
X
\preceq
\sum_{k=0}^{n-1}W^kXW^{-k}
=
n\sum_{i=1}^nE_iXE_i
\preceq
n\,\id_{\mathbb C^n}
\otimes
\Tr_{\mathbb C^n}X.
\label{eq:supp-partial-trace-bound}
\end{equation}
For the last inequality, write
\[
X_{ii}
:=
(\langle i|\otimes\id_{\mathcal K})
X
(|i\rangle\otimes\id_{\mathcal K}).
\]
Then
\(E_iXE_i=|i\rangle\langle i|\otimes X_{ii}\) and
\[
X_{ii}
\preceq
\sum_jX_{jj}
=
\Tr_{\mathbb C^n}X.
\]

\begin{proof}
\emph{Twirling.}
For operators on the query systems and ancilla, define
\begin{equation}
\begin{split}
\mathcal T_G^{(R)}(X)
&:=
\int_G
(g^{\otimes m}\otimes\id_R)
X
(g^{\otimes m}\otimes\id_R)^\dagger
\,\dd\mu_G(g),
\\
\mathcal T_{\mathrm U(d)}^{(R)}(X)
&:=
\int_{\mathrm U(d)}
(U^{\otimes m}\otimes\id_R)
X
(U^{\otimes m}\otimes\id_R)^\dagger
\,\dd\mu_{\mathrm{Haar}}(U).
\end{split}
\label{eq:supp-G-twirl}
\end{equation}
Set
\(\bar\sigma:=\mathcal T_G^{(R)}(\sigma)\) and
\(\bar\rho:=\Tr_R\bar\sigma\). Since the twirl acts trivially on the
ancilla,
\[
\bar\rho
=
\int_G
g^{\otimes m}\rho(g^{\otimes m})^\dagger
\,\dd\mu_G(g).
\]
Invariance of the two measures gives
\[
\mathcal T_G^{(R)}\circ\mathcal T_G^{(R)}
=
\mathcal T_G^{(R)},
\qquad
\mathcal T_{\mathrm U(d)}^{(R)}
\circ
\mathcal T_G^{(R)}
=
\mathcal T_{\mathrm U(d)}^{(R)}.
\]
The first identity follows from invariance of \(\mu_G\), while the second
follows from the right invariance of the ambient Haar measure under
\(U\mapsto Ug\). Hence the type-I and type-II errors are unchanged:
\begin{equation}
\alpha(\bar\sigma,M)=\alpha(\sigma,M),
\qquad
\beta(\bar\sigma,M)=\beta(\sigma,M).
\label{eq:supp-error-preservation}
\end{equation}
The twirl is used only as a proof device; no invariance assumption is
made about the original probe \(\sigma\).

\emph{Block decomposition and profile identification.}
Write
\[
\mathcal M_\eta
:=
\bigoplus_{\lambda\in\Lambda_\eta}
\mathbb C^{\,n_{\eta,\lambda}}\otimes S_\lambda,
\]
so that
\(\mathcal H_\eta\otimes\mathcal H_R
\cong
V_\eta\otimes\mathcal M_\eta\otimes\mathcal H_R\).
Schur's lemma gives
\[
\bar\sigma
=
\bigoplus_{\eta\in\mathcal I_m}
\frac{\id_{V_\eta}}{d_\eta}\otimes\Theta_\eta,
\]
where \(\Theta_\eta\succeq0\) acts on
\(\mathcal M_\eta\otimes\mathcal H_R\).
Since \(\Pi_\eta\) commutes with \(g^{\otimes m}\) (it is the
isotypic projector for the \(G\)-action), the block decomposition gives
\[
\Tr\Theta_\eta
=
\Tr(\Pi_\eta\bar\rho)
=
\Tr(\Pi_\eta\rho)
=
p_\eta.
\]
Normalization of \(\bar\sigma\) gives
\[
\sum_{\eta\in\mathcal I_m}p_\eta=1.
\]
For each occupied type \(\eta\), define
\[
\tau_\eta:=\frac{\Theta_\eta}{p_\eta},
\qquad
\bar\sigma_\eta
:=
\frac{\id_{V_\eta}}{d_\eta}\otimes\tau_\eta.
\]
Then \(\tau_\eta\) and \(\bar\sigma_\eta\) have unit trace and
\(\bar\sigma=\sum_{\eta:p_\eta>0}p_\eta\bar\sigma_\eta\).
For \(\lambda\in\Lambda_\eta\), let \(P_{\eta,\lambda}\) project
\(\mathcal M_\eta\) onto
\(\mathbb C^{\,n_{\eta,\lambda}}\otimes S_\lambda\), and set
\(\widetilde P_{\eta,\lambda}:=P_{\eta,\lambda}\otimes\id_R\).
The joint projector \(\Pi_{\eta,\lambda}\) commutes with
\(g^{\otimes m}\), so
\begin{equation}
\Tr(\Pi_{\eta,\lambda}\bar\rho)
=
\Tr(\Pi_{\eta,\lambda}\rho)
=
q_{\eta,\lambda}.
\label{eq:supp-projector-weight-preservation}
\end{equation}
For \(p_\eta>0\), the block decomposition therefore gives
\[
\Tr(\widetilde P_{\eta,\lambda}\tau_\eta)
=
\frac{q_{\eta,\lambda}}{p_\eta}
=
\pi_{\eta,\lambda}.
\]
This identification uses neither the zero-type-I-error condition nor
optimality.

\emph{Consequence of zero type-I error.}
The zero-type-I-error condition and \(G\)-invariance of \(\bar\sigma\) give
\[
1
=
\Tr(M_0\bar\sigma)
=
\sum_{\eta:p_\eta>0}
p_\eta\Tr(M_0\bar\sigma_\eta).
\]
Because the occupied weights are positive and sum to one and each acceptance
probability lies in \([0,1]\), this equality forces
\(\Tr(M_0\bar\sigma_\eta)=1\) for every occupied \(\eta\).
Let \(Q_\eta\) be the orthogonal projector onto
\(\operatorname{supp}\tau_\eta\subseteq
\mathcal M_\eta\otimes\mathcal H_R\).
Since \(\id-M_0\succeq0\), acceptance with probability one implies that
\(\id-M_0\) annihilates
\(V_\eta\otimes\operatorname{supp}\tau_\eta\). Under the fixed block
identification,
\begin{equation}
(\Pi_\eta\otimes\id_R)
M_0
(\Pi_\eta\otimes\id_R)
\succeq
\id_{V_\eta}\otimes Q_\eta
\qquad
(p_\eta>0).
\label{eq:supp-block-acceptance}
\end{equation}
This is the only point in the argument at which the condition
\(\alpha(\sigma,M)=0\) is used.

\emph{Typewise quantitative lower bound.}
Fix an occupied type \(\eta\). Define, for
\(\lambda\in\Lambda_\eta\),
\begin{equation}
\begin{aligned}
\widetilde\tau_{\eta,\lambda}
&:=
\widetilde P_{\eta,\lambda}
\tau_\eta
\widetilde P_{\eta,\lambda},
&
\widehat\tau_{\eta,\lambda}
&:=
\Tr_{\mathbb C^{\,n_{\eta,\lambda}}}
\widetilde\tau_{\eta,\lambda}.
\end{aligned}
\label{eq:supp-conditional-blocks}
\end{equation}
The ambient Haar twirl annihilates off-diagonal Schur blocks and
depolarizes the \(U_\lambda\) factor in each diagonal block:
\[
\mathcal T_{\mathrm U(d)}^{(R)}(\bar\sigma_\eta)
=
\sum_{\lambda\in\Lambda_\eta}
\frac{\id_{U_\lambda}}{d_\lambda}
\otimes
\widehat\tau_{\eta,\lambda}.
\]
Because
\(\beta(\bar\sigma_\eta,M)
=\Tr[M_0\mathcal T_{\mathrm U(d)}^{(R)}(\bar\sigma_\eta)]\)
and \(\mathcal T_{\mathrm U(d)}^{(R)}(\bar\sigma_\eta)\) commutes with
\(\Pi_\eta\otimes\id_R\), discarding the nonnegative contribution outside
the \(\eta\)-block and applying
Eq.~\eqref{eq:supp-block-acceptance} gives
\begin{equation}
\begin{aligned}
\beta_\eta
&:=
\beta(\bar\sigma_\eta,M)
\\
&\ge
\sum_{\lambda\in\Lambda_\eta}
\frac{d_\eta}{d_\lambda}
\Tr\!\left[
Q_\eta
\left(
\id_{\mathbb C^{\,n_{\eta,\lambda}}}
\otimes
\widehat\tau_{\eta,\lambda}
\right)
\right].
\end{aligned}
\label{eq:supp-typewise-prebound}
\end{equation}

Since \(Q_\eta\tau_\eta^{1/2}=\tau_\eta^{1/2}\), the Cauchy--Schwarz
inequality for the Hilbert--Schmidt inner product gives
\[
\begin{aligned}
\pi_{\eta,\lambda}^{\,2}
&=
\left|
\Tr\!\left[
(Q_\eta\tau_\eta^{1/2})^\dagger
Q_\eta\widetilde P_{\eta,\lambda}
\tau_\eta^{1/2}
\right]
\right|^2
\\
&\le
\Tr(Q_\eta\widetilde\tau_{\eta,\lambda}).
\end{aligned}
\]
Applying the positive-operator partial-trace bound in
Eq.~\eqref{eq:supp-partial-trace-bound} to
\(\widetilde\tau_{\eta,\lambda}\), with
\(\mathcal K=S_\lambda\otimes\mathcal H_R\), yields
\[
\pi_{\eta,\lambda}^{\,2}
\le
n_{\eta,\lambda}
\Tr\!\left[
Q_\eta
\left(
\id_{\mathbb C^{\,n_{\eta,\lambda}}}
\otimes
\widehat\tau_{\eta,\lambda}
\right)
\right].
\]
Substitution into Eq.~\eqref{eq:supp-typewise-prebound} gives
\begin{equation}
\beta_\eta
\ge
d_\eta
\sum_{\lambda\in\Lambda_\eta}
\frac{\pi_{\eta,\lambda}^{\,2}}
{d_\lambda n_{\eta,\lambda}}.
\label{eq:supp-typewise-bound}
\end{equation}
Since
\(d_\lambda n_{\eta,\lambda}
=d_\eta h_m(\eta)\pi^\star_{\eta,\lambda}\) and
\[
\sum_{\lambda\in\Lambda_\eta}
\frac{\pi_{\eta,\lambda}^{\,2}}
{\pi^\star_{\eta,\lambda}}
=
1+
\chi^2\!\left(
\boldsymbol{\pi}_\eta
\middle\Vert
\boldsymbol{\pi}_\eta^\star
\right),
\]
Eq.~\eqref{eq:supp-typewise-bound} yields
\begin{equation}
\beta_\eta
\ge
\frac{1}{h_m(\eta)}
\left[
1+
\chi^2\!\left(
\boldsymbol{\pi}_\eta
\middle\Vert
\boldsymbol{\pi}_\eta^\star
\right)
\right].
\label{eq:supp-typewise-quantitative}
\end{equation}

\emph{Global quantitative bound.}
Linearity of the type-II error,
Eq.~\eqref{eq:supp-error-preservation},
\(\sum_{\eta:p_\eta>0}p_\eta=1\), and
\(\beta_{d,m}^{\star}(0)=h_{\max}^{-1}\) give
\begin{align*}
\beta(\sigma,M)-\beta_{d,m}^{\star}(0)
&=
\sum_{\eta:p_\eta>0}
p_\eta
\left(
\beta_\eta-\frac{1}{h_{\max}}
\right)
\\
&\ge
\sum_{\eta:p_\eta>0}
p_\eta
\left(
\frac{1}{h_m(\eta)}-\frac{1}{h_{\max}}
\right)
+
\sum_{\eta:p_\eta>0}
\frac{p_\eta}{h_m(\eta)}
\chi^2\!\left(
\boldsymbol{\pi}_\eta
\middle\Vert
\boldsymbol{\pi}_\eta^\star
\right),
\end{align*}
which proves Eq.~\eqref{eq:supp-global-bound}. Both sums on the right
are nonnegative. Dropping them recovers the converse bound
\(\beta(\sigma,M)\ge h_{\max}^{-1}\).

\emph{Exact locking as the zero-gap case.}
Suppose
\(\beta(\sigma,M)=\beta_{d,m}^{\star}(0)\). The left-hand side of
Eq.~\eqref{eq:supp-global-bound} is zero, so both nonnegative sums
vanish. For every suboptimal type,
\(h_m(\eta)^{-1}-h_{\max}^{-1}>0\), and hence \(p_\eta=0\).
Since \(\rho\succeq0\),
\[
0
=
\Tr(\Pi_\eta\rho)
=
\|\Pi_\eta\rho^{1/2}\|_2^2,
\]
so \(\operatorname{supp}\rho\) is orthogonal to every suboptimal
\(\mathcal H_\eta\). This proves
Eq.~\eqref{eq:supp-support-locking}.

For every occupied maximizing type, the second sum can vanish only if
\[
\chi^2\!\left(
\boldsymbol{\pi}_\eta
\middle\Vert
\boldsymbol{\pi}_\eta^\star
\right)
=0,
\]
and hence
\(\boldsymbol{\pi}_\eta=\boldsymbol{\pi}_\eta^\star\).
Because \(h_m(\eta)=h_{\max}\), the componentwise equality is exactly
Eq.~\eqref{eq:supp-weight-locking}.
\end{proof}

\emph{Near-optimal consequences.}
For a zero-type-I-error protocol, define
\[
\varepsilon
:=
\beta(\sigma,M)-\beta_{d,m}^{\star}(0).
\]
If \(\mathcal I_m\neq\mathcal I_m^\star\), let
\[
h_{\mathrm{sub}}
:=
\max_{\eta\in\mathcal I_m\setminus\mathcal I_m^\star}h_m(\eta).
\]
Since
\(h_{\mathrm{sub}}^{-1}-h_{\max}^{-1}>0\), the first penalty in
Eq.~\eqref{eq:supp-global-bound} gives
\begin{equation}
\sum_{\eta\in\mathcal I_m\setminus\mathcal I_m^\star}p_\eta
\leq
\frac{\varepsilon}
{h_{\mathrm{sub}}^{-1}-h_{\max}^{-1}}.
\label{eq:supp-support-leakage}
\end{equation}
The second penalty and \(h_m(\eta)\le h_{\max}\) give
\begin{equation}
\sum_{\eta:p_\eta>0}
p_\eta
\chi^2\!\left(
\boldsymbol{\pi}_\eta
\middle\Vert
\boldsymbol{\pi}_\eta^\star
\right)
\leq
h_{\max}\varepsilon.
\label{eq:supp-profile-stability}
\end{equation}
If \(\mathcal I_m=\mathcal I_m^\star\), no suboptimal type occurs,
so the support leakage vanishes identically.
For probability vectors,
\(\|\mathbf a-\mathbf b\|_1\le
\sqrt{\chi^2(\mathbf a\Vert\mathbf b)}\). Therefore,
\[
\sum_{\eta:p_\eta>0}
p_\eta
\left\|
\boldsymbol{\pi}_\eta-\boldsymbol{\pi}_\eta^\star
\right\|_1
\leq
\sqrt{h_{\max}\varepsilon}.
\]
Here \(\|\cdot\|_1\) denotes the \(\ell_1\) norm; the averaged total-variation distance is therefore at most
\(\frac12\sqrt{h_{\max}\varepsilon}\). Thus support leakage is
\(O(\varepsilon)\), whereas the averaged profile deviation is
\(O(\sqrt{\varepsilon})\) in total variation.

\begin{corollary}[Trace-distance stability of the locked marginal set]
\label{cor:supp-trace-stability}
Fix \(G,d,m\), and let \((\sigma,M)\) be an \(m\)-query parallel
protocol satisfying \(\alpha(\sigma,M)=0\).
Set \(\rho=\Tr_R\sigma\) and
\(\varepsilon=\beta(\sigma,M)-\beta^\star_{d,m}(0)\).
For the locked marginal set \(\mathcal E_{\mathrm{lock}}\)
defined by Eqs.~\eqref{eq:supp-support-locking}
and~\eqref{eq:supp-weight-locking},
\begin{equation}
\label{eq:supp-trace-stability}
\min_{\omega\in\mathcal E_{\mathrm{lock}}}
\frac12\|\rho-\omega\|_1
\le \sqrt{C_{d,m}\varepsilon},
\end{equation}
where
\begin{equation}
\label{eq:supp-trace-constant}
C_{d,m}:=
\begin{cases}
\max\!\left\{
h_{\max},
\bigl(h_{\mathrm{sub}}^{-1}-h_{\max}^{-1}\bigr)^{-1}
\right\},
&\mathcal I_m\ne\mathcal I_m^\star,\\[1mm]
h_{\max},
&\mathcal I_m=\mathcal I_m^\star.
\end{cases}
\end{equation}
Here \(h_{\mathrm{sub}}\) is defined as above when suboptimal
types occur, and the dependence of \(C_{d,m}\) on the fixed
subgroup \(G\) is suppressed.
\end{corollary}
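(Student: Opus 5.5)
The plan is to build an explicit element \(\omega\in\mathcal E_{\mathrm{lock}}\) from \(\rho\) by a block-diagonal "square-root reweighting", and to control \(\frac12\|\rho-\omega\|_1\) through fidelity. The two penalty terms of Theorem~\ref{thm:supp-support-weight-locking} then enter additively.

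\emph{Construction.} Set \(s:=\sum_{\eta\notin\mathcal I_m^\star}p_\eta\) and assume first \(s<1\). For \(\eta\in\mathcal I_m^\star\) with \(p_\eta>0\) and \(\lambda\in\Lambda_\eta\), define \(c_{\eta,\lambda}:=\sqrt{\pi^\star_{\eta,\lambda}/\pi_{\eta,\lambda}}\) when \(\pi_{\eta,\lambda}>0\). Where \(\pi_{\eta,\lambda}=0\), the block \(\Pi_{\eta,\lambda}\rho\) vanishes, so any value of \(c_{\eta,\lambda}\) can be chosen and is irrelevant. Set \(c_{\eta,\lambda}:=0\) on all other sectors. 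Let
\[
A:=\sum_{\eta,\lambda}c_{\eta,\lambda}\Pi_{\eta,\lambda},
\qquad
\omega:=\frac{A\rho A}{\Tr(A\rho A)} .
\]
Because the joint projectors are mutually orthogonal and commute with \(A\),
\[
\Tr(\Pi_{\eta,\lambda}A\rho A)=c_{\eta,\lambda}^2q_{\eta,\lambda}=p_\eta\pi^\star_{\eta,\lambda}.
\]
Hence \(\Tr(A\rho A)=1-s\). Moreover \(\omega\) is supported on the maximizing isotypic spaces, and each occupied type of \(\omega\) has profile exactly \(\boldsymbol\pi^\star_\eta\). So \(\omega\in\mathcal E_{\mathrm{lock}}\), and this choice sidesteps the positivity problems that a direct rescaling of the off-diagonal Schur blocks would create.

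\emph{Fidelity estimate.} Let \(|\psi\rangle\) purify \(\rho\). Then \((A\otimes\id)|\psi\rangle/\sqrt{1-s}\) purifies \(\omega\), and Uhlmann's theorem gives
\[
F(\rho,\omega)\ge\frac{\langle\psi|A\otimes\id|\psi\rangle}{\sqrt{1-s}}=\frac{x}{\sqrt{1-s}},
\qquad
x:=\sum_{\eta\in\mathcal I_m^\star}p_\eta\sum_\lambda\sqrt{\pi_{\eta,\lambda}\pi^\star_{\eta,\lambda}} .
\]
Next we bound the Bhattacharyya coefficient. Since \((\sqrt a-\sqrt b)^2=(a-b)^2/(\sqrt a+\sqrt b)^2\le(a-b)^2/b\), we get
\[
1-\sum_\lambda\sqrt{\pi_{\eta,\lambda}\pi^\star_{\eta,\lambda}}\le\tfrac12\chi^2(\boldsymbol\pi_\eta\Vert\boldsymbol\pi^\star_\eta).
\]
Writing \(Y:=\sum_{\eta\in\mathcal I_m^\star}p_\eta\chi^2(\boldsymbol\pi_\eta\Vert\boldsymbol\pi^\star_\eta)\), this yields \(x\ge(1-s)-Y/2\). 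Using the Fuchs--van de Graaf inequality \(\frac12\|\rho-\omega\|_1\le\sqrt{1-F^2}\),
\[
1-F^2\le1-(1-s)\Bigl(1-\tfrac{Y}{2(1-s)}\Bigr)^2\le1-(1-s)\Bigl(1-\tfrac{Y}{1-s}\Bigr)=s+Y .
\]

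\emph{Invoking the converse.} By Eq.~\eqref{eq:supp-global-bound}, dropping the nonnegative \(\chi^2\) terms on suboptimal types and using \(h_m(\eta)=h_{\max}\) on \(\mathcal I_m^\star\),
\[
s\bigl(h_{\mathrm{sub}}^{-1}-h_{\max}^{-1}\bigr)+Y/h_{\max}\le\varepsilon .
\]
The first term is absent when \(\mathcal I_m=\mathcal I_m^\star\), in which case \(s=0\). Hence \(s+Y\le C_{d,m}\varepsilon\) with \(C_{d,m}\) as in Eq.~\eqref{eq:supp-trace-constant}, which gives Eq.~\eqref{eq:supp-trace-stability}.

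\emph{Degenerate case.} If \(s=1\), then \(\varepsilon\ge h_{\mathrm{sub}}^{-1}-h_{\max}^{-1}\), so \(C_{d,m}\varepsilon\ge1\). The bound is then trivial, since the trace distance to any \(\omega\in\mathcal E_{\mathrm{lock}}\) is at most \(1\). The set \(\mathcal E_{\mathrm{lock}}\) is nonempty, because a normalized \(\Pi_{\eta,\lambda}\)-mixture with weights \(\pi^\star\) on one maximizing type lies in it. It is also compact, so the minimum is attained.

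The main obstacle I expect is the construction step: producing a locked state that stays close to \(\rho\) despite the coherences between distinct Schur sectors. The sandwich \(A\rho A\) with \(A\) commuting with all \(\Pi_{\eta,\lambda}\) is what resolves it, and the remaining delicacy is keeping the constant at \(C_{d,m}\) rather than a sum of the two penalty constants, which the \(s+Y\) bound above achieves.
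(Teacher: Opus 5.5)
Your argument has a genuine gap in the construction step, at exactly the sectors you set aside as irrelevant. Suppose an occupied maximizing type \(\eta\) (\(p_\eta>0\)) has an active label \(\lambda\in\Lambda_\eta\) with \(q_{\eta,\lambda}=0\). Then \(\Pi_{\eta,\lambda}\rho=0\), so \(\Tr(\Pi_{\eta,\lambda}A\rho A)=c_{\eta,\lambda}^2q_{\eta,\lambda}=0\) for every choice of \(c_{\eta,\lambda}\). But \(\boldsymbol\pi^\star_\eta\) has full support on \(\Lambda_\eta\), so weight locking requires strictly positive weight in that sector.

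Consequently \(\Tr(A\rho A)=\sum_{\eta}p_\eta\sum_{\lambda:\pi_{\eta,\lambda}>0}\pi^\star_{\eta,\lambda}<1-s\). The normalized \(\omega\) then carries \(\boldsymbol\pi^\star_\eta\) conditioned on \(\operatorname{supp}\boldsymbol\pi_\eta\) rather than \(\boldsymbol\pi^\star_\eta\), so \(\omega\notin\mathcal E_{\mathrm{lock}}\). A sandwich \(A\rho A\) can never create weight in a sector where \(\rho\) has none.

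This case does occur. Take the torus with \(d=m=2\), probe \(|s\rangle=(|01\rangle+|10\rangle)/\sqrt2\), and its return projection. This protocol has \(\alpha=0\), and its marginal has profile \((1,0)\) on \(\Lambda_{(1,1)}=\{(2),(1,1)\}\), whereas \(\boldsymbol\pi^\star_{(1,1)}=(3/4,1/4)\). Your \(\omega\) is just \(\rho\) itself.

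The fix is to add the missing components on the purification side, which is what the paper does. For each such sector, let \(|v_{\eta,\lambda}\rangle\) be any unit vector in \(\operatorname{Ran}(\Pi_{\eta,\lambda}\otimes\id)\), and set \(|\Phi\rangle=\bigl[(A\otimes\id)|\psi\rangle+\sum\sqrt{p_\eta\pi^\star_{\eta,\lambda}}\,|v_{\eta,\lambda}\rangle\bigr]/\sqrt{1-s}\), with the sum running over those empty sectors. Each added vector is orthogonal to \(|\psi\rangle\) and to \((A\otimes\id)|\psi\rangle\), because neither has a component in that sector. Hence \(|\Phi\rangle\) is normalized, its marginal has weight \(p_\eta\pi^\star_{\eta,\lambda}/(1-s)\) in every active sector, and the overlap \(\langle\psi|\Phi\rangle=x/\sqrt{1-s}\) is unchanged.

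Equivalently, you can add \(p_\eta\pi^\star_{\eta,\lambda}\xi_{\eta,\lambda}\) to \(A\rho A\), for states \(\xi_{\eta,\lambda}\) supported in those sectors. Then invoke monotonicity of \(\operatorname{Tr}\sqrt{\sqrt\rho\,(\cdot)\sqrt\rho}\) in its argument. With this patch your argument coincides with the paper's.

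One smaller point in the fidelity step: squaring \(x\ge(1-s)-Y/2\) needs the right-hand side to be nonnegative. If it is negative, then \(s+Y>1\) and the bound is trivial. Alternatively, use \(x^2/(1-s)\ge 2x-(1-s)\), which holds for all real \(x\), as the paper does.
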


\begin{proof}
For density operators, the locking conditions in
Eqs.~\eqref{eq:supp-support-locking}
and~\eqref{eq:supp-weight-locking} are equivalent to
\begin{equation}
\label{eq:supp-locked-linear}
\begin{gathered}
\Tr(\Pi_\eta\omega)=0
\qquad
(\eta\in\mathcal I_m\setminus\mathcal I_m^\star),
\\
\Tr(\Pi_{\eta,\lambda}\omega)
=\pi^\star_{\eta,\lambda}\Tr(\Pi_\eta\omega)
\qquad
(\eta\in\mathcal I_m^\star,\ \lambda\in\Lambda_\eta).
\end{gathered}
\end{equation}
Indeed, positivity turns zero type weight into the support
condition, while \(\Pi_{\eta,\lambda}\preceq\Pi_\eta\) makes the
second equality automatic when \(\Tr(\Pi_\eta\omega)=0\).
These constraints are affine in \(\omega\), so
\(\mathcal E_{\mathrm{lock}}\) is a closed convex subset of the compact
state space.
It is nonempty: choose a maximizing type and mix states in its
active joint sectors with weights \(\pi^\star_{\eta,\lambda}\).

For an arbitrary system state \(\rho\), define
\begin{equation}
\label{eq:supp-locking-deviations}
\delta_{\mathrm{leak}}
:=\sum_{\eta\in\mathcal I_m\setminus\mathcal I_m^\star}p_\eta,
\qquad
\delta_{\mathrm{prof}}
:=\sum_{\substack{\eta\in\mathcal I_m^\star\\p_\eta>0}}
p_\eta
\chi^2(\boldsymbol{\pi}_\eta\Vert\boldsymbol{\pi}_\eta^\star).
\end{equation}
These quantities measure the weight on suboptimal types and the
weighted Pearson deviation within maximizing types, respectively.
We first establish the geometric estimate
\begin{equation}
\label{eq:supp-geometric-trace-bound}
\min_{\omega\in\mathcal E_{\mathrm{lock}}}
\frac12\|\rho-\omega\|_1
\le
\sqrt{\delta_{\mathrm{leak}}+\delta_{\mathrm{prof}}}.
\end{equation}
If \(\delta_{\mathrm{leak}}=1\), then
\(\delta_{\mathrm{prof}}=0\), and the distance to the nonempty
locked set is at most one, proving
Eq.~\eqref{eq:supp-geometric-trace-bound} in this case.
Suppose \(\delta_{\mathrm{leak}}<1\), and choose a purification
\(|\Psi\rangle\in
\mathcal H^{\otimes m}\otimes\mathcal H_{R'}\)
of \(\rho\). For \(\eta\in\mathcal I_m\) and
\(\lambda\in\Lambda_\eta\), set
\[
|v_{\eta,\lambda}\rangle
=\frac{(\Pi_{\eta,\lambda}\otimes\id_{R'})|\Psi\rangle}
{\sqrt{q_{\eta,\lambda}}}
\]
when \(q_{\eta,\lambda}>0\). Otherwise, choose any unit vector
in \(\operatorname{Ran}(\Pi_{\eta,\lambda}\otimes\id_{R'})\),
which is nonzero for an active joint sector.
These vectors are orthonormal, and
\[
|\Psi\rangle
=\sum_{\eta\in\mathcal I_m}\sum_{\lambda\in\Lambda_\eta}
\sqrt{q_{\eta,\lambda}}\,|v_{\eta,\lambda}\rangle.
\]
Define
\begin{equation}
\label{eq:supp-locked-comparison}
|\Phi\rangle
=\sum_{\eta\in\mathcal I_m^\star}
\sum_{\lambda\in\Lambda_\eta}
\sqrt{
\frac{p_\eta\pi^\star_{\eta,\lambda}}
{1-\delta_{\mathrm{leak}}}
}\,|v_{\eta,\lambda}\rangle,
\qquad
\omega=\Tr_{R'}|\Phi\rangle\langle\Phi|.
\end{equation}
Since
\(\sum_{\eta\in\mathcal I_m^\star}p_\eta
=1-\delta_{\mathrm{leak}}\),
the vector \(|\Phi\rangle\) in
Eq.~\eqref{eq:supp-locked-comparison} is normalized.
Its marginal is supported on maximizing types and satisfies
\[
\Tr(\Pi_{\eta,\lambda}\omega)
=\frac{p_\eta\pi^\star_{\eta,\lambda}}
{1-\delta_{\mathrm{leak}}},
\qquad
\Tr(\Pi_\eta\omega)
=\frac{p_\eta}{1-\delta_{\mathrm{leak}}}
\quad
(\eta\in\mathcal I_m^\star,\ \lambda\in\Lambda_\eta).
\]
The resulting weights satisfy Eq.~\eqref{eq:supp-locked-linear},
so \(\omega\in\mathcal E_{\mathrm{lock}}\), also when a
maximizing type has \(p_\eta=0\).

For occupied maximizing types, let
\[
\varphi_\eta
:=\sum_{\lambda\in\Lambda_\eta}
\sqrt{\pi_{\eta,\lambda}\pi^\star_{\eta,\lambda}},
\qquad
a:=\sum_{\substack{\eta\in\mathcal I_m^\star\\p_\eta>0}}
p_\eta\varphi_\eta.
\]
The construction gives
\(\langle\Psi|\Phi\rangle
=a/\sqrt{1-\delta_{\mathrm{leak}}}\).
Normalization of the profiles and
\(\pi^\star_{\eta,\lambda}>0\) imply
\[
\begin{aligned}
2(1-\varphi_\eta)
&=\sum_{\lambda\in\Lambda_\eta}
\left(
\sqrt{\pi_{\eta,\lambda}}-\sqrt{\pi^\star_{\eta,\lambda}}
\right)^2\\
&=\sum_{\lambda\in\Lambda_\eta}
\frac{(\pi_{\eta,\lambda}-\pi^\star_{\eta,\lambda})^2}
{\left(
\sqrt{\pi_{\eta,\lambda}}+\sqrt{\pi^\star_{\eta,\lambda}}
\right)^2}
\le
\chi^2(\boldsymbol{\pi}_\eta\Vert\boldsymbol{\pi}_\eta^\star).
\end{aligned}
\]
Using
\(\bigl(a-(1-\delta_{\mathrm{leak}})\bigr)^2\ge0\), we obtain
\[
\begin{aligned}
1-|\langle\Psi|\Phi\rangle|^2
&=1-\frac{a^2}{1-\delta_{\mathrm{leak}}}\\
&\le
\delta_{\mathrm{leak}}
+2\bigl((1-\delta_{\mathrm{leak}})-a\bigr)\\
&=
\delta_{\mathrm{leak}}
+2\sum_{\substack{\eta\in\mathcal I_m^\star\\p_\eta>0}}
p_\eta(1-\varphi_\eta)\\
&\le
\delta_{\mathrm{leak}}+\delta_{\mathrm{prof}}.
\end{aligned}
\]
The pure-state trace-distance formula and contractivity under
partial trace therefore give
\[
\frac12\|\rho-\omega\|_1
\le
\sqrt{1-|\langle\Psi|\Phi\rangle|^2}
\le
\sqrt{\delta_{\mathrm{leak}}+\delta_{\mathrm{prof}}}.
\]
This proves Eq.~\eqref{eq:supp-geometric-trace-bound}.

Now let \(\rho\) be the marginal of the stated protocol.
The deviations in Eq.~\eqref{eq:supp-locking-deviations}
obey a common error budget.
If \(\mathcal I_m\ne\mathcal I_m^\star\), retaining the
suboptimal-type contribution and the maximizing-type profile
contribution in Eq.~\eqref{eq:supp-global-bound} gives
\[
\varepsilon
\ge
\bigl(h_{\mathrm{sub}}^{-1}-h_{\max}^{-1}\bigr)
\delta_{\mathrm{leak}}
+\frac{\delta_{\mathrm{prof}}}{h_{\max}}.
\]
If \(\mathcal I_m=\mathcal I_m^\star\), then
\(\delta_{\mathrm{leak}}=0\), and Eq.~\eqref{eq:supp-global-bound} gives
\[
\varepsilon
\ge\frac{\delta_{\mathrm{prof}}}{h_{\max}}.
\]
Thus, by Eq.~\eqref{eq:supp-trace-constant}, in either case,
\begin{equation}
\label{eq:supp-deviation-budget}
\delta_{\mathrm{leak}}+\delta_{\mathrm{prof}}
\le C_{d,m}\varepsilon.
\end{equation}
Combining Eqs.~\eqref{eq:supp-geometric-trace-bound}
and~\eqref{eq:supp-deviation-budget}
proves Eq.~\eqref{eq:supp-trace-stability}.
\end{proof}

A nonempty closed set of system states disjoint from
\(\mathcal E_{\mathrm{lock}}\) has a strictly positive minimum
trace distance from it. Every zero-type-I-error protocol whose
system marginal lies in that set therefore has excess type-II
error at least the square of this distance divided by \(C_{d,m}\).

Corollary~\ref{cor:supp-trace-stability} bounds the distance to
\(\mathcal E_{\mathrm{lock}}\), but does not establish proximity to
the set of optimal system marginals.
The locking conditions do not prescribe the normalized states
within occupied joint sectors of a locked marginal or require
coherences between those sectors to vanish.
Near-extremal weights alone need not imply near-optimality of
a given protocol, and neither the full probe nor the POVM is
characterized.
The stability estimates above assume \(\alpha=0\).

The following proposition extends the marginal converse to nonzero type-I error.
For a system state \(\rho\), define
\begin{equation}
L(\rho):=\sum_{\eta:p_\eta>0}\frac{p_\eta}{h_m(\eta)}
\bigl[1+\chi^2(\boldsymbol\pi_\eta\Vert\boldsymbol\pi_\eta^\star)\bigr].
\label{eq:supp-finite-alpha-L}
\end{equation}
For the fixed \(G,d,m\), also set
\begin{equation}
K_{G,d,m}:=\max_{\substack{\eta\in\mathcal I_m\\\lambda\in\Lambda_\eta}}
\frac{d_\eta}{d_\lambda n_{\eta,\lambda}}\le1.
\label{eq:supp-finite-alpha-K}
\end{equation}
\begin{proposition}[Finite-error marginal converse]
\label{prop:supp-finite-alpha}
Let \((\sigma,M)\) be an \(m\)-query parallel protocol with actual average
errors \(\alpha,\beta\), and set \(\rho=\Tr_R\sigma\). Then
\begin{equation}
\beta\ge\bigl[\sqrt{L(\rho)}-\sqrt{K_{G,d,m}\alpha}\bigr]_+^2,
\label{eq:supp-finite-alpha-bound}
\end{equation}
where \(x_+:=\max\{x,0\}\). If the actual error satisfies \(\alpha\le a_0\),
the same bound holds with \(\alpha\) replaced by \(a_0\).
\end{proposition}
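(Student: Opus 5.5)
The plan is to rerun the proof of Theorem~\ref{thm:supp-support-weight-locking} with the exact-acceptance step replaced by an approximate one, and then to aggregate over types with a convexity argument. First, twirl exactly as before: replace $\sigma$ by $\bar\sigma=\mathcal T_G^{(R)}(\sigma)$. By Eq.~\eqref{eq:supp-error-preservation} this leaves $\alpha$, $\beta$ and the weights $p_\eta,q_{\eta,\lambda}$ unchanged. It gives $\bar\sigma=\sum_{\eta:p_\eta>0}p_\eta\bar\sigma_\eta$ with $\bar\sigma_\eta=\id_{V_\eta}/d_\eta\otimes\tau_\eta$ and $\Tr(\widetilde P_{\eta,\lambda}\tau_\eta)=\pi_{\eta,\lambda}$. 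Next, replace $M_0$ by its compression to the $\eta$-block, averaged over $V_\eta$. This yields an operator $0\preceq N_\eta\preceq\id$ on $\mathcal M_\eta\otimes\mathcal H_R$ with $1-\alpha_\eta:=\Tr(M_0\bar\sigma_\eta)=\Tr(N_\eta\tau_\eta)$ and $\sum_\eta p_\eta\alpha_\eta=\alpha$. As in Eq.~\eqref{eq:supp-typewise-prebound}, the type-II contribution is $\beta_\eta=\sum_\lambda (d_\eta/d_\lambda)\Tr[N_\eta(\id\otimes\widehat\tau_{\eta,\lambda})]$. The partial-trace bound Eq.~\eqref{eq:supp-partial-trace-bound} then gives
\[
\beta_\eta\ \ge\ \sum_{\lambda\in\Lambda_\eta} w_{\eta,\lambda}\,\Tr(N_\eta\widetilde\tau_{\eta,\lambda}),\qquad w_{\eta,\lambda}:=\frac{d_\eta}{d_\lambda n_{\eta,\lambda}}\le K_{G,d,m}.
\]
This step does not use $\alpha=0$; in the exact case $N_\eta\succeq Q_\eta$ and we recover Eq.~\eqref{eq:supp-typewise-prebound}.

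The typewise goal is $\sqrt{\beta_\eta}\ge\sqrt{L_\eta}-\sqrt{K_{G,d,m}\alpha_\eta}$, where $L_\eta:=\sum_\lambda w_{\eta,\lambda}\pi_{\eta,\lambda}^2=h_m(\eta)^{-1}[1+\chi^2(\boldsymbol\pi_\eta\Vert\boldsymbol\pi_\eta^\star)]$. To get it, I would write things in Hilbert--Schmidt vectors. Set $A:=N_\eta^{1/2}$, $x_\lambda:=\widetilde P_{\eta,\lambda}\tau_\eta^{1/2}$ and $y:=\tau_\eta^{1/2}$. Cauchy--Schwarz as in the theorem gives $\pi_{\eta,\lambda}=\langle y,x_\lambda\rangle=\langle Ay,Ax_\lambda\rangle+\langle y,(\id-N_\eta)x_\lambda\rangle$. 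The first term is at most $\|Ax_\lambda\|$, since $\|Ay\|\le1$. The second is the error term, with size controlled by $\|(\id-N_\eta)^{1/2}y\|\,\|(\id-N_\eta)^{1/2}x_\lambda\|$, where $\|(\id-N_\eta)^{1/2}y\|^2=\alpha_\eta$. This gives $\pi_{\eta,\lambda}\le\|Ax_\lambda\|+\sqrt{\alpha_\eta}\,e_\lambda$ with $e_\lambda:=\|(\id-N_\eta)^{1/2}x_\lambda\|$. The constraint is $\sum_\lambda e_\lambda^2\le\sum_\lambda\|x_\lambda\|^2=1$, but there is a factor $\sqrt{\pi_{\eta,\lambda}}$ to track. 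Multiplying by $\sqrt{w_{\eta,\lambda}}\,\pi_{\eta,\lambda}$-type weights and applying the weighted $\ell^2$ triangle (Minkowski) inequality, $\bigl(\sum_\lambda w\,\|Ax_\lambda\|^2\bigr)^{1/2}\ge\bigl(\sum_\lambda w\,\pi_\lambda^2\bigr)^{1/2}-\bigl(\sum_\lambda w\,\alpha_\eta e_\lambda^2\bigr)^{1/2}$, together with $w\le K_{G,d,m}$ and $\sum_\lambda e_\lambda^2\le1$, should give the typewise bound. Getting the right normalization here is the main obstacle. The issue is that $\alpha_\eta$ is defined through the unpinched $\tau_\eta$, whereas $\beta_\eta$ only sees the pinched blocks $\widetilde\tau_{\eta,\lambda}$, so the cross terms between different Schur sectors must be absorbed by Cauchy--Schwarz rather than dropped. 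My first attempt would be to phrase everything through the vector $\sum_\lambda\sqrt{w_{\eta,\lambda}}\,\pi_{\eta,\lambda}\,\cdot$ (normalized by $\sqrt{L_\eta}$). I would test against it so that a single Cauchy--Schwarz produces $\sqrt{L_\eta}$ on one side and $\sqrt{\beta_\eta}+\sqrt{K\alpha_\eta}$ on the other.

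For the global step, I would use that $f(x,y)=[\sqrt x-\sqrt y]_+^2$ is jointly convex and positively homogeneous on $\mathbb R_{\ge0}^2$. This can be checked as the perspective-type function $x\,g(y/x)$ with $g(t)=(1-\sqrt t)_+^2$ convex. Then
\[
\beta=\sum_\eta p_\eta\beta_\eta\ \ge\ \sum_\eta p_\eta f(L_\eta,K_{G,d,m}\alpha_\eta)\ \ge\ f\Bigl(\textstyle\sum_\eta p_\eta L_\eta,\ K_{G,d,m}\alpha\Bigr)=f\bigl(L(\rho),K_{G,d,m}\alpha\bigr),
\]
which is Eq.~\eqref{eq:supp-finite-alpha-bound}. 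The inequality $K_{G,d,m}\le1$ follows from $d_\lambda n_{\eta,\lambda}\ge d_\eta$, since $V_\eta$ occurs in $U_\lambda$. The final claim follows because the right-hand side is nonincreasing in $\alpha$, so any $a_0\ge\alpha$ may be substituted.
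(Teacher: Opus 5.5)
Your proof is correct. Its core estimate is the same as the paper's, and it differs only in how the per-sector bounds are aggregated.

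\textbf{Shared core estimate.} Your splitting $\pi_{\eta,\lambda}=\langle Ay,Ax_\lambda\rangle+\langle y,(\id-N_\eta)x_\lambda\rangle$, together with $e_\lambda^2\le\Tr(\widetilde P_{\eta,\lambda}\tau_\eta)=\pi_{\eta,\lambda}$, is exactly the paper's inequality $\pi_{\eta,\lambda}\le\sqrt{x_{\eta,\lambda}}+\sqrt{a_\eta\pi_{\eta,\lambda}}$. The paper obtains it by writing $\id=E_\eta+(\id-E_\eta)$ and applying Hilbert--Schmidt Cauchy--Schwarz to each piece.

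\textbf{Different aggregation.} The paper multiplies by $\sqrt{p_\eta c_{\eta,\lambda}}$ and applies a single Minkowski inequality over the combined index set $(\eta,\lambda)$. It then bounds $\sum p_\eta c_{\eta,\lambda}x_{\eta,\lambda}\le\beta$ and $\sum p_\eta c_{\eta,\lambda}a_\eta\pi_{\eta,\lambda}\le K_{G,d,m}\alpha$. You instead apply Minkowski inside each type to get $\sqrt{\beta_\eta}\ge\sqrt{L_\eta}-\sqrt{K_{G,d,m}\alpha_\eta}$. You then average over types with Jensen, using joint convexity of $[\sqrt x-\sqrt y]_+^2$. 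Your route gives a typewise bound as a by-product, but it needs an extra convexity lemma, which is essentially Minkowski repackaged. The paper's one-step version is shorter.

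\textbf{The flagged obstacle is not real.} The Minkowski inequality you display already closes the typewise bound once combined with $w_{\eta,\lambda}\le K_{G,d,m}$ and $\sum_\lambda e_\lambda^2\le1$. The mismatch between the unpinched $\alpha_\eta$ and the pinched blocks seen by $\beta_\eta$ is absorbed by the two Cauchy--Schwarz steps in your per-sector splitting. Only $\|Ay\|^2=\Tr(N_\eta\tau_\eta)\le1$ and $\|(\id-N_\eta)^{1/2}y\|^2=\alpha_\eta$ enter, so no normalized test vector is needed.

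\textbf{Two small corrections.}
\begin{itemize}
\item Your expression for $\beta_\eta$ should be an inequality $\ge$, not an equality. The Haar-twirled block $\id_{U_\lambda}/d_\lambda$ also carries weight on the other $G$-types inside $U_\lambda$, and that weight is discarded.
\item The weights in the Minkowski step are $\sqrt{w_{\eta,\lambda}}$, not $\sqrt{w_{\eta,\lambda}}\,\pi_{\eta,\lambda}$.
\end{itemize}
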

\begin{proof}
The twirling and profile-identification steps of
Theorem~\ref{thm:supp-support-weight-locking} do not require
\(\alpha=0\). Retain their normalized states \(\tau_\eta\), projections
\(\widetilde P_{\eta,\lambda}\), and conditional blocks in
Eq.~\eqref{eq:supp-conditional-blocks}. The typewise errors \(\beta_\eta\)
are defined as in Eq.~\eqref{eq:supp-typewise-prebound}. For each occupied
type, take the normalized partial trace of the accepting effect's
\(\eta\)-block:
\[
E_\eta:=\frac1{d_\eta}\Tr_{V_\eta}
\bigl[(\Pi_\eta\otimes\id_R)M_0(\Pi_\eta\otimes\id_R)\bigr],
\qquad 0\preceq E_\eta\preceq\id.
\]
Set
\[
a_\eta:=1-\Tr(E_\eta\tau_\eta),\qquad
x_{\eta,\lambda}:=\Tr(E_\eta\widetilde\tau_{\eta,\lambda}),
\qquad c_{\eta,\lambda}:=\frac{d_\eta}{d_\lambda n_{\eta,\lambda}}.
\]
Thus \(0\le a_\eta\le1\) and
\(\sum_{\eta:p_\eta>0}p_\eta a_\eta=\alpha\).
Discarding the nonnegative Haar-output contribution outside the
\(\eta\)-block, then applying Eq.~\eqref{eq:supp-partial-trace-bound}, gives
\begin{align}
\beta_\eta
&\ge\sum_{\lambda\in\Lambda_\eta}\frac{d_\eta}{d_\lambda}
\Tr\bigl[E_\eta(\id_{\mathbb C^{n_{\eta,\lambda}}}
\otimes\widehat\tau_{\eta,\lambda})\bigr]
\ge\sum_{\lambda\in\Lambda_\eta}c_{\eta,\lambda}x_{\eta,\lambda}.
\label{eq:supp-finite-alpha-block}
\end{align}
This step does not use Eq.~\eqref{eq:supp-block-acceptance},
which requires \(\alpha=0\).

For a fixed occupied \(\eta\), abbreviate \(E=E_\eta\),
\(\tau=\tau_\eta\), \(P=\widetilde P_{\eta,\lambda}\), and
\(\pi=\Tr(P\tau)\). Splitting \(\Tr(\tau P)\) with
\(E+(\id-E)=\id\) and applying Hilbert--Schmidt Cauchy--Schwarz to
each term yields
\[
\begin{aligned}
\pi
&\le |\Tr(\tau EP)|+|\Tr(\tau(\id-E)P)|\\
&\le\sqrt{\Tr(E\tau)\Tr(EP\tau P)}
 +\sqrt{\Tr((\id-E)\tau)\Tr((\id-E)P\tau P)}\\
&\le\sqrt{(1-a_\eta)x_{\eta,\lambda}}
 +\sqrt{a_\eta\pi}
\le\sqrt{x_{\eta,\lambda}}+\sqrt{a_\eta\pi}.
\end{aligned}
\]
The second square root uses \(0\preceq\id-E\preceq\id\).
These estimates hold for singular states and for \(a_\eta=0,1\).

Multiply the last inequality by
\(\sqrt{p_\eta c_{\eta,\lambda}}\). Monotonicity of the Euclidean
norm on nonnegative vectors and its triangle inequality give
\[
\begin{aligned}
\sqrt{L(\rho)}
&=\Bigl(\sum_{\eta:p_\eta>0}\sum_{\lambda\in\Lambda_\eta}
 p_\eta c_{\eta,\lambda}\pi_{\eta,\lambda}^2\Bigr)^{1/2}\\
&\le\Bigl(\sum_{\eta,\lambda}p_\eta c_{\eta,\lambda}
 x_{\eta,\lambda}\Bigr)^{1/2}
 +\Bigl(\sum_{\eta,\lambda}p_\eta c_{\eta,\lambda}
 a_\eta\pi_{\eta,\lambda}\Bigr)^{1/2}\\
&\le\sqrt\beta+\sqrt{K_{G,d,m}\alpha}.
\end{aligned}
\]
The abbreviated sums still run only over occupied types and active sectors.
In the last inequality, Eq.~\eqref{eq:supp-finite-alpha-block} and
\(\beta=\sum_{\eta:p_\eta>0}p_\eta\beta_\eta\) bound the first term
by \(\sqrt\beta\).
Here \(\sum_\lambda\pi_{\eta,\lambda}=1\), and
\(d_\lambda\ge d_\eta n_{\eta,\lambda}\) implies
\(c_{\eta,\lambda}\le n_{\eta,\lambda}^{-2}\le1\).
Since \(\sqrt\beta\ge0\), the preceding inequality gives
\(\sqrt\beta\ge[\sqrt{L(\rho)}-\sqrt{K_{G,d,m}\alpha}]_+\).
Squaring yields Eq.~\eqref{eq:supp-finite-alpha-bound}.
For \(\alpha=0\), Eq.~\eqref{eq:supp-finite-alpha-bound} is equivalent to
Eq.~\eqref{eq:supp-global-bound}; for \(\alpha=1\), it remains valid since
\(L(\rho)\le K_{G,d,m}\).
\end{proof}

\section{Diagonal-torus performance and coherence}
\label{sec:supp-diagonal-torus}

We now specialize the support and weight locking framework to the diagonal
torus \(\mathbb T_d\subset\mathrm U(d)\), with \(d\ge2\) and \(m\ge1\).

\subsection{Occupation types and Schur profiles}
\label{subsec:supp-occupation-profiles}
For an occupation vector
\(\mathbf n=(n_1,\ldots,n_d)\in\mathbb Z_{\ge0}^d\) with
\(|\mathbf n|:=\sum_{j=1}^d n_j=m\), let \(W_{\mathbf n}\) be the span of
the computational-basis vectors \(|x\rangle\) whose string \(x\) contains
each symbol \(j\) exactly \(n_j\) times. The occupation spaces give the decomposition
\[
\mathcal H^{\otimes m}
=
\bigoplus_{|\mathbf n|=m}W_{\mathbf n}.
\]
For \(z=\operatorname{diag}(z_1,\ldots,z_d)\in\mathbb T_d\), the tensor-power
action is scalar on \(W_{\mathbf n}\):
\[
z^{\otimes m}\big|_{W_{\mathbf n}}
=
\chi_{\mathbf n}(z)\id_{W_{\mathbf n}},
\qquad
\chi_{\mathbf n}(z)
:=
\prod_{j=1}^d z_j^{n_j}.
\]
Thus the torus types are indexed by occupation vectors, and
\(d_{\mathbf n}=1\).

For \(\lambda\vdash m\) with \(\ell(\lambda)\le d\), the character of
\(U_\lambda\) restricted to \(\mathbb T_d\) is the Schur polynomial
\[
s_\lambda(z_1,\ldots,z_d)
=
\sum_{|\mathbf n|=m}
K_{\lambda,\mathbf n}
z_1^{n_1}\cdots z_d^{n_d},
\]
where \(K_{\lambda,\mathbf n}\) is the Kostka
number~\cite{Macdonald1995}. The
\(\chi_{\mathbf n}\)-weight space of \(U_\lambda\) therefore has
dimension \(K_{\lambda,\mathbf n}\), so
\(n_{\mathbf n,\lambda}=K_{\lambda,\mathbf n}\) and
\[
h_m(\mathbf n)
=
\sum_{\substack{\lambda\vdash m\\\ell(\lambda)\le d}}
d_\lambda K_{\lambda,\mathbf n}.
\]
Since \(d_\lambda=s_\lambda(1^d)\), this score is the coefficient of
\(\mathbf z^{\mathbf n}:=z_1^{n_1}\cdots z_d^{n_d}\) in
\(\sum_\lambda s_\lambda(1^d)s_\lambda(\mathbf z)\). The Cauchy
identity gives
\[
\sum_\lambda s_\lambda(1^d)s_\lambda(\mathbf z)
=
\prod_{j=1}^d(1-z_j)^{-d}.
\]
Extracting the coefficient of \(\mathbf z^{\mathbf n}\) yields
\begin{equation}
h_m(\mathbf n)
=
\prod_{j=1}^d
\binom{n_j+d-1}{d-1}.
\label{eq:supp-torus-score}
\end{equation}

Write \(m=qd+r\), where \(0\le r<d\), and let
\(\mathcal B_{m,d}\) be the set of coordinate permutations of the vector
with \(r\) entries equal to \(q+1\) and \(d-r\) entries equal to \(q\).
Call these occupation vectors balanced. They are exactly the maximizing torus
types.
Indeed, if \(n_i\ge n_j+2\) and
\(\mathbf n'=\mathbf n-\mathbf e_i+\mathbf e_j\), then
\[
\frac{h_m(\mathbf n')}{h_m(\mathbf n)}
=
\frac{n_i(n_j+d)}
     {(n_i+d-1)(n_j+1)}
>1,
\]
because the numerator exceeds the denominator by
\((d-1)(n_i-n_j-1)\).
Each such smoothing step strictly increases the score and decreases
\(\sum_j n_j^2\), so repeated smoothing terminates at a balanced
occupation. Since Eq.~\eqref{eq:supp-torus-score} is symmetric in the entries
of \(\mathbf n\), all balanced occupations have the same score. Hence
\(\mathcal I_m^\star=\mathcal B_{m,d}\), and support locking restricts
every optimal system marginal to the direct sum of the balanced
occupation spaces.

To compare the weight-locking condition with the incoherence
constraint, let \(\Delta\) denote complete dephasing of all query
systems in the computational product basis. For each occupation \(\mathbf n\), let
\(\Pi_{\mathbf n}\) project onto \(W_{\mathbf n}\), and let
\(\Pi_{\mathbf n,\lambda}\) denote the corresponding joint
occupation/Schur projector.
Let \(\Pi_\lambda\) project onto the Schur sector
\(U_\lambda\otimes S_\lambda\) in
Eq.~\eqref{eq:schur-weyl-branching}.
Write \(f^\lambda:=\dim S_\lambda\), and
set
\[
N_{\mathbf n}:=\dim W_{\mathbf n}
=\frac{m!}{\prod_j n_j!}.
\]
The active Schur labels are
\[
\Lambda_{\mathbf n}
:=
\{\lambda\vdash m:\ell(\lambda)\le d,\,
K_{\lambda,\mathbf n}>0\}.
\]
Writing \(U_\lambda[\mathbf n]\) for the
\(\chi_{\mathbf n}\)-weight space of \(U_\lambda\), of dimension
\(K_{\lambda,\mathbf n}\), we have
\[
W_{\mathbf n}
\cong
\bigoplus_{\lambda\in\Lambda_{\mathbf n}}
U_\lambda[\mathbf n]\otimes S_\lambda.
\]

The projector \(\Pi_{\mathbf n,\lambda}\) commutes with the
\(\mathfrak S_m\)-action, which is transitive on the computational-basis
strings of occupation \(\mathbf n\). Its diagonal is therefore constant
on these strings, and its trace equals
\(f^\lambda K_{\lambda,\mathbf n}\). Hence
\begin{equation}
\Delta(\Pi_{\mathbf n,\lambda})
=
\frac{f^\lambda K_{\lambda,\mathbf n}}{N_{\mathbf n}}
\Pi_{\mathbf n}.
\label{eq:supp-occupation-constant-diagonal}
\end{equation}
Define the incoherent Schur profile by
\[
\pi^{\mathrm{inc}}_{\mathbf n,\lambda}
:=
\frac{f^\lambda K_{\lambda,\mathbf n}}{N_{\mathbf n}},
\qquad
\lambda\in\Lambda_{\mathbf n}.
\]
Since
\(\sum_{\lambda\in\Lambda_{\mathbf n}}
f^\lambda K_{\lambda,\mathbf n}=N_{\mathbf n}\),
this is a probability vector. Moreover, every diagonal operator \(X\)
supported on \(W_{\mathbf n}\) satisfies
\(\Tr(\Pi_{\mathbf n,\lambda}X)
=\pi^{\mathrm{inc}}_{\mathbf n,\lambda}\Tr X\).
Thus an incoherent system marginal has conditional Schur profile
\(\boldsymbol\pi^{\mathrm{inc}}_{\mathbf n}\) in every occupied type.

The typewise extremal profile of
Theorem~\ref{thm:supp-support-weight-locking} specializes to
\[
\pi^\star_{\mathbf n,\lambda}
:=
\frac{d_\lambda K_{\lambda,\mathbf n}}{h_m(\mathbf n)},
\qquad
\lambda\in\Lambda_{\mathbf n}.
\]
It is also a probability vector, with full support on
\(\Lambda_{\mathbf n}\).

Now fix a balanced occupation
\(\mathbf b\in\mathcal B_{m,d}\).
Every balanced occupation is a coordinate permutation of
\(\mathbf b\). By symmetry of the Schur polynomials,
\(K_{\lambda,\mathbf c}=K_{\lambda,\mathbf b}\)
for every balanced \(\mathbf c\), while
\(N_{\mathbf c}=N_{\mathbf b}\).
Hence all balanced occupations have the same active Schur-label set.
Since \(h_m(\mathbf b)=h_{\max}\), the two profiles reduce to
\begin{equation}
\pi^\star_{\mathbf b,\lambda}
=
\frac{d_\lambda K_{\lambda,\mathbf b}}{h_{\max}},
\qquad
\lambda\in\Lambda_{\mathbf b},
\label{eq:supp-locked-schur-profile}
\end{equation}
and
\begin{equation}
\pi^{\mathrm{inc}}_{\mathbf b,\lambda}
=
\frac{f^\lambda K_{\lambda,\mathbf b}}{N_{\mathbf b}},
\qquad
\lambda\in\Lambda_{\mathbf b}.
\label{eq:supp-incoherent-schur-profile}
\end{equation}
We write
\(\boldsymbol\pi^\star_{\mathbf b}\)
and
\(\boldsymbol\pi^{\mathrm{inc}}_{\mathbf b}\)
for the corresponding probability vectors.
Both profiles are independent of the choice of balanced occupation.
At unrestricted optimality, weight locking requires the profile
\(\boldsymbol\pi^\star_{\mathbf b}\), whereas incoherence fixes it to
\(\boldsymbol\pi^{\mathrm{inc}}_{\mathbf b}\).
Both profiles have full support on \(\Lambda_{\mathbf b}\), so
\(D(\boldsymbol\pi^\star_{\mathbf b}
\Vert\boldsymbol\pi^{\mathrm{inc}}_{\mathbf b})\)
is finite.

\subsection{Incoherent-marginal performance}
Equation~\eqref{eq:supp-occupation-constant-diagonal} fixes the Schur
profile in every occupied torus type of an incoherent system marginal.
We now determine the exact zero-type-I-error performance under this
marginal restriction, while allowing arbitrary ancillas and measurements.

For an \(m\)-query parallel protocol \((\sigma,M)\), write
\(\rho:=\Tr_R\sigma\) for its system marginal. Under the
zero-type-I-error constraint, define
\[
\beta^{\mathrm{inc}}_{d,m}
:=
\inf\left\{\beta(\sigma,M):
\alpha(\sigma,M)=0,\;\Delta(\rho)=\rho\right\}.
\]
Only the system marginal is constrained: the ancillary system,
system--ancilla correlations in the joint probe, and the measurement
are otherwise unrestricted.

For an occupation \(\mathbf n\), choose any computational-basis string
\(|x\rangle\in W_{\mathbf n}\) and project back onto it after the query.
The diagonal torus acts on \(|x\rangle\) by a phase, so this return test
has zero type-I error. Its Haar type-II error depends only on
\(\mathbf n\); denote it by
\begin{equation}
B(\mathbf n)
:=
\int_{\mathrm U(d)}
\prod_{j=1}^{d}|U_{jj}|^{2n_j}\,
\dd\mu_{\mathrm{Haar}}(U).
\label{eq:supp-B-Haar}
\end{equation}

\begin{theorem}[Incoherent-marginal optimum]
\label{thm:supp-diagonal-error}
For every \(d\ge2\), \(m\ge1\), and balanced occupation
\(\mathbf b\),
\begin{equation}
\beta^{\mathrm{inc}}_{d,m}
=
\min_{|\mathbf n|=m}B(\mathbf n)
=
B(\mathbf b).
\label{eq:supp-diagonal-exact}
\end{equation}
The infimum is attained without an ancilla by the return test on any
computational-basis string \(|x\rangle\in W_{\mathbf b}\).
Explicitly, take \(\sigma=M_0=|x\rangle\langle x|\) and \(M_1=\id-M_0\).
This protocol is diagonal at both the probe and measurement stages.
The same optimum holds when the joint probe and both POVM effects are
required to be diagonal in the computational product basis tensored
with a fixed ancilla basis.
\end{theorem}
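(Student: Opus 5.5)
The converse will come from the typewise bound in the proof of Theorem~\ref{thm:supp-support-weight-locking}, evaluated at the profiles forced by incoherence; attainment comes from the return test. First, the upper bound. For a basis string $|x\rangle\in W_{\mathbf n}$, the protocol $\sigma=M_0=|x\rangle\langle x|$ has $\alpha=0$, because $z^{\otimes m}|x\rangle=\chi_{\mathbf n}(z)|x\rangle$. Its Haar type-II error is
\[
\int|\langle x|U^{\otimes m}|x\rangle|^2\dd\mu_{\mathrm{Haar}}(U)=B(\mathbf n).
\]
Probe and effects are diagonal, so this gives $\beta^{\mathrm{inc}}_{d,m}\le\min_{\mathbf n}B(\mathbf n)$, also within the fully diagonal class. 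Conversely, a diagonal joint probe has an incoherent system marginal. Hence the fully diagonal optimum is sandwiched between $\beta^{\mathrm{inc}}_{d,m}$ and $\min_{\mathbf n}B(\mathbf n)$, and the diagonal statement follows once the main equality is proved.

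For the lower bound, take any zero-type-I-error protocol with $\Delta\rho=\rho$. By Eq.~\eqref{eq:supp-occupation-constant-diagonal}, every occupied type has profile $\boldsymbol\pi_{\mathbf n}=\boldsymbol\pi^{\mathrm{inc}}_{\mathbf n}$. Decomposing $\beta=\sum_{\mathbf n}p_{\mathbf n}\beta_{\mathbf n}$ and applying Eq.~\eqref{eq:supp-typewise-bound} with $d_{\mathbf n}=1$ and $n_{\mathbf n,\lambda}=K_{\lambda,\mathbf n}$ gives $\beta\ge\sum_{\mathbf n}p_{\mathbf n}L_{\mathbf n}$, where
\[
L_{\mathbf n}:=\sum_{\lambda\in\Lambda_{\mathbf n}}\frac{(f^\lambda K_{\lambda,\mathbf n})^2}{N_{\mathbf n}^2\,d_\lambda K_{\lambda,\mathbf n}}=\sum_{\lambda\in\Lambda_{\mathbf n}}\frac{(f^\lambda)^2K_{\lambda,\mathbf n}}{N_{\mathbf n}^2\,d_\lambda}.
\]
The plan is to prove the identity $L_{\mathbf n}=B(\mathbf n)$. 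Combined with $\sum_{\mathbf n}p_{\mathbf n}=1$, this gives $\beta\ge\min_{\mathbf n}B(\mathbf n)$.

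To prove $L_{\mathbf n}=B(\mathbf n)$, I would compute $B(\mathbf n)$ via Schur--Weyl. The Haar twirl of $|x\rangle\langle x|$ is $\sum_\lambda d_\lambda^{-1}\id_{U_\lambda}\otimes Y_\lambda$ with $Y_\lambda=\Tr_{U_\lambda}(\Pi_\lambda|x\rangle\langle x|\Pi_\lambda)$. Hence
\[
B(\mathbf n)=\sum_\lambda d_\lambda^{-1}\Tr\bigl(Y_\lambda^2\bigr),
\]
since $\langle x|\Pi_\lambda(\id_{U_\lambda}\otimes Y_\lambda)\Pi_\lambda|x\rangle=\Tr(Y_\lambda^2)$. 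It then remains to show $\Tr(Y_\lambda^2)=(f^\lambda)^2K_{\lambda,\mathbf n}/N_{\mathbf n}^2$. Write $|x\rangle=P_{\mathbf n}(\pi)\cdot$ (a permutation of a fixed sorted string). The vector $\Pi_\lambda|x\rangle$ lies in $U_\lambda[\mathbf n]\otimes S_\lambda$. I would identify its coefficient matrix with $\sqrt{f^\lambda/N_{\mathbf n}}$ times a partial isometry of rank $K_{\lambda,\mathbf n}$ whose singular values are all equal. This equal-singular-value claim is the step I expect to be the main obstacle. I would try to prove it using the $\mathfrak S_{\mathbf n}$ (Young subgroup) stabilizer of $x$ together with Frobenius reciprocity: $W_{\mathbf n}\cong\operatorname{Ind}_{\mathfrak S_{\mathbf n}}^{\mathfrak S_m}\mathbf 1$, with multiplicity $K_{\lambda,\mathbf n}$. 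The second moment $\Tr(Y_\lambda^2)$ can then be rewritten as a sum over $\mathfrak S_{\mathbf n}$-double cosets of characters. If equal singular values fail, I would instead verify $L_{\mathbf n}=B(\mathbf n)$ directly by checking that the inequalities in Eq.~\eqref{eq:supp-typewise-bound} are tight for the state $\Pi_{\mathbf n}/N_{\mathbf n}$ with effect the symmetrized return projection. I would also sanity-check the identity at $m=1$, where it gives $1/d$.

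Finally, I would show $\min_{\mathbf n}B(\mathbf n)=B(\mathbf b)$ by the smoothing argument used for $h_m$: if $n_i\ge n_j+2$, then moving one unit from $i$ to $j$ does not increase $B$. By permutation symmetry of Haar measure, all balanced vectors give the same value. For the smoothing step, I would first try to write $B(\mathbf n)$ as a moment of the diagonal of $U$ and use a log-concavity (Schur-convexity) property in each pair of coordinates $(n_i,n_j)$, conditioning on the remaining diagonal entries. If that fails, I would fall back on the combinatorial expression for $L_{\mathbf n}$ and compare Kostka numbers under the raising operator.
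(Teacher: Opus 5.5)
Your architecture matches the paper's: the ancilla-free basis return test for the upper bound (with the sandwich settling the fully diagonal class), the typewise bound Eq.~\eqref{eq:supp-typewise-bound} at the incoherent profile for the converse, and smoothing for the minimization. The gap is at the one step the converse rests on, the identity $L_{\mathbf n}=B(\mathbf n)$. Applying your converse to the return test already gives $L_{\mathbf n}\le B(\mathbf n)$. Without the reverse inequality you only get $\beta^{\mathrm{inc}}_{d,m}\ge\min_{\mathbf n}L_{\mathbf n}$, which does not reach $\min_{\mathbf n}B(\mathbf n)$.

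You flag the equal-singular-value claim as the main obstacle, and neither route you offer closes it. Frobenius reciprocity controls only dimensions. It places $\Pi_\lambda|x\rangle$ in $U_\lambda[\mathbf n]\otimes S_\lambda^{\mathrm{inv}}$, where $S_\lambda^{\mathrm{inv}}$ is the $K_{\lambda,\mathbf n}$-dimensional subspace fixed by the stabilizer of $x$, but it says nothing about singular values. The fallback does not work either. Tightness of Eq.~\eqref{eq:supp-typewise-bound} for a protocol other than the return test would equate $L_{\mathbf n}$ with that protocol's error, not with $B(\mathbf n)$. Moreover, for the ancilla-free probe $\Pi_{\mathbf n}/N_{\mathbf n}$ with effect $\Pi_{\mathbf n}$ the bound is not tight: at $d=m=2$, $\mathbf n=(1,1)$, that protocol has $\beta=2/3$, while $L_{\mathbf n}=B(\mathbf n)=1/3$.

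The missing idea is a short averaging argument, which is how the paper proceeds. Each $V_\tau$ acts on $U_\lambda\otimes S_\lambda$ only through the $S_\lambda$ factor and commutes with $\Pi_\lambda$. Hence $R_\lambda:=\Tr_{S_\lambda}(\Pi_\lambda|x\rangle\langle x|\Pi_\lambda)$ is unchanged under $|x\rangle\mapsto V_\tau|x\rangle$. Averaging over $\mathfrak S_m$, which is transitive on strings of occupation $\mathbf n$, replaces $|x\rangle\langle x|$ by $\Pi_{\mathbf n}/N_{\mathbf n}$. This gives $R_\lambda=\Tr_{S_\lambda}(\Pi_{\mathbf n,\lambda})/N_{\mathbf n}=(f^\lambda/N_{\mathbf n})\id_{U_\lambda[\mathbf n]}$. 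Since $R_\lambda$ and your $Y_\lambda$ are the two reductions of one subnormalized vector, $\Tr Y_\lambda^2=\Tr R_\lambda^2=(f^\lambda)^2K_{\lambda,\mathbf n}/N_{\mathbf n}^2$.

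An even shorter route also works. For every string $x'$ of occupation $\mathbf n$, $\langle x'|U^{\otimes m}|x'\rangle=\prod_jU_{jj}^{n_j}$, so $\langle x|U^{\otimes m}|x\rangle=\Tr(\Pi_{\mathbf n}U^{\otimes m})/N_{\mathbf n}=N_{\mathbf n}^{-1}\sum_\lambda f^\lambda\Tr[P_\lambda U_\lambda(U)]$, where $P_\lambda$ projects $U_\lambda$ onto $U_\lambda[\mathbf n]$. Schur orthogonality of the inequivalent $U_\lambda$ then gives $B(\mathbf n)=L_{\mathbf n}$ directly.

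Your smoothing step also needs a correction. The relevant property is not log-concavity: a symmetric log-concave profile would be maximized, not minimized, at balance. What you need is joint exchangeability. Conjugating $U$ by the transposition of coordinates $i,j$ preserves Haar measure, swaps $u=|U_{ii}|^2$ and $v=|U_{jj}|^2$, and fixes $F_0=\prod_{\ell\ne i,j}|U_{\ell\ell}|^{2n_\ell}$. Hence, with $r=n_i\ge s+2=n_j+2$, one gets $B(\mathbf n)-B(\mathbf n')=\tfrac12\mathbb E[F_0u^sv^s(u-v)(u^{r-s-1}-v^{r-s-1})]\ge0$ pointwise. No conditioning or independence is needed.
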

\begin{proof}
\emph{Schur evaluation of the return test.}
Fix a computational-basis string \(|x\rangle\in W_{\mathbf n}\), and let
\(V_\tau\) permute the query systems. Transitivity on strings of
occupation \(\mathbf n\) gives
\[
\frac1{m!}\sum_{\tau\in\mathfrak S_m}
V_\tau|x\rangle\langle x|V_\tau^\dagger
=\frac{\Pi_{\mathbf n}}{N_{\mathbf n}}.
\]
Let \(|x_\lambda\rangle:=\Pi_\lambda|x\rangle\), without renormalizing.
Since the partial trace over \(S_\lambda\) is unchanged by the
permutation action, the average above gives
\begin{equation}
R_\lambda:=\Tr_{S_\lambda}|x_\lambda\rangle\langle x_\lambda|
=\frac{f^\lambda}{N_{\mathbf n}}\id_{U_\lambda[\mathbf n]}.
\label{eq:supp-basis-Schur-reduction}
\end{equation}
Here the identity is understood as zero outside \(U_\lambda[\mathbf n]\).
Set \(T_\lambda:=\Tr_{U_\lambda}|x_\lambda\rangle\langle x_\lambda|\).
The Haar twirl of \(\lvert x\rangle\langle x\rvert\) is
\[
\bigoplus_\lambda\frac{\id_{U_\lambda}}{d_\lambda}\otimes T_\lambda.
\]
Since \(R_\lambda\) and \(T_\lambda\) are the two reductions of the same
subnormalized pure vector, they have the same nonzero spectrum. Therefore
\begin{equation}
\begin{aligned}
B(\mathbf n)
&=\sum_{\lambda\in\Lambda_{\mathbf n}}
\frac{\Tr(T_\lambda^2)}{d_\lambda}
=\sum_{\lambda\in\Lambda_{\mathbf n}}
\frac{\Tr(R_\lambda^2)}{d_\lambda}\\
&=\frac1{N_{\mathbf n}^2}
\sum_{\lambda\in\Lambda_{\mathbf n}}
\frac{(f^\lambda)^2K_{\lambda,\mathbf n}}{d_\lambda}.
\end{aligned}
\label{eq:supp-basis-Haar-error}
\end{equation}
Using the two profiles defined in
Subsec.~\ref{subsec:supp-occupation-profiles}
and \(1+\chi^2(\mathbf p\Vert\mathbf q)
=\sum_\lambda p_\lambda^2/q_\lambda\),
Eq.~\eqref{eq:supp-basis-Haar-error} becomes
\begin{equation}
B(\mathbf n)
=\sum_{\lambda\in\Lambda_{\mathbf n}}
\frac{(\pi^{\mathrm{inc}}_{\mathbf n,\lambda})^2}
{d_\lambda K_{\lambda,\mathbf n}}
=\frac{1+\chi^2(\boldsymbol\pi^{\mathrm{inc}}_{\mathbf n}
\Vert\boldsymbol\pi^\star_{\mathbf n})}{h_m(\mathbf n)}.
\label{eq:supp-diagonal-Pearson}
\end{equation}

\emph{Restricted-class converse.}
For an incoherent system marginal,
Eq.~\eqref{eq:supp-occupation-constant-diagonal} fixes the Schur
profile to \(\boldsymbol\pi^{\mathrm{inc}}_{\mathbf n}\) in every
occupied type.
Applying Theorem~\ref{thm:supp-support-weight-locking} with
\(d_{\mathbf n}=1\), and using Eq.~\eqref{eq:supp-diagonal-Pearson}, gives
\begin{equation}
\beta(\sigma,M)
\ge\sum_{\mathbf n:p_{\mathbf n}>0}p_{\mathbf n}B(\mathbf n)
\ge\min_{|\mathbf n|=m}B(\mathbf n).
\label{eq:supp-diagonal-converse}
\end{equation}
Because the sum includes every occupied type, balanced or not, the
bound applies to the entire incoherent-marginal class, with arbitrary
ancillas and measurements; no support-locking assumption is used.

\emph{Balancing.}
It remains to minimize the Haar integral in
Eq.~\eqref{eq:supp-B-Haar}.
Choose \(i,j\) with
\(n_i=r\ge s+2=n_j+2\), and let
\(\mathbf n':=\mathbf n-\mathbf e_i+\mathbf e_j\).
For \(U\in\mathrm U(d)\), write
\[
u:=|U_{ii}|^2,
\qquad
v:=|U_{jj}|^2,
\qquad
F_0(U):=
\prod_{\ell\ne i,j}|U_{\ell\ell}|^{2n_\ell}.
\]
Simultaneously exchanging rows and columns \(i\) and \(j\) preserves
Haar measure, swaps \(u\) and \(v\), and leaves \(F_0\) unchanged.
Averaging the integrand with its exchanged copy gives
\begin{equation}
B(\mathbf n)-B(\mathbf n')
=\frac12\mathbb E\!\left[
F_0u^sv^s(u-v)\bigl(u^{r-s-1}-v^{r-s-1}\bigr)
\right]\ge0.
\label{eq:supp-diagonal-smoothing}
\end{equation}
This argument uses only the joint Haar symmetry; no independence of
the diagonal entries is assumed.

The same smoothing operation as in
Subsec.~\ref{subsec:supp-occupation-profiles} decreases
\(\sum_j n_j^2\), so iteration reaches a balanced occupation.
Since \(B\) is invariant under coordinate permutations,
\[
\min_{|\mathbf n|=m}B(\mathbf n)=B(\mathbf b).
\]
The balanced return test attains \(B(\mathbf b)\), while
Eq.~\eqref{eq:supp-diagonal-converse} gives the matching converse.
This proves Eq.~\eqref{eq:supp-diagonal-exact}.

Every protocol whose joint probe and POVM effects are diagonal in the
computational product basis of the query systems tensored with a fixed
ancilla basis has an incoherent system marginal.
The balanced return test is diagonal at both the probe and
measurement stages. Hence imposing this stronger joint-diagonality
restriction does not change the zero-type-I-error optimum.
\end{proof}

\emph{Special cases.}
At \(m=1\), \(B(\mathbf n)=1/d\). For \(d=2\), unitarity gives
\(|U_{11}|^2=|U_{22}|^2=t\), with \(t\) uniform on \([0,1]\) under Haar
measure, so every occupation has \(B(\mathbf n)=1/(m+1)\).
Thus balanced occupations need not be the only minimizers. In combination
with Eq.~\eqref{eq:supp-torus-score},
\begin{equation}
\beta^{\mathrm{inc}}_{2,m}=\frac1{m+1},\qquad
\beta^\star_{2,m}(0)=\frac1{\lfloor(m+2)^2/4\rfloor}.
\label{eq:supp-qubit-error-values}
\end{equation}

Theorem~\ref{thm:supp-diagonal-error} and the integral representation
in Eq.~\eqref{eq:supp-B-Haar} also determine the fixed-dimension
separation between the incoherent-marginal and unrestricted optima.

\begin{proposition}[Fixed-dimension error and query scaling]
\label{prop:supp-diagonal-scaling}
Fix \(d\ge2\). As \(m\to\infty\),
\begin{equation}
\beta^{\mathrm{inc}}_{d,m}
=\Theta_d\!\left(m^{-d(d-1)/2}\right),\qquad
\beta^\star_{d,m}(0)
=\Theta_d\!\left(m^{-d(d-1)}\right).
\label{eq:supp-diagonal-scaling}
\end{equation}
For \(0<\delta<1\), define
\[
\begin{aligned}
m_{\mathrm{inc}}(d,\delta)
&:=\min\left\{m\ge1:\beta^{\mathrm{inc}}_{d,m}\le\delta\right\},\\
m_\star(d,\delta)
&:=\min\left\{m\ge1:\beta^\star_{d,m}(0)\le\delta\right\}.
\end{aligned}
\]
Then, as \(\delta\downarrow0\),
\begin{equation}
m_{\mathrm{inc}}(d,\delta)
=\Theta_d\!\left(\delta^{-2/[d(d-1)]}\right),\qquad
m_\star(d,\delta)
=\Theta_d\!\left(\delta^{-1/[d(d-1)]}\right).
\label{eq:supp-query-scaling}
\end{equation}
\end{proposition}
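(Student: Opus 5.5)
The plan is to handle the two optima separately, in each case using the closed forms and integral representations already in hand, and then to invert the resulting rates to obtain the query scaling.

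\emph{Unrestricted optimum.} By Eq.~\eqref{eq:optimal-type-II-error} and Eq.~\eqref{eq:supp-torus-score}, we have $\beta^\star_{d,m}(0)=h_m(\mathbf b)^{-1}=\prod_j\binom{b_j+d-1}{d-1}^{-1}$. Each balanced entry satisfies $b_j\in\{\lfloor m/d\rfloor,\lceil m/d\rceil\}$. Each binomial factor is therefore $\Theta_d((m/d)^{d-1})$ uniformly for $m\ge d$, and a product of $d$ such factors gives $\Theta_d(m^{d(d-1)})$. For $m<d$ only finitely many cases remain, which the constants absorb.

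\emph{Incoherent optimum.} By Theorem~\ref{thm:supp-diagonal-error}, $\beta^{\mathrm{inc}}_{d,m}=B(\mathbf b)$. The upper bound would follow directly from Eq.~\eqref{eq:supp-basis-Haar-error}. The obstacle there is that it requires asymptotics of $f^\lambda$, $d_\lambda$ and Kostka numbers, so I would instead work with the Haar integral~\eqref{eq:supp-B-Haar}. Write $t_j=|U_{jj}|^2$. Near $U=\id$ the integrand $\prod_j t_j^{b_j}\approx\exp(-(m/d)\sum_j(1-t_j))$ is concentrated. Moreover, $\sum_j(1-t_j)=\sum_{i\ne j}|U_{ij}|^2$ up to higher order, which is comparable to $\|U-\mathrm{diag}\|^2$ in the off-diagonal directions. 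The diagonal directions form a torus of dimension $d$ on which the integrand is constant to leading order. The remaining $d^2-d$ real off-diagonal coordinates each contribute a Gaussian factor of width $m^{-1/2}$.

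A Laplace-type estimate on $\mathrm U(d)$, using local coordinates $U=z\exp(iH)$ with $z\in\mathbb T_d$ and $H$ off-diagonal Hermitian, then gives $B(\mathbf b)=\Theta_d(m^{-(d^2-d)/2})$. For the two-sided bound, I would rely on two facts. First, the Haar density in these coordinates is bounded above and below near the torus. Second, away from a neighbourhood of $\mathbb T_d$ there holds $\max_j t_j\le 1-c$, and hence $\prod_j t_j^{b_j}\le(1-c)^{\lfloor m/d\rfloor}$, which is exponentially small. Near the torus I would bound $1-t_j$ above and below by constant multiples of $\sum_{i\ne j}|H_{ij}|^2$.

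As a sanity check, for $d=2$ this predicts $m^{-1}$, consistent with $1/(m+1)$ in Eq.~\eqref{eq:supp-qubit-error-values}. I expect this localization step to be the main obstacle, and making the comparison constants uniform in $m$ is the only delicate point.

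\emph{Query scaling.} Both optima are nonincreasing in $m$: a useless extra query can be appended by preparing that system in any state and ignoring it, which is allowed and preserves incoherence. Given a two-sided bound $c_1m^{-k}\le\beta_m\le c_2m^{-k}$, the condition $\beta_m\le\delta$ holds once $m\ge(c_2/\delta)^{1/k}$ and fails whenever $m<(c_1/\delta)^{1/k}$. This yields $m=\Theta(\delta^{-1/k})$, with $k=d(d-1)/2$ for the incoherent optimum and $k=d(d-1)$ for the unrestricted one. These are exactly Eq.~\eqref{eq:supp-query-scaling}.
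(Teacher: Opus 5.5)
Your proposal is correct and follows essentially the same route as the paper: the closed form $h_{\max}^{-1}$ for the unrestricted optimum, a Laplace estimate for $B(\mathbf b)$ in tubular coordinates $D e^{iH}$ around $\mathbb T_d$ with a bounded Jacobian and an exponentially small off-tube contribution, and direct inversion of the two-sided power bounds. One slip needs fixing: off the tube it is $\min_j t_j$ (or $\prod_j t_j$, as in the paper) that is bounded by $1-c<1$, not $\max_j t_j$, which can equal $1$ far from $\mathbb T_d$ when $d\ge3$ (e.g.\ $U=1\oplus U'$ with $U'$ nondiagonal); with that correction, your single-factor bound $(1-c)^{\lfloor m/d\rfloor}$ goes through unchanged.
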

\begin{proof}
\emph{Unrestricted optimum.}
For a balanced occupation \(\mathbf b\),
Eq.~\eqref{eq:supp-torus-score} and
\(\beta^\star_{d,m}(0)=h_{\max}^{-1}\) give
\[
\beta^\star_{d,m}(0)
=\left[\prod_{j=1}^d\binom{b_j+d-1}{d-1}\right]^{-1}.
\]
Since \(\mathbf b\) is balanced,
\(b_j\in\{\lfloor m/d\rfloor,\lceil m/d\rceil\}\), and hence
\(b_j=m/d+O(1)\) uniformly in \(j\).
Each binomial factor is \(\Theta_d(m^{d-1})\), so their product is
\(\Theta_d(m^{d(d-1)})\). Therefore
\(\beta^\star_{d,m}(0)=\Theta_d(m^{-d(d-1)})\).

\emph{Incoherent-marginal optimum.}
Set \(a_j(U):=|U_{jj}|^2\) and
\(F_m(U):=\prod_{j=1}^d a_j(U)^{b_j}\).
Theorem~\ref{thm:supp-diagonal-error} and
Eq.~\eqref{eq:supp-B-Haar} give
\[
\beta^{\mathrm{inc}}_{d,m}
=\int_{\mathrm U(d)}F_m(U)\,\dd\mu_{\mathrm{Haar}}(U).
\]
For \(m\ge d\), every \(b_j\) is positive. Since
\(0\le a_j(U)\le1\), we have \(F_m(U)=1\) if and only if
\(a_j(U)=1\) for every \(j\), equivalently \(U\in\mathbb T_d\).
Thus the integral is concentrated near the torus \(\mathbb T_d\).

At the identity, the normal directions to \(\mathbb T_d\) are represented by
\[
\mathfrak p:=\{H=H^\dagger:\operatorname{diag}H=0\},
\qquad
\|H\|^2:=\sum_{i<j}|H_{ij}|^2.
\]
This real vector space has dimension
\(s_d=d^2-d=d(d-1)\).
The error exponent below is \(s_d/2\).

The differential of \(\Phi(D,H):=D e^{iH}\) is invertible at every
\((D,0)\), since the diagonal and zero-diagonal Hermitian directions
are complementary. By the tubular-neighborhood theorem, for some
\(\epsilon_0>0\), \(\Phi\) is a diffeomorphism from
\(\mathbb T_d\times\{H\in\mathfrak p:\|H\|<\epsilon_0\}\)
onto a neighborhood of \(\mathbb T_d\).

On a smaller fixed tube, Haar measure has the form
\[
\dd\mu_{\mathrm{Haar}}(De^{iH})
=J(D,H)\,\dd\nu(D)\,\dd H,
\]
where \(\nu\) is normalized Haar measure on \(\mathbb T_d\), and \(J\) is
bounded above and below by positive constants depending only on \(d\).
Since left multiplication by \(D\in\mathbb T_d\) changes only row phases,
\(|(De^{iH})_{jj}|=|(e^{iH})_{jj}|\), so \(F_m(De^{iH})\) is independent
of \(D\).

Taylor expansion on this fixed neighborhood gives
\[
a_j(e^{iH})
=1-\sum_{k\ne j}|H_{jk}|^2+O_d(\|H\|^3).
\]
After shrinking the tube, assume \(a_j(e^{iH})\ge1/2\) for every \(j\).
Then
\begin{equation}
\Psi_m(H):=-\sum_jb_j\ln a_j(e^{iH})
=\sum_{i<j}(b_i+b_j)|H_{ij}|^2+R_m(H),\qquad
|R_m(H)|\le C_d m\|H\|^3.
\label{eq:supp-torus-normal-expansion}
\end{equation}
Here \(\ln\) denotes the natural logarithm.
For all sufficiently large \(m\), balancedness gives
\(b_i+b_j=\Theta_d(m)\). After shrinking the fixed tube once more,
the remainder is absorbed into the quadratic term, and hence
\[
c_d m\|H\|^2\le\Psi_m(H)\le C_d m\|H\|^2
\]
for some constants \(c_d,C_d>0\).

The quadratic bounds and the Jacobian estimates give
\[
\int_{\mathrm{tube}}F_m(U)\,\dd\mu_{\mathrm{Haar}}(U)
=\Theta_d\!\left(m^{-s_d/2}\right).
\]
The upper bound follows by extending the Gaussian integral to all of
\(\mathfrak p\). For the lower bound, integrate over
\(\|H\|\le m^{-1/2}\), where the exponent remains bounded and the ball
has volume \(\Theta_d(m^{-s_d/2})\).

Outside the fixed tube, define \(F(U):=\prod_{j=1}^d a_j(U)\).
Since \(F(U)=1\) exactly on \(\mathbb T_d\), compactness gives
\(\vartheta_d<1\) such that \(F(U)\le\vartheta_d\) on the complement
of the tube. For \(m\ge2d\), balancedness gives \(b_j\ge m/(2d)\), and hence
\[
F_m(U)\le F(U)^{m/(2d)}\le\vartheta_d^{\,m/(2d)}.
\]
This contribution is exponentially small compared with \(m^{-s_d/2}\).
Combining the local and nonlocal estimates yields
\[
\beta^{\mathrm{inc}}_{d,m}
=\Theta_d\!\left(m^{-s_d/2}\right)
=\Theta_d\!\left(m^{-d(d-1)/2}\right),
\]
which completes the proof of the error-rate statement.

Finally, if a positive sequence satisfies
\(c m^{-s}\le g(m)\le C m^{-s}\)
for all sufficiently large \(m\), with \(c,C,s>0\), then the smallest
integer with \(g(m)\le\delta\) is
\(\Theta(\delta^{-1/s})\) as \(\delta\downarrow0\).
Applying this observation to the two estimates in
Eq.~\eqref{eq:supp-diagonal-scaling} proves
Eq.~\eqref{eq:supp-query-scaling}.
\end{proof}

\subsection{Minimum system-marginal coherence at the unrestricted optimum}
Throughout this subsection, entropy and relative entropy are measured
in bits unless \(\ln\) is written explicitly.
Define
\(C_{\mathrm{rel}}(\rho):=D(\rho\Vert\Delta(\rho))\).
Let \(C_{\min}(d,m)\) denote the infimum of
\(C_{\mathrm{rel}}(\Tr_R\sigma)\) over all \(m\)-query parallel
protocols satisfying \(\alpha(\sigma,M)=0\) and
\(\beta(\sigma,M)=\beta^\star_{d,m}(0)\).

\begin{theorem}[Minimum system-marginal coherence]
\label{thm:supp-minimum-coherence}
For every \(d\ge2\) and \(m\ge1\), the infimum defining
\(C_{\min}(d,m)\) is attained. For any balanced occupation \(\mathbf b\),
\begin{equation}
C_{\min}(d,m)
=D(\boldsymbol\pi^\star_{\mathbf b}\Vert\boldsymbol\pi^{\mathrm{inc}}_{\mathbf b}).
\label{eq:supp-coherence-minimum}
\end{equation}
The right-hand side is independent of the balanced occupation chosen.
It vanishes exactly at \(m=1\) and is strictly positive for every
\(m\ge2\). At \(m=1\), the zero minimum is attained by the
ancilla-free basis return test of Theorem~\ref{thm:supp-diagonal-error}.
\end{theorem}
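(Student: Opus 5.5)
The argument has two halves: a data-processing converse that reduces the coherence to a classical relative entropy between Schur profiles, and an explicit optimal marginal that attains it.

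\emph{Converse.} Let $(\sigma,M)$ be optimal with marginal $\rho$. Theorem~\ref{thm:supp-support-weight-locking} forces $\rho$ to be supported on $\bigoplus_{\mathbf b\in\mathcal B_{m,d}}W_{\mathbf b}$, with profile $\boldsymbol\pi^\star_{\mathbf b}$ in every occupied balanced type. Let $\mathcal N$ be the measurement channel onto the classical labels $(\mathbf n,\lambda)$, with POVM $\{\Pi_{\mathbf n,\lambda}\}$. Data processing gives $D(\rho\Vert\Delta\rho)\ge D(\mathcal N(\rho)\Vert\mathcal N(\Delta\rho))$. Dephasing preserves each $\Pi_{\mathbf n}$-weight, so $\Delta\rho$ has the same type weights $p_{\mathbf b}$. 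Because $\Delta\rho$ is diagonal, Eq.~\eqref{eq:supp-occupation-constant-diagonal} gives it profile $\boldsymbol\pi^{\mathrm{inc}}_{\mathbf b}$. The two output distributions are therefore $p_{\mathbf b}\pi^\star_{\mathbf b,\lambda}$ and $p_{\mathbf b}\pi^{\mathrm{inc}}_{\mathbf b,\lambda}$. By the chain rule their relative entropy is $\sum_{\mathbf b}p_{\mathbf b}D(\boldsymbol\pi^\star_{\mathbf b}\Vert\boldsymbol\pi^{\mathrm{inc}}_{\mathbf b})$. As shown in Subsec.~\ref{subsec:supp-occupation-profiles}, both profiles are the same for every balanced occupation, since Kostka numbers and $N_{\mathbf b}$ are permutation invariant. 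This yields the lower bound $C_{\min}(d,m)\ge D(\boldsymbol\pi^\star_{\mathbf b}\Vert\boldsymbol\pi^{\mathrm{inc}}_{\mathbf b})$ and shows it does not depend on $\mathbf b$.

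\emph{Attainability.} This is the main step. I need an optimal marginal for which data processing is tight, so that $D(\rho\Vert\Delta\rho)$ equals the classical value. The candidate is
\[
\rho=\sum_{\lambda\in\Lambda_{\mathbf b}}\pi^\star_{\mathbf b,\lambda}\,\frac{\Pi_{\mathbf b,\lambda}}{f^\lambda K_{\lambda,\mathbf b}},
\]
a mixture of maximally mixed joint sectors. This operator is a function of the commuting projectors $\Pi_{\mathbf b,\lambda}$. By Eq.~\eqref{eq:supp-occupation-constant-diagonal}, $\Delta\rho=\Pi_{\mathbf b}/N_{\mathbf b}$, which is also diagonal in the same sector decomposition. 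It follows that $\rho$ and $\Delta\rho$ commute and are simultaneously block-scalar. Hence $D(\rho\Vert\Delta\rho)=\sum_\lambda\pi^\star_\lambda\log\bigl(\pi^\star_\lambda/\pi^{\mathrm{inc}}_\lambda\bigr)$, which is exactly the bound.

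It remains to show that some protocol with this marginal attains $h_{\max}^{-1}$. I would take the purification $\sigma=|\Psi\rangle\langle\Psi|$ with ancilla $R$ and use the return projection $M_0=|\Psi\rangle\langle\Psi|$ enlarged to the $\mathbb T_d$-orbit. The torus acts on $W_{\mathbf b}$ by a scalar phase, so $M_0=|\Psi\rangle\langle\Psi|$ itself already gives $\alpha=0$.

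For the type-II error, Haar twirling gives $\beta=\sum_\lambda\Tr(T_\lambda^2)/d_\lambda$, where $T_\lambda$ is the reduction to $U_\lambda[\mathbf b]$ of the $\lambda$-component. In the maximally mixed construction this reduction is $\pi^\star_\lambda\,\id/K_{\lambda,\mathbf b}$, so $\beta=\sum_\lambda(\pi^\star_\lambda)^2/(d_\lambda K_{\lambda,\mathbf b})=1/h_{\max}$ by the definition of $\pi^\star$. This mirrors the Pearson computation in Eq.~\eqref{eq:supp-diagonal-Pearson}. I need to check carefully that the purification's reduction onto the $U_\lambda\otimes(\text{rest})$ cut is indeed of this form; the $S_\lambda\otimes R$ part is maximally entangled, so it should be. Since the bound is achieved, the infimum is attained.

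\emph{Positivity.} At $m=1$ there is a single $\lambda=(1)$, so both profiles equal $(1)$ and the divergence vanishes. For $m\ge2$, equality $\pi^\star=\pi^{\mathrm{inc}}$ would require $d_\lambda/f^\lambda$ to be constant on $\Lambda_{\mathbf b}$. But $(m)$ and a second label such as $(m-1,1)$ both lie in $\Lambda_{\mathbf b}$ when $d\ge2$. For these, $d_{(m)}/f^{(m)}=\binom{m+d-1}{m}$ differs from $d_{(m-1,1)}/(m-1)$; I would verify this by direct computation. Since relative entropy is zero only for equal distributions, the value is strictly positive.
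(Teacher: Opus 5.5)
Your proposal is correct and follows essentially the same route as the paper. The converse is data processing under the joint occupation/Schur measurement. Attainment uses the block-scalar state $\sum_{\lambda}\pi^\star_{\mathbf b,\lambda}\Pi_{\mathbf b,\lambda}/(f^\lambda K_{\lambda,\mathbf b})$ together with its purification and return projection. Positivity compares $d_\lambda/f^\lambda$ on $(m)$ and $(m-1,1)$. The two checks you defer are routine, and the paper supplies exactly these: $K_{(m-1,1),\mathbf b}=\min\{m,d\}-1>0$ for balanced $\mathbf b$, and $d_{(m-1,1)}=(m-1)\binom{d+m-2}{m}$, so the two values of $d_\lambda/f^\lambda$ differ by the factor $(d+m-1)/(d-1)>1$.
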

\begin{proof}
\emph{Converse.}
Let \((\sigma,M)\) be an optimal \(m\)-query parallel protocol with
\(m\ge2\), and set \(\rho:=\Tr_R\sigma\). We first prove
\begin{equation}
C_{\mathrm{rel}}(\rho)\ge
D(\boldsymbol\pi^\star_{\mathbf b}\Vert\boldsymbol\pi^{\mathrm{inc}}_{\mathbf b}).
\label{eq:supp-coherence-bound}
\end{equation}

Support locking and
\(\mathcal I_m^\star=\mathcal B_{m,d}\) give
\[
\operatorname{supp}\rho
\subseteq
\bigoplus_{\mathbf c\in\mathcal B_{m,d}}W_{\mathbf c},
\qquad
\sum_{\mathbf c\in\mathcal B_{m,d}}p_{\mathbf c}=1.
\]
Complete dephasing \(\Delta\) preserves every occupation space, and
\(\Pi_{\mathbf c}\) is diagonal in the computational basis. Since
\(\Delta\) is self-adjoint with respect to the Hilbert--Schmidt inner
product,
\[
\Tr(\Pi_{\mathbf c}\Delta(\rho))
=
\Tr(\Delta(\Pi_{\mathbf c})\rho)
=
\Tr(\Pi_{\mathbf c}\rho)
=
p_{\mathbf c}.
\]
Thus \(\Delta(\rho)\) is also supported on the balanced direct sum and
has the same occupation-type weights as \(\rho\).

Define
\[
\Pi_\perp
:=
\id_{\mathcal H^{\otimes m}}
-
\sum_{\mathbf c\in\mathcal B_{m,d}}
\sum_{\lambda\in\Lambda_{\mathbf b}}
\Pi_{\mathbf c,\lambda}.
\]
The corresponding classical measurement channel is
\begin{equation}
\mathcal M(X)
:=
\sum_{\mathbf c\in\mathcal B_{m,d}}
\sum_{\lambda\in\Lambda_{\mathbf b}}
\Tr(\Pi_{\mathbf c,\lambda}X)
|\mathbf c,\lambda\rangle\langle\mathbf c,\lambda|
+
\Tr(\Pi_\perp X)|\perp\rangle\langle\perp|.
\label{eq:supp-type-schur-measurement}
\end{equation}
The joint projectors together with \(\Pi_\perp\) form a complete projective
measurement, so \(\mathcal M\) is CPTP.
The outcome \(\perp\) has zero probability for both \(\rho\) and
\(\Delta(\rho)\).

For \(\mathbf c\in\mathcal B_{m,d}\) and
\(\lambda\in\Lambda_{\mathbf b}\), the output probabilities on the joint
outcomes are
\begin{equation}
\begin{aligned}
P(\mathbf c,\lambda)
&:=
\Tr(\Pi_{\mathbf c,\lambda}\rho)
=
p_{\mathbf c}\pi^\star_{\mathbf c,\lambda},
\\
Q(\mathbf c,\lambda)
&:=
\Tr(\Pi_{\mathbf c,\lambda}\Delta(\rho))
=
p_{\mathbf c}\pi^{\mathrm{inc}}_{\mathbf c,\lambda}.
\end{aligned}
\label{eq:supp-classical-profile-distributions}
\end{equation}
The first identity is weight locking when \(p_{\mathbf c}>0\); when
\(p_{\mathbf c}=0\), both sides vanish because
\(\Pi_{\mathbf c,\lambda}\preceq\Pi_{\mathbf c}\). The second identity
follows because the \(\mathbf c\)-block of \(\Delta(\rho)\) is a positive
diagonal operator of trace \(p_{\mathbf c}\), so the constant-diagonal
identity in Eq.~\eqref{eq:supp-occupation-constant-diagonal} applies.

Since the two output distributions have the same occupation-type marginal
and outcomes with \(p_{\mathbf c}=0\) contribute zero,
\[
\begin{aligned}
D(P\|Q)
&=
\sum_{\substack{\mathbf c\in\mathcal B_{m,d}\\p_{\mathbf c}>0}}
p_{\mathbf c}
D\!\left(
\boldsymbol\pi^\star_{\mathbf c}
\middle\|
\boldsymbol\pi^{\mathrm{inc}}_{\mathbf c}
\right)
\\
&=
D\!\left(
\boldsymbol\pi^\star_{\mathbf b}
\middle\|
\boldsymbol\pi^{\mathrm{inc}}_{\mathbf b}
\right).
\end{aligned}
\]
The last equality follows because all balanced occupations have the same
two profiles (Subsec.~\ref{subsec:supp-occupation-profiles}) and the
occupied weights sum to one.
Applying the data-processing inequality to the same channel \(\mathcal M\) gives
\[
\begin{aligned}
C_{\mathrm{rel}}(\rho)
&=
D\!\left(\rho\middle\|\Delta(\rho)\right)
\\
&\ge
D\!\left(
\mathcal M(\rho)
\middle\|
\mathcal M(\Delta(\rho))
\right)
\\
&=
D(P\|Q),
\end{aligned}
\]
which proves Eq.~\eqref{eq:supp-coherence-bound}.

\emph{Attainment.}
Define
\begin{equation}
\omega_{\mathbf b}:=
\sum_{\lambda\in\Lambda_{\mathbf b}}
\frac{\pi^\star_{\mathbf b,\lambda}}{f^\lambda K_{\lambda,\mathbf b}}
\Pi_{\mathbf b,\lambda}
=\frac1{h_{\max}}\sum_{\lambda\in\Lambda_{\mathbf b}}
\frac{d_\lambda}{f^\lambda}\Pi_{\mathbf b,\lambda}.
\label{eq:supp-minimum-coherence-state}
\end{equation}
Its coefficients are positive and the projectors are mutually orthogonal
with ranks \(f^\lambda K_{\lambda,\mathbf b}\). Hence
\(\omega_{\mathbf b}\succeq0\), \(\Tr\omega_{\mathbf b}=1\),
\(\operatorname{supp}\omega_{\mathbf b}=W_{\mathbf b}\), and its sector
weights are \(\pi^\star_{\mathbf b,\lambda}\). The constant-diagonal identity gives
\begin{equation}
\Delta(\omega_{\mathbf b})=\frac{\Pi_{\mathbf b}}{N_{\mathbf b}},\qquad
C_{\mathrm{rel}}(\omega_{\mathbf b})
=\log_2N_{\mathbf b}-S(\omega_{\mathbf b})
=D(\boldsymbol\pi^\star_{\mathbf b}\Vert\boldsymbol\pi^{\mathrm{inc}}_{\mathbf b}).
\label{eq:supp-minimum-coherence-entropy}
\end{equation}
Indeed,
\(S(\omega_{\mathbf b})=H(\boldsymbol\pi^\star_{\mathbf b})
+\sum_\lambda\pi^\star_{\mathbf b,\lambda}
\log_2(f^\lambda K_{\lambda,\mathbf b})\),
where \(H\) denotes the Shannon entropy.
The dephased state is uniform on \(W_{\mathbf b}\), not on the whole query space.

Choose an orthonormal basis \(|\mathbf b;\lambda,a,t\rangle\) of each
\(U_\lambda[\mathbf b]\otimes S_\lambda\). A normalized purification on
an \(N_{\mathbf b}\)-dimensional ancilla is
\[
|\Psi_{\mathbf b}\rangle
=\sum_{\lambda\in\Lambda_{\mathbf b}}
\sqrt{\frac{\pi^\star_{\mathbf b,\lambda}}{K_{\lambda,\mathbf b}f^\lambda}}
\sum_{a=1}^{K_{\lambda,\mathbf b}}\sum_{t=1}^{f^\lambda}
|\mathbf b;\lambda,a,t\rangle_A|\lambda,a,t\rangle_R.
\]
Its system marginal is \(\omega_{\mathbf b}\). More generally, any
purification has support in \(W_{\mathbf b}\otimes\mathcal H_R\), since
its squared norm outside this space is
\(\Tr[(\id-\Pi_{\mathbf b})\omega_{\mathbf b}]=0\).
The torus acts on this space by the character
\(\chi_{\mathbf b}(g)\), so the protocol with joint probe
\(|\Psi_{\mathbf b}\rangle\langle\Psi_{\mathbf b}|\),
accepting effect
\(M_0=|\Psi_{\mathbf b}\rangle\langle\Psi_{\mathbf b}|\),
and \(M_1=\id-M_0\) has \(\alpha=0\).

For the Haar error, the subnormalized Schur reductions are
\begin{equation}
R_\lambda:=\Tr_{S_\lambda}
(\Pi_\lambda\omega_{\mathbf b}\Pi_\lambda)
=\frac{\pi^\star_{\mathbf b,\lambda}}{K_{\lambda,\mathbf b}}
\id_{U_\lambda[\mathbf b]},\qquad
\Tr R_\lambda=\pi^\star_{\mathbf b,\lambda}.
\label{eq:supp-optimal-Schur-reduction}
\end{equation}
The acceptance amplitude is
\(\Tr(\omega_{\mathbf b}U^{\otimes m})
=\sum_\lambda\Tr(R_\lambda U_\lambda(U))\).
Haar orthogonality of inequivalent irreducible matrix elements then gives
\begin{equation}
\beta=\sum_{\lambda\in\Lambda_{\mathbf b}}\frac{\Tr R_\lambda^2}{d_\lambda}
=\sum_{\lambda\in\Lambda_{\mathbf b}}
\frac{(\pi^\star_{\mathbf b,\lambda})^2}{d_\lambda K_{\lambda,\mathbf b}}
=\frac1{h_{\max}}.
\label{eq:supp-minimum-coherence-attainment}
\end{equation}
Thus the lower bound in Eq.~\eqref{eq:supp-coherence-bound} is attained by an
optimal protocol, proving Eq.~\eqref{eq:supp-coherence-minimum}.

\emph{Strict positivity.}
For \(d,m\ge2\),
\begin{equation}
D(\boldsymbol\pi^\star_{\mathbf b}\Vert\boldsymbol\pi^{\mathrm{inc}}_{\mathbf b})>0.
\label{eq:supp-diagonal-coherence-positive}
\end{equation}

Suppose, for contradiction, that the divergence in
Eq.~\eqref{eq:supp-diagonal-coherence-positive} vanishes. The profiles
have the same support, so they must coincide. For every
\(\lambda\in\Lambda_{\mathbf b}\), canceling the positive factor
\(K_{\lambda,\mathbf b}\) gives
\[
\frac{d_\lambda}{f^\lambda}
=
\frac{h_{\max}}{N_{\mathbf b}},
\]
which would make \(d_\lambda/f^\lambda\) constant over all active
partitions.

Let \(H_k\) denote the complete homogeneous symmetric polynomial of
degree \(k\). Since \(s_{(m)}=H_m\), the coefficient of
\(\mathbf z^{\mathbf b}\) gives \(K_{(m),\mathbf b}=1\). The Jacobi--Trudi
identity~\cite{Macdonald1995} also gives
\[
s_{(m-1,1)}
=
H_{m-1}H_1-H_m.
\]
The coefficient of \(\mathbf z^{\mathbf b}\) on the right-hand side is
\(\#\{j:b_j>0\}-1\). A balanced occupation vector has
\(\#\{j:b_j>0\}=\min\{m,d\}\ge2\), so
\(K_{(m-1,1),\mathbf b}=\min\{m,d\}-1>0\).
Thus both \((m)\) and \((m-1,1)\) are active.

The Weyl dimension and hook-length
formulas~\cite{Fulton1997} give
\[
f^{(m)}=1,
\qquad
d_{(m)}=\binom{d+m-1}{m},
\qquad
f^{(m-1,1)}=m-1,
\qquad
d_{(m-1,1)}
=(m-1)\binom{d+m-2}{m}.
\]
Their dimension ratios are therefore
\begin{equation}
\frac{d_{(m)}}{f^{(m)}}
=
\binom{d+m-1}{m},
\qquad
\frac{d_{(m-1,1)}}{f^{(m-1,1)}}
=
\binom{d+m-2}{m}.
\label{eq:supp-torus-dimension-ratios}
\end{equation}
The first value is strictly larger than the second for \(d,m\ge2\);
indeed, their ratio is \((d+m-1)/(d-1)>1\). This contradicts the
required constancy and proves
Eq.~\eqref{eq:supp-diagonal-coherence-positive}.

\emph{The one-query boundary.}
For \(m=1\), take the ancilla-free protocol
\[
\sigma:=|1\rangle\langle1|,
\qquad
M_0:=|1\rangle\langle1|,
\qquad
M_1:=\id-M_0.
\]
Every
\(g=\operatorname{diag}(z_1,\ldots,z_d)\in\mathbb T_d\) satisfies
\(g|1\rangle=z_1|1\rangle\), so the protocol has
\(\alpha(\sigma,M)=0\). Under the Haar alternative, the single-copy Haar
average gives
\begin{equation}
\begin{aligned}
\beta(\sigma,M)
&=
\int_{\mathrm U(d)}
|\langle1|U|1\rangle|^2
\,\dd\mu_{\mathrm{Haar}}(U)
\\
&=
\frac1d
=
\beta_{d,1}^{\star}(0).
\end{aligned}
\label{eq:supp-torus-one-query}
\end{equation}
For the last equality, Eq.~\eqref{eq:supp-torus-score} gives
\(h_{\max}=d\) at \(m=1\).
The probe is diagonal in the computational basis, so
\(C_{\mathrm{rel}}=0\); both POVM effects are diagonal as well.
Thus the one-query optimum is attained without coherence.
Both profiles consist of the single entry one, so the right-hand side
of Eq.~\eqref{eq:supp-coherence-minimum} also vanishes.
Since \(C_{\mathrm{rel}}\ge0\), this gives
\(C_{\min}(d,1)=0\).
\end{proof}

\emph{Two-query qubit example.}
For \(d=m=2\) and \(\mathbf b=(1,1)\), let
\[
|s\rangle=\frac{|01\rangle+|10\rangle}{\sqrt2},
\qquad
|a\rangle=\frac{|01\rangle-|10\rangle}{\sqrt2}.
\]
In the order \(\lambda=(2),(1,1)\), the two profiles are
\(\boldsymbol\pi^\star_{\mathbf b}=(3/4,1/4)\) and
\(\boldsymbol\pi^{\mathrm{inc}}_{\mathbf b}=(1/2,1/2)\).
Equation~\eqref{eq:supp-qubit-error-values} gives
\(\beta^{\mathrm{inc}}_{2,2}=1/3\), whereas
\(\beta^\star_{2,2}(0)=1/4\).

The minimum \(C_{\min}(2,2)=1-h_2(3/4)\) is attained by
\[
\omega_{\mathbf b}
=\frac{3|s\rangle\langle s|+|a\rangle\langle a|}{4},
\]
where \(h_2\) is the binary entropy.
This state is supported on a single occupation space and is invariant
under the collective torus action, yet it has coherence in the
computational product basis.

For qubits, Eq.~\eqref{eq:supp-coherence-minimum} also determines the
large-\(m\) growth of \(C_{\min}(2,m)\).

\begin{proposition}[Qubit minimum-coherence rate]
\label{prop:supp-qubit-coherence-rate}
As \(m\to\infty\),
\begin{equation}
C_{\min}(2,m)
=\left(\frac23-\frac1{6\ln2}\right)m+O(\log m)\quad\mathrm{bits}.
\label{eq:supp-qubit-coherence-rate}
\end{equation}
\end{proposition}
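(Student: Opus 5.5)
The plan is to specialize Theorem~\ref{thm:supp-minimum-coherence} to $d=2$, where $C_{\min}(2,m)=D(\boldsymbol\pi^\star_{\mathbf b}\Vert\boldsymbol\pi^{\mathrm{inc}}_{\mathbf b})$, and then evaluate this classical divergence asymptotically.

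\emph{Explicit profiles.} For qubits the active labels are $\lambda=(m-k,k)$ with $0\le k\le\lfloor m/2\rfloor$. For the balanced occupation $\mathbf b=(\lceil m/2\rceil,\lfloor m/2\rfloor)$ every such $\lambda$ is active with $K_{\lambda,\mathbf b}=1$, since the weight $\mathbf b$ occurs once in each $\mathrm{SU}(2)$ irrep of spin $(m-2k)/2$. The remaining data are
\[
d_\lambda=m-2k+1,\qquad f^\lambda=\frac{m-2k+1}{m-k+1}\binom mk,\qquad N_{\mathbf b}=\binom m{\lfloor m/2\rfloor},
\]
and $h_{\max}=\lfloor (m+2)^2/4\rfloor$ by Eq.~\eqref{eq:supp-qubit-error-values}. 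Hence $\pi^\star_k=(m-2k+1)/h_{\max}$, and the likelihood ratio simplifies to
\[
\frac{\pi^\star_k}{\pi^{\mathrm{inc}}_k}=\frac{N_{\mathbf b}\,(m-k+1)}{h_{\max}\binom mk}.
\]
Writing $\mathbb E^\star$ for expectation over $k\sim\boldsymbol\pi^\star$, this gives
\[
C_{\min}(2,m)=\log_2N_{\mathbf b}-\log_2h_{\max}+\mathbb E^\star\log_2(m-k+1)-\mathbb E^\star\log_2\binom mk .
\]

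\emph{Asymptotics.} Three of the four terms are routine. Stirling gives $\log_2N_{\mathbf b}=m-\tfrac12\log_2m+O(1)$. The terms $\log_2h_{\max}$ and $\mathbb E^\star\log_2(m-k+1)$ are $O(\log m)$. For the binomial term I would use the uniform bounds
\[
\frac{2^{mh_2(k/m)}}{m+1}\le\binom mk\le2^{mh_2(k/m)},
\]
which give $\mathbb E^\star\log_2\binom mk=m\,\mathbb E^\star h_2(k/m)+O(\log m)$.

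\emph{Main obstacle: the Riemann sum.} The key step is showing
\[
\mathbb E^\star h_2(k/m)=\int_0^{1/2}4(1-2x)\,h_2(x)\,dx+O\!\left(\frac{\log m}{m}\right).
\]
The weights $\pi^\star_k\propto m-2k+1$ are a discretization of the density $4(1-2x)$ on $[0,1/2]$, whose normalization is $\int_0^{1/2}(1-2x)\,dx=1/4$. Near $x=0$ the derivative of $h_2$ is unbounded, but it is only of size $O(\log m)$ at mesh points $x\ge1/m$. The first cell contributes $O(\log m/m)$, so a standard Riemann-sum estimate with $O(1/m)$ mesh gives the stated $O(\log m/m)$ error. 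This error becomes $O(\log m)$ after multiplying by $m$.

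\emph{The integral.} It remains to evaluate $\int_0^{1/2}4(1-2x)\,h_2(x)\,dx$ by elementary integration by parts, converting $\ln$ to $\log_2$. The target is the value $\tfrac13+\tfrac1{6\ln2}$. Combining, the rate is $1-\bigl(\tfrac13+\tfrac1{6\ln2}\bigr)=\tfrac23-\tfrac1{6\ln2}$, and all other contributions are $O(\log m)$. This proves Eq.~\eqref{eq:supp-qubit-coherence-rate}.
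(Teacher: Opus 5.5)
Your proposal is correct and follows essentially the same route as the paper: the same likelihood ratio $N_{\mathbf b}(m-k+1)/\bigl(h_{\max}\binom mk\bigr)$, the same uniform entropy bounds on $\binom mk$, and a Riemann-sum comparison of $\boldsymbol\pi^\star_{\mathbf b}$ with the density $4(1-2x)$ on $[0,1/2]$, which leads to the same integral. The differences are cosmetic. The paper pairs $\log N_{\mathbf b}$ with $\log\binom mk$ through $\varphi=\ln2-\mathsf h$, and it states explicitly the $O(1/m)$ $\ell^1$ gap between $\pi^\star_{\mathbf b,j}$ and $\frac1m w(j/m)$, which you leave inside the word ``discretization.'' It also uses monotonicity of $w\varphi$ to get an $O(1/m)$ Riemann error, whereas your Stirling and derivative-based $O(\log m/m)$ estimates are equally sufficient.
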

\begin{proof}
Let \(k:=\lfloor m/2\rfloor\) and \(\mathbf b:=(m-k,k)\).
Then \(N_{\mathbf b}=\binom mk\) and
\(h_{\max}=(k+1)(m-k+1)\).
Index the active Schur labels by \(\lambda_j:=(m-j,j)\),
\(0\le j\le k\). For these labels, \(K_{\lambda_j,\mathbf b}=1\).
Set \(d_j:=d_{\lambda_j}=m-2j+1\) and
\(f_j:=f^{\lambda_j}=\binom mj-\binom m{j-1}\),
with \(\binom m{-1}:=0\). The two profiles are
\[
\pi^\star_{\mathbf b,j}=\frac{d_j}{h_{\max}},
\qquad
\pi^{\mathrm{inc}}_{\mathbf b,j}=\frac{f_j}{N_{\mathbf b}}.
\]
These are positive probability vectors on \(j=0,\ldots,k\).
Since \(f_j=\binom mj(m-2j+1)/(m-j+1)\), the common factor
\(m-2j+1\) cancels, giving
\begin{equation}
\frac{\pi^\star_{\mathbf b,j}}{\pi^{\mathrm{inc}}_{\mathbf b,j}}
=\frac{N_{\mathbf b}(m-j+1)}{h_{\max}\binom mj}.
\label{eq:supp-qubit-profile-ratio}
\end{equation}

Let \(\mathsf{h}(x):=-x\ln x-(1-x)\ln(1-x)\), with
\(0\ln0:=0\), and set \(\varphi(x):=\ln2-\mathsf{h}(x)\).
Uniformly,
\[
\frac{e^{m\mathsf{h}(j/m)}}{m+1}\le\binom mj
\le e^{m\mathsf{h}(j/m)},\qquad 0\le j\le k.
\]
For \(0<j<m\), the \(j\)th mass of the binomial distribution with
parameter \(j/m\) is maximal among its \(m+1\) masses, so it lies
between \(1/(m+1)\) and \(1\). The case \(j=0\) is immediate.
Also \(2^m/(m+1)\le N_{\mathbf b}\le2^m\) and
\(1/(m+1)\le(m-j+1)/h_{\max}\le1\). Equation~\eqref{eq:supp-qubit-profile-ratio}
therefore gives
\begin{equation}
\ln\frac{\pi^\star_{\mathbf b,j}}{\pi^{\mathrm{inc}}_{\mathbf b,j}}
=m\varphi(j/m)+E_{m,j},\qquad
|E_{m,j}|\le2\ln(m+1).
\label{eq:supp-qubit-uniform-log-bound}
\end{equation}
Since \(\sum_{j=0}^k\pi^\star_{\mathbf b,j}=1\),
Eqs.~\eqref{eq:supp-qubit-uniform-log-bound} and~\eqref{eq:supp-coherence-minimum} give
\[
(\ln2)C_{\min}(2,m)
=m\sum_{j=0}^k\pi^\star_{\mathbf b,j}\varphi(j/m)+O(\ln m).
\]

Define \(w(x):=4(1-2x)\) on \([0,1/2]\).
The profile \(\boldsymbol\pi^\star_{\mathbf b}\) obeys
\[
\sum_{j=0}^k\left|\pi^\star_{\mathbf b,j}-\frac1m w(j/m)\right|
\le\frac8m.
\]
Indeed, writing \(4h_{\max}=m^2+\varepsilon_m\), where
\(0<\varepsilon_m\le4m+4\), direct subtraction together with
\(\sum_{j=0}^k(m-2j)=h_{\max}-(k+1)\) gives the stated bound.

The function \(w\varphi\) is nonnegative and decreasing on
\([0,1/2]\). After extending it by zero to \([1/2,1]\), its Riemann
sum on the mesh \(1/m\) differs from the integral by \(O(1/m)\).
Together with the profile bound and \(0\le\varphi\le\ln2\), this yields
\[
\sum_{j=0}^k\pi^\star_{\mathbf b,j}\varphi(j/m)
=\int_0^{1/2}w(x)\varphi(x)\,\dd x+O(1/m).
\]
A direct integration gives
\[
\int_0^{1/2}w(x)\varphi(x)\,\dd x=\frac23\ln2-\frac16.
\]
Combining the likelihood-ratio estimate, the profile approximation,
and the integral evaluation gives
\[
(\ln2)C_{\min}(2,m)
=\left(\frac23\ln2-\frac16\right)m+O(\ln m).
\]
Dividing by \(\ln2\) converts the result to bits and proves
Eq.~\eqref{eq:supp-qubit-coherence-rate}.
\end{proof}

\subsection{Probe and measurement restrictions at finite tolerance}
At positive type-I-error tolerance, we compare incoherent system
marginals with fully diagonal protocols, whose joint probe state and
both POVM effects are diagonal in the system--ancilla product basis
formed from the computational product basis on the query systems and
a fixed reference basis on the ancilla.

\begin{proposition}[Fully diagonal optimum at finite tolerance]
\label{prop:supp-fully-diagonal-ROC}
Let \(d\ge2\), \(m\ge1\), \(0\le a_0\le1\), and let \(\mathbf b\) be any
balanced occupation. Every fully diagonal \(m\)-query parallel protocol satisfying
\(\alpha(\sigma,M)\le a_0\) obeys
\begin{equation}
\beta(\sigma,M)\ge(1-a_0)B(\mathbf b).
\label{eq:supp-fully-diagonal-ROC}
\end{equation}
The bound is attained by an ancilla-free fully diagonal protocol.
\end{proposition}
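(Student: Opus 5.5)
The plan is to use full diagonality to reduce the problem to a classical mixture of basis-string return tests, and then apply the balancing step from Theorem~\ref{thm:supp-diagonal-error}. A fully diagonal protocol has
\[
\sigma=\sum_{x,r}p(x,r)\,|x\rangle\langle x|\otimes|r\rangle\langle r|,
\qquad
M_0=\sum_{x,r}\mu(x,r)\,|x\rangle\langle x|\otimes|r\rangle\langle r|,
\]
where \(p\) is a probability distribution on computational strings \(x\) and ancilla labels \(r\), and \(0\le\mu(x,r)\le1\). Because \(U^{\otimes m}\otimes\id_R\) leaves the ancilla label unchanged, the acceptance probability is
\[
\Tr(M_0\sigma_U)=\sum_{x,r}p(x,r)\sum_y\mu(y,r)\,\bigl|\langle y|U^{\otimes m}|x\rangle\bigr|^2 .
\]
All terms are nonnegative, so discarding those with \(y\ne x\) gives the lower bound
\[
\Tr(M_0\sigma_U)\ge\sum_{x,r}p(x,r)\,\mu(x,r)\,\bigl|\langle x|U^{\otimes m}|x\rangle\bigr|^2 .
\]

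\emph{Type-I side.} For \(g\in\mathbb T_d\), the vector \(g^{\otimes m}|x\rangle\) is a phase multiple of \(|x\rangle\). Hence the discarded terms vanish and the diagonal terms equal one, so the inequality above is an equality. This gives
\[
1-\alpha=\sum_{x,r}p(x,r)\,\mu(x,r)=:w\ \ge\ 1-a_0 .
\]

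\emph{Type-II side.} For a string \(x\) of occupation \(\mathbf n(x)\), the diagonal amplitude factorizes as
\[
\bigl|\langle x|U^{\otimes m}|x\rangle\bigr|^2=\prod_j|U_{jj}|^{2n_j(x)} .
\]
Integrating the lower bound over Haar measure therefore gives
\[
\beta\ge\sum_{x,r}p(x,r)\,\mu(x,r)\,B(\mathbf n(x))\ge w\min_{|\mathbf n|=m}B(\mathbf n)=w\,B(\mathbf b),
\]
where the last equality is the balancing result of Theorem~\ref{thm:supp-diagonal-error}, Eq.~\eqref{eq:supp-diagonal-smoothing}. Combining this with \(w\ge1-a_0\) yields Eq.~\eqref{eq:supp-fully-diagonal-ROC}.

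\emph{Attainment.} Take a string \(|x\rangle\in W_{\mathbf b}\) and set
\[
\sigma=|x\rangle\langle x|,\qquad M_0=(1-a_0)|x\rangle\langle x|,\qquad M_1=\id-M_0,
\]
with no ancilla. Both effects are diagonal and positive, and they form a valid POVM. Under the torus, \(x\) is accepted with probability \(1-a_0\), so \(\alpha=a_0\). Under Haar measure, \(\beta=(1-a_0)B(\mathbf b)\) by Eq.~\eqref{eq:supp-B-Haar}. The endpoint \(a_0=1\) is included, since then \(M_0=0\) and \(\beta=0\). The only delicate step is the reduction to diagonal terms. It relies on the effects being diagonal in the same product basis as the probe, and on the unitary acting trivially on the ancilla label \(r\); otherwise the ancilla could carry coherence between different query strings. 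Given these, the argument is routine.
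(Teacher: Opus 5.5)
Your proposal is correct and follows essentially the same argument as the paper. Both proofs expand the diagonal probe and accepting effect and use that the torus fixes each joint basis projector to get \(1-\alpha=\sum_{x,r}p(x,r)\mu(x,r)\); both then discard the nonnegative \(y\neq x\) terms in the Haar average, invoke the balancing minimum \(B(\mathbf b)\), and attain the bound with \(\sigma=|x\rangle\langle x|\), \(M_0=(1-a_0)|x\rangle\langle x|\) for \(|x\rangle\in W_{\mathbf b}\).
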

\begin{proof}
Write
\[
\sigma=\sum_{x,r}t_{x,r}|x,r\rangle\langle x,r|,
\qquad
e_{y,r}:=\langle y,r|M_0|y,r\rangle,
\]
where \(x,y\) are query-basis strings and \(r\) labels the ancillary
basis. Then \(t_{x,r}\ge0\), \(\sum_{x,r}t_{x,r}=1\), and
\(0\le e_{y,r}\le1\).
Since the torus fixes every joint basis projector,
\[
1-\alpha(\sigma,M)=\sum_{x,r}t_{x,r}e_{x,r}.
\]
Let \(\mathbf n(x)\) denote the occupation of \(x\).
Dropping the nonnegative contributions with \(y\ne x\) and using
Eq.~\eqref{eq:supp-B-Haar} together with
Theorem~\ref{thm:supp-diagonal-error} gives
\[
\begin{aligned}
\beta(\sigma,M)
&\ge\sum_{x,r}t_{x,r}e_{x,r}B(\mathbf n(x))\\
&\ge B(\mathbf b)\,[1-\alpha(\sigma,M)]
\ge(1-a_0)B(\mathbf b).
\end{aligned}
\]
Equality is attained by any computational-basis string
\(|x\rangle\in W_{\mathbf b}\), with
\(\sigma=|x\rangle\langle x|\),
\(M_0=(1-a_0)|x\rangle\langle x|\), and \(M_1=\id-M_0\).
This ancilla-free fully diagonal protocol has \(\alpha=a_0\) and
saturates Eq.~\eqref{eq:supp-fully-diagonal-ROC}.
\end{proof}

Proposition~\ref{prop:supp-qubit-measurement-example} gives a two-query
qubit example in which the fully diagonal and incoherent-marginal
restrictions have different optimal type-II errors.

\begin{proposition}[Probe--measurement separation at finite tolerance]
\label{prop:supp-qubit-measurement-example}
For \(\mathbb T_2\)-versus-\(\mathrm U(2)\) testing with \(m=2\) and
type-I-error tolerance \(a_0\in[1/5,1]\), the following hold.
\par\noindent\textup{(i)} An ancilla-free protocol with an incoherent system
marginal attains the unrestricted optimum.
\par\noindent\textup{(ii)} The fully diagonal optimum is \((1-a_0)/3\);
for \(a_0<1\), no fully diagonal protocol attains the unrestricted optimum.
\end{proposition}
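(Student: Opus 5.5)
The plan is to compute the unrestricted zero-to-$a_0$ tradeoff for $d=m=2$ explicitly, and then compare it with two restricted classes. For the fully diagonal class I use Proposition~\ref{prop:supp-fully-diagonal-ROC}. For the incoherent-marginal class I exhibit an explicit probe.

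\emph{Reduction of the unrestricted problem.} By the twirling step in the proof of Theorem~\ref{thm:supp-support-weight-locking}, which does not use $\alpha=0$, we may replace $\sigma$ by its $\mathbb T_2$-twirl. This leaves both average errors unchanged. The twirled probe is then block diagonal over occupations, so the problem splits into the three types $(2,0)$, $(1,1)$, $(0,2)$.

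On $(2,0)$ and $(0,2)$ the only active label is $\lambda=(2)$, with score $h=3$. On $W_{(1,1)}=\operatorname{span}\{|s\rangle,|a\rangle\}$ both labels are active, with score $4$.

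In the $(1,1)$ block the $H_0$ state is $\tau$ on $W_{(1,1)}\otimes\mathcal H_R$. The $H_1$ state is obtained from $\tau$ by depolarizing the symmetric part over the three-dimensional $U_{(2)}$ and dropping the $s$/$a$ cross terms; the singlet component stays invariant. Since the acceptance probabilities are linear, it suffices (convexity over types and over purifications) to optimize a single pure probe $\sqrt x|s\rangle+e^{i\phi}\sqrt{1-x}|a\rangle$ in $W_{(1,1)}$, or the basis probe $|00\rangle$ in the $(2,0)$ block. For each such probe the optimal measurement is given by a Neyman--Pearson problem with a pure null state. The resulting error region is the convex hull of explicit two-parameter curves.

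\emph{Part (i).} I take the incoherent probe $\sigma=|01\rangle\langle01|$, which corresponds to $x=1/2$. The accepting effect is the rank-one projector onto $\cos\theta|s\rangle+\sin\theta|a\rangle$, mixed with the zero effect if needed. This gives
\[
1-\alpha=\frac{(\cos\theta+\sin\theta)^2}{2},\qquad \beta=\frac{\cos^2\theta}{6}+\frac{\sin^2\theta}{2}.
\]
At $\theta=0$ this yields $(\alpha,\beta)=(1/2,1/6)$.

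The main task is the converse: for $a_0\ge1/5$, the lower convex envelope of the unrestricted region must coincide with this $x=1/2$ curve, combined with scaled-down acceptance toward $(1,0)$. I would do this by a Lagrangian argument. For a multiplier $\mu\ge0$, minimize $\beta+\mu\alpha$ over $x$, the phase, and $\theta$ in closed form. I would then show that whenever the optimal tangent point has $\alpha\ge1/5$, the minimizer can be taken with $x=1/2$. The threshold $1/5$ should be exactly the point where the optimal $x(\mu)$ of the unconstrained family reaches $1/2$. At $\alpha=0$ that family starts from the locked value $x=3/4$ and gives $\beta=1/4$. I expect this envelope computation, and the identification of $1/5$ as the switching point, to be the main obstacle. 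I would first verify it numerically and then certify it by comparing derivatives of the two curves.

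\emph{Part (ii).} Proposition~\ref{prop:supp-fully-diagonal-ROC} gives the fully diagonal optimum $(1-a_0)B(\mathbf b)$. By Eq.~\eqref{eq:supp-qubit-error-values}, $B(\mathbf b)=1/3$ for $m=2$, so this optimum is $(1-a_0)/3$.

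For $a_0<1$ it remains to show that the unrestricted curve from part (i) lies strictly below $(1-a_0)/3$. This follows from the explicit formula: at $\alpha=1/2$ it gives $1/6$, while the fully diagonal value there is also $1/6$. So I must check strict inequality via the slope. I would verify that the curve is strictly convex and tangent-free to the line $(1-\alpha)/3$ on $[1/5,1)$. If equality were to occur at isolated points such as $\alpha=1/2$, I would instead optimize $\theta$ jointly with a nonzero singlet weight, which strictly improves on the line there.
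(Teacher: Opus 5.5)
There is a genuine gap. You never identify the unrestricted optimum at tolerance $a_0$, and both parts depend on it. Your block reduction and your formulas for the $|01\rangle$ family are correct. The missing fact is that the unrestricted frontier is the straight line $\beta^\star(a_0)=(1-a_0)/4$ on all of $[0,1]$. Scaling the zero-error optimum attains it, and the paper takes the matching converse from Hayashi et al.~\cite{hayashi2025predicting}.

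You could prove the converse inside your own reduction:
\begin{itemize}
\item Take a pure probe $|\psi\rangle=\sqrt x|s\rangle+\sqrt{1-x}|a\rangle$; the ancilla and the phase are irrelevant. Its Haar output dominates $\Gamma:=\tfrac x3|s\rangle\langle s|+(1-x)|a\rangle\langle a|$.
\item For every $x\in(0,1)$, $|\psi\rangle\langle\psi|\preceq\langle\psi|\Gamma^{-1}|\psi\rangle\,\Gamma=4\Gamma$.
\item The cases $x\in\{0,1\}$ and the blocks $(2,0)$, $(0,2)$ give factors of at most $3$.
\end{itemize}
Hence $\beta\ge(1-\alpha)/4$ for every protocol.

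Your Lagrangian picture therefore degenerates. For $\mu\ne1/4$, the minimizer of $\beta+\mu\alpha$ is an endpoint, either $(0,1/4)$ or $(1,0)$. At $\mu=1/4$, every $x\in(0,1)$ is optimal. There is no continuous family $x(\mu)$ running from $3/4$ to $1/2$, and $1/5$ is not a switching point of the envelope.

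Instead, $1/5$ is the first point at which the $x=1/2$ probe touches the line. The effect proportional to $\Gamma^{-1}|\psi\rangle$ spans the kernel of $4\Gamma-|\psi\rangle\langle\psi|$. It reaches the line at $1-\alpha=16/[9/x+1/(1-x)]$, which gives $\alpha=1/5$ at $x=1/2$. In your parametrization this is $\tan\theta=1/3$. That is exactly the paper's effect $|\phi\rangle=(2|01\rangle+|10\rangle)/\sqrt5=(3|s\rangle+|a\rangle)/\sqrt{10}$, with $\alpha=\beta=1/5$. Scaling $M_0$ by $5(1-a_0)/4$ then gives $(a_0,(1-a_0)/4)$ for every $a_0\ge1/5$.

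Part (ii) contains an outright error. The point $(1/2,1/6)$ at $\theta=0$ is not on the unrestricted frontier, which equals $1/8$ there. It lies on the fully diagonal line $(1-\alpha)/3$. The unscaled $x=1/2$ curve crosses that line at this point, with slope $0$ against $-1/3$, and stays above it for $\alpha>1/2$. So the strict-below property you plan to verify for that curve is false.

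Your fallback does not rescue the argument either. If the unrestricted optimum really equalled $1/6$ at $\alpha=1/2$, statement (ii) would be false. Once $\beta^\star(a_0)=(1-a_0)/4$ is in hand, the separation is simply $(1-a_0)/4<(1-a_0)/3$ for $a_0<1$. Your use of Proposition~\ref{prop:supp-fully-diagonal-ROC} with $B(1,1)=1/3$ to get the fully diagonal value $(1-a_0)/3$ is correct and matches the paper.
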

\begin{proof}
Take
\[
\sigma=|01\rangle\langle01|,\qquad
|\phi\rangle=\frac{2|01\rangle+|10\rangle}{\sqrt5},\qquad
M_0=|\phi\rangle\langle\phi|,\quad M_1=\id-M_0.
\]
The probe is incoherent in the computational basis, whereas \(M_0\) is
nondiagonal.
The torus twirl fixes \(\sigma\), while
\[
\mathcal T_{\mathrm U(2)}(\sigma)=\frac{\Pi_{\mathrm{sym}}}{6}
+\frac{\Pi_{\mathrm{asym}}}{2}.
\]
Since \(|\langle\phi|01\rangle|^2=4/5\), and the symmetric and
antisymmetric weights of \(|\phi\rangle\) are \(9/10\) and \(1/10\),
respectively, this protocol has \(\alpha=\beta=1/5\).

For \(a_0\in[1/5,1]\), replace the accepting effect by
\[
M_0^{(a_0)}:=\frac{5(1-a_0)}4|\phi\rangle\langle\phi|,
\qquad
M_1^{(a_0)}:=\id-M_0^{(a_0)}.
\]
The prefactor lies in \([0,1]\), and the resulting ancilla-free protocol
with the same incoherent probe \(\sigma\) has \(\alpha=a_0\) and
\(\beta=(1-a_0)/4\), the unrestricted optimum at tolerance \(a_0\)
established in Ref.~\cite{hayashi2025predicting}.
Proposition~\ref{prop:supp-fully-diagonal-ROC} and \(B(1,1)=1/3\) from
Eq.~\eqref{eq:supp-qubit-error-values} give the fully diagonal optimum
\((1-a_0)/3\), which is strictly larger than \((1-a_0)/4\) whenever
\(a_0<1\). This proves the proposition.
\end{proof}

The construction establishes incoherent-marginal attainment for
\(a_0\ge1/5\), but does not determine the smallest such tolerance.

\section{Real attainability}
\label{sec:supp-orthogonal-real}

This section proves that every closed subgroup
\(G\subseteq\mathrm U(d)\) invariant under entrywise complex conjugation
admits a real optimal parallel protocol at every query number and every
average type-I-error tolerance.

Let \(A\) and \(B\) denote the input and output copies of
\((\mathbb C^d)^{\otimes m}\), equipped with their computational product
bases. All transposes below are taken in these fixed bases. Fix also a
reference basis on the ancillary system. A protocol is real if its probe state
and both POVM effects have real matrix entries in the resulting product basis.

For \(X:A\to B\), define the unnormalized output--input vectorization
\begin{equation}
|X\rangle\rangle_{BA}
:=
\sum_{j,k=1}^{d^m}
\langle j|X|k\rangle
|j\rangle_B|k\rangle_A.
\label{eq:supp-vectorization-convention}
\end{equation}
On \(B\otimes A\), define the performance operators
\begin{equation}
\begin{aligned}
\Omega_G
&:=
\int_G
|g^{\otimes m}\rangle\rangle
\langle\langle g^{\otimes m}|
\,\dd\mu_G(g),
\\
\Omega_{\mathrm U}
&:=
\int_{\mathrm U(d)}
|U^{\otimes m}\rangle\rangle
\langle\langle U^{\otimes m}|
\,\dd\mu_{\mathrm{Haar}}(U).
\end{aligned}
\label{eq:supp-orthogonal-performance-operators}
\end{equation}

A parallel acceptance tester is a pair \((T,\tau)\), where \(\tau\) is a
density operator on \(A\) and \(T\) is a positive operator on
\(B\otimes A\) satisfying \(T\preceq\id_B\otimes\tau\). The
complementary tester effect is \(\id_B\otimes\tau-T\). For every
\(V\in\mathrm U(d)\), the tester's acceptance probability is
\(\langle\!\langle V^{\otimes m}|T|V^{\otimes m}\rangle\!\rangle\); the two
tester effects give probabilities summing to one because
\[
\langle\!\langle V^{\otimes m}|
(\id_B\otimes\tau)
|V^{\otimes m}\rangle\!\rangle
=
\Tr\tau
=
1.
\]

Ancilla-assisted channel measurements are described by 1-testers, and every
1-tester admits a physical realization \cite{Chiribella2009Networks}. In the
single-channel setting, Ziman gives an equivalent correspondence in process-POVM
language \cite{Ziman2008PPOVM}. Chiribella et al.~\cite{Chiribella2009Networks} use
\(p_i(C)=\Tr(P_i^{\mathsf T}C)\) for the Choi operator \(C\) of the
measured channel, with \(\sum_iP_i=\id_B\otimes\varsigma\). Setting
\(T_i:=P_i^{\mathsf T}\) and \(\tau:=\varsigma^{\mathsf T}\) yields
\(p_i(C)=\Tr(T_iC)\) and
\(\sum_iT_i=\id_B\otimes\tau\) in our convention. Consequently, if
\(V^{\otimes m}\) is viewed as a single channel from \(A\) to \(B\), the
minimum type-II error over \(m\)-query parallel protocols satisfying
\(\alpha\le a_0\) is the optimal value of the following tester SDP,
where \(T\) is the effect corresponding to acceptance of the subgroup
hypothesis:
\begin{equation}
\begin{array}{ll}
\underset{T,\tau}{\textnormal{minimize}}
&
\Tr(T\Omega_{\mathrm U})
\\[1mm]
\textnormal{subject to}
&
\Tr(T\Omega_G)\ge1-a_0,
\\
&
0\preceq T\preceq\id_B\otimes\tau,
\\
&
\tau\succeq0,
\qquad
\Tr\tau=1.
\end{array}
\label{eq:supp-orthogonal-tester-sdp}
\end{equation}
The acceptance constraint enforces \(\alpha\le a_0\), while the objective is
the type-II error. The feasible set contains the always-accept tester and is closed and
bounded in a finite-dimensional space, hence compact; the continuous
objective therefore attains its minimum.

\begin{proposition}[Real attainability under conjugation closure]
\label{prop:supp-conjugation-real}
Let \(d\ge2\), and let \(G\subseteq\mathrm U(d)\) be a closed subgroup
satisfying \(\{\overline g:g\in G\}=G\), where the bar denotes
entrywise complex conjugation in the fixed basis. For every \(m\ge1\) and \(a_0\in[0,1]\), there exists a real
\(m\)-query parallel protocol \((\sigma,M)\) for
\(G\)-versus-\(\mathrm U(d)\) testing, with \(M=\{M_0,M_1\}\), such that
\(\alpha(\sigma,M)\le a_0\) and \(\beta(\sigma,M)\) is the minimum
type-II error among all \(m\)-query parallel protocols satisfying
this constraint.
\end{proposition}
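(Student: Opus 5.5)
The plan is to work with the tester SDP in Eq.~\eqref{eq:supp-orthogonal-tester-sdp}, which exactly captures the optimal type-II error of $m$-query parallel protocols at tolerance $a_0$. We will show that it has a real optimal solution $(T,\tau)$ and then realize that real tester by a real protocol. Since the feasible set is compact, an optimal pair $(T^\ast,\tau^\ast)$ exists. Consider its entrywise conjugate $(\overline{T^\ast},\overline{\tau^\ast})$ in the fixed product basis of $B\otimes A$, which equals the transpose because both operators are Hermitian.

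\emph{Step 1: conjugation invariance.} Since $\overline{U^{\otimes m}}=\overline U^{\otimes m}$, the vectorization satisfies $|\overline X\rangle\rangle=\overline{|X\rangle\rangle}$, so
\[
\overline{\Omega_G}=\int_G|\overline g^{\otimes m}\rangle\rangle\langle\langle\overline g^{\otimes m}|\dd\mu_G(g).
\]
The map $g\mapsto\overline g$ is a continuous automorphism of $G$ by the closure hypothesis, so it pushes $\mu_G$ forward to $\mu_G$ by uniqueness of Haar measure. Hence $\overline{\Omega_G}=\Omega_G$, and the same argument applied to $\mathrm U(d)$ gives $\overline{\Omega_{\mathrm U}}=\Omega_{\mathrm U}$. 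For Hermitian $X,Y$ we have $\Tr(\overline X\,\overline Y)=\overline{\Tr(XY)}=\Tr(XY)$. Therefore the conjugate pair has the same objective value and the same acceptance value. Conjugation also preserves positivity, the operator inequality $T\preceq\id_B\otimes\tau$, and the trace. The conjugate pair is therefore also optimal.

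\emph{Step 2: averaging.} The SDP is convex, so $T_r:=\tfrac12(T^\ast+\overline{T^\ast})$ and $\tau_r:=\tfrac12(\tau^\ast+\overline{\tau^\ast})$ form an optimal tester with real symmetric entries.

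\emph{Step 3: real realization.} This is the step I expect to need the most care, because the 1-tester realization theorem does not automatically produce real ingredients. I would build the realization explicitly. Take the ancilla $R\cong A$ and the probe $|\psi\rangle=\sum_k(\sqrt{\tau_r})^{\mathsf T}|k\rangle_A|k\rangle_R$ (up to the convention for which factor carries the transpose). The operator $\sqrt{\tau_r}$ is real because it is the square root of a real positive semidefinite matrix, computed by a real orthogonal diagonalization. So $|\psi\rangle$ is real, and $\sigma=|\psi\rangle\langle\psi|$ is real. The effect is $M_0=(\id\otimes\tau_r^{-1/2})T_r(\id\otimes\tau_r^{-1/2})$, suitably transposed into the output--ancilla space and using the pseudo-inverse on $\operatorname{supp}\tau_r$, with $M_1=\id-M_0$. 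These operators are real because every factor is real, and $0\preceq M_0\preceq\id$ follows from $T_r\preceq\id_B\otimes\tau_r$.

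The remaining check uses $\langle\langle V^{\otimes m}|T|V^{\otimes m}\rangle\rangle=\Tr[M_0\sigma_V]$. Here two points need verification. First, the support of $T_r$ must lie in $B\otimes\operatorname{supp}\tau_r$; this follows from $0\preceq T_r\preceq\id\otimes\tau_r$. Second, the transpose conventions between our vectorization and the Chiribella convention must match. This is routine bookkeeping: every transpose in a fixed real basis preserves reality.

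Together, the three steps give a real protocol with $\alpha\le a_0$ and minimal $\beta$. Theorem~\ref{thm:orthogonal-real-attainability} follows by taking $G=\mathrm O(d)$, which is closed under entrywise conjugation, and $a_0=0$.
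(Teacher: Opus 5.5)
Your proposal is correct and follows essentially the same route as the paper. Haar invariance under the conjugation automorphism gives $\overline{\Omega_G}=\Omega_G$ and $\overline{\Omega_{\mathrm U}}=\Omega_{\mathrm U}$, and averaging with the conjugate yields a real optimal tester. The realization then uses the probe $|S\rangle\rangle_{AR}$ with $S=\tau^{1/2}$ and the effect $M_0=(\id_B\otimes S^+)T(\id_B\otimes S^+)$.

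The transpose bookkeeping you defer is in fact trivial here. Since $S$ is real symmetric, $(V^{\otimes m}\otimes\id_R)|S\rangle\rangle=(\id_B\otimes S^{\mathsf T})|V^{\otimes m}\rangle\rangle=(\id_B\otimes S)|V^{\otimes m}\rangle\rangle$, so no transpose of $T$ is needed.
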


\begin{proof}
\emph{Real optimal tester.}
Let \((T,\tau)\) attain the minimum in
Eq.~\eqref{eq:supp-orthogonal-tester-sdp}, and write \(\overline X\) for the
entrywise complex conjugate of \(X\) in the fixed product basis. Since
\(g\mapsto\overline g\) is a continuous automorphism of \(G\),
normalized Haar measure is invariant under this map. Vectorization
therefore gives \(\overline{\Omega_G}=\Omega_G\); the same argument gives
\(\overline{\Omega}_{\mathrm U}=\Omega_{\mathrm U}\).

Entrywise conjugation preserves positivity, trace normalization, and the
operator inequality defining a tester. Thus
\((\overline T,\overline\tau)\) is feasible and has the same objective value
as \((T,\tau)\). Averaging gives
\begin{equation}
\widetilde T
:=
\frac{T+\overline T}{2},
\qquad
\widetilde\tau
:=
\frac{\tau+\overline\tau}{2}.
\label{eq:supp-real-tester-average}
\end{equation}
The averaged pair is feasible, optimal, and entrywise real.
Relabeling this pair as \((T,\tau)\), we obtain a real minimizer of
Eq.~\eqref{eq:supp-orthogonal-tester-sdp}.

\emph{Real protocol realization.}
Since
\(0\preceq T\preceq\id_B\otimes\tau\),
the operator \(T\) is supported on
\(B\otimes\operatorname{supp}\tau\).
Let \(P_\tau\) denote the orthogonal projector onto
\(\operatorname{supp}\tau\), set \(S:=\tau^{1/2}\), and let \(S^+\) be the
Moore--Penrose inverse of \(S\), equal to \(S^{-1}\) on
\(\operatorname{supp}\tau\) and zero on \(\ker\tau\).
Because \(\tau\) is real symmetric, \(S\), \(S^+\), and \(P_\tau\) are
real symmetric. They satisfy
\[
S^+S
=
SS^+
=
P_\tau,
\qquad
T
=
(\id_B\otimes P_\tau)
T
(\id_B\otimes P_\tau).
\]

Fix a basis-preserving identification \(R\simeq A\), where \(R\) carries a
copy of the computational product basis, and regard \(S\) as a map
\(R\to A\). Define
\begin{equation}
|\psi_\tau\rangle_{AR}
:=
|S\rangle\rangle_{AR},
\qquad
\sigma
:=
|\psi_\tau\rangle
\langle\psi_\tau|.
\label{eq:supp-real-probe}
\end{equation}
The probe is normalized since
\(
\langle\psi_\tau|\psi_\tau\rangle
=
\Tr(S^\dagger S)
=
\Tr\tau
=
1
\),
and its system marginal is
\(\Tr_R\sigma=SS^\dagger=\tau\). Since \(S\) is real,
\(|\psi_\tau\rangle\) and \(\sigma\) are real in the fixed basis.

Under this identification, regard \(T\), \(S^+\), and \(P_\tau\) as acting
on the corresponding \(B\otimes R\) factors. Define
\begin{equation}
M_0
:=
(\id_B\otimes S^+)
T
(\id_B\otimes S^+),
\qquad
M_1
:=
\id_{B\otimes R}-M_0.
\label{eq:supp-real-povm}
\end{equation}
The operator \(M_0\) is positive, and the constraint
\(T\preceq\id_B\otimes\tau\) gives
\[
M_0
\preceq
\id_B\otimes S^+\tau S^+
=
\id_B\otimes P_\tau
\preceq
\id_{B\otimes R}.
\]
Hence \(M_1\succeq0\), and \(M=\{M_0,M_1\}\) is a binary POVM.
Both effects are real.

For \(V\in\mathrm U(d)\), vectorization gives
\[
(V^{\otimes m}\otimes\id_R)|\psi_\tau\rangle
=
|V^{\otimes m}S\rangle\rangle
=
(\id_B\otimes S^{\mathsf T})
|V^{\otimes m}\rangle\rangle
=
(\id_B\otimes S)
|V^{\otimes m}\rangle\rangle,
\]
where the last equality uses the real symmetry of \(S\). The support
relation and \(S^+S=SS^+=P_\tau\) yield
\[
(\id_B\otimes S)
M_0
(\id_B\otimes S)
=
(\id_B\otimes P_\tau)
T
(\id_B\otimes P_\tau)
=
T.
\]
Therefore, for every \(V\in\mathrm U(d)\),
\begin{equation}
\Tr\!\left[
M_0
(V^{\otimes m}\otimes\id_R)
\sigma
(V^{\otimes m}\otimes\id_R)^\dagger
\right]
=
\langle\!\langle V^{\otimes m}|
T
|V^{\otimes m}\rangle\!\rangle .
\label{eq:supp-tester-protocol-statistics}
\end{equation}
Thus the physical protocol reproduces the tester's acceptance
probability pointwise. Averaging over \(G\) and \(\mathrm U(d)\), and using the
optimality of \((T,\tau)\), shows that the protocol satisfies
\(\alpha(\sigma,M)\le a_0\) and attains the minimum type-II error under
this constraint.
The construction supplies one real optimal protocol, which uses an
ancilla; it does not assert that every optimal protocol is real or
that the optimum can be attained by a real ancilla-free protocol.
\end{proof}

Taking \(G=\mathrm O(d)\) and \(a_0=0\) proves
Theorem~\ref{thm:orthogonal-real-attainability}.

\section{Qutrit-Clifford representation theory and type selection}
\label{sec:supp-clifford-representation}

This section develops the representation-theoretic input for
qutrit-Clifford testing. We first classify the possible effective
Clifford types, then determine their actual occurrence through branching
multiplicities, and finally identify the type maximizing the
representation-theoretic score.

\subsection{Effective Clifford types}

Fix the computational basis
\(\{|j\rangle:j\in\mathbb F_3\}\), let
\(\omega:=e^{2\pi i/3}\), and interpret all basis labels and exponents
modulo \(3\). We use the qutrit Clifford
generators~\cite{HostensDehaeneDeMoor2005}
\[
X|j\rangle=|j+1\rangle,
\qquad
Z|j\rangle=\omega^j|j\rangle,
\qquad
P=\operatorname{diag}(1,1,\omega),
\qquad
F|j\rangle
=
\frac{1}{\sqrt3}
\sum_{k\in\mathbb F_3}
\omega^{jk}|k\rangle .
\]
Thus
\[
\mathcal C_3
=
\langle X,Z,P,F\rangle
\le\mathrm U(3),
\qquad
\overline{\mathcal C}_3
:=
\mathcal C_3/
\bigl(\mathcal C_3\cap\mathrm U(1)\id_3\bigr),
\]
where
\(
\mathcal C_3\cap\mathrm U(1)\id_3
=
\langle e^{i\pi/6}\id_3\rangle
\).

For \(m\ge1\), set
\[
X_m:=X^{\otimes m},
\qquad
Z_m:=Z^{\otimes m},
\qquad
P_m:=P^{\otimes m},
\qquad
F_m:=F^{\otimes m}.
\]
The tensor-image group, its scalar subgroup, and its effective quotient are
\begin{equation}
\begin{aligned}
\widetilde G_m
&:=
\{C^{\otimes m}:C\in\mathcal C_3\}
=
\langle X_m,Z_m,P_m,F_m\rangle,
\\
\mathcal Z_m
&:=
\widetilde G_m\cap\mathrm U(1)\id,
\qquad
\bar G_m
:=
\widetilde G_m/\mathcal Z_m .
\end{aligned}
\label{eq:supp-clifford-effective-quotient}
\end{equation}
We distinguish the \(m\)-dependent groups
\(\widetilde G_m,\mathcal Z_m,\bar G_m\) from the corresponding single-copy
groups \(\mathcal C_3,\overline{\mathcal C}_3\). The
physical tensor representation has the tautological central character
\[
\zeta_m(c\id):=c,
\qquad
c\id\in\mathcal Z_m.
\]

The possible effective types fall into two families, according to the central
character carried by the tensor-power displacement subgroup. When \(3\mid m\), the tensor-power
displacement generators commute. When \(3\nmid m\), they instead generate
a finite Heisenberg subgroup with a fixed nontrivial central character.
We write
\begin{equation}
\begin{aligned}
\mathcal F_0
&:=
\{
\chi_0,\chi_1,\chi_2,
\tau_{01},\tau_{02},\tau_{12},
\nu_3,
\Gamma_0,\Gamma_1,\Gamma_2
\},
\\
\mathcal F_1
&:=
\{
\eta_3^{(0)},\eta_3^{(1)},\eta_3^{(2)},
\eta_6^{(01)},\eta_6^{(02)},\eta_6^{(12)},
\eta_9
\}.
\end{aligned}
\label{eq:supp-clifford-type-families}
\end{equation}
For the residual \(\mathrm{SL}(2,\mathbb F_3)\)-types, we take
\(\tau_{12}\) as the faithful two-dimensional representation with
trivial determinant character and use
\[
\tau_{01}
=
\tau_{12}\otimes\chi_2,
\qquad
\tau_{02}
=
\tau_{12}\otimes\chi_1.
\]
The representations \(\Gamma_t\) are the three eight-dimensional types
induced from the nonzero displacement-character orbit. For
\(3\nmid m\), the subscripts on
\(\eta_3^{(i)},\eta_6^{(ij)},\eta_9\) record their dimensions, whereas the
superscripts record the residual
\(\mathrm{SL}(2,\mathbb F_3)\)-type. Their dependence on
\(\kappa=m\bmod3\) is suppressed.

\begin{proposition}[Possible effective Clifford types]
\label{prop:supp-clifford-type-families}
For every \(m\ge1\), the set \(\mathcal I_m\) of effective Clifford
types occurring at \(m\) queries satisfies
\[
\mathcal I_m
\subseteq
\begin{cases}
\mathcal F_0,
&
3\mid m,
\\[1mm]
\mathcal F_1,
&
3\nmid m.
\end{cases}
\]
\end{proposition}

\begin{proof}
Write
\(\bar x,\bar z,\bar p,\bar f\) for the images of
\(X_m,Z_m,P_m,F_m\) in \(\bar G_m\).

\emph{The case \(3\mid m\).}
The Weyl relation gives
\[
Z_mX_m
=
\omega^mX_mZ_m
=
X_mZ_m.
\]
Thus \(\bar x\) and \(\bar z\) commute and have order \(3\). Let
\[
N:=\langle\bar x,\bar z\rangle.
\]
If \(\bar x^a\bar z^b=1\), then \(X_m^aZ_m^b\) is scalar. For
\(a\ne0\), this operator moves every computational-basis vector and
therefore cannot be scalar. Hence \(a=0\). Comparing the eigenvalues of
\(Z_m^b\) on \(|0\cdots0\rangle\) and \(|10\cdots0\rangle\) then gives
\(b=0\). Consequently,
\[
N\simeq\mathbb F_3^2.
\]

The conjugation relations
\[
PXP^{-1}=XZ,
\qquad
PZP^{-1}=Z,
\qquad
FXF^{-1}=Z,
\qquad
FZF^{-1}=X^{-1}
\]
show that \(N\) is normal. With exponent vectors written as row vectors,
the actions induced by \(\bar p\) and \(\bar f\) are
\begin{equation}
u
=
\begin{pmatrix}
1&1\\
0&1
\end{pmatrix},
\qquad
s
=
\begin{pmatrix}
0&1\\
-1&0
\end{pmatrix}.
\label{eq:supp-clifford-sl2-action}
\end{equation}
These matrices generate \(\mathrm{SL}(2,\mathbb F_3)\), so conjugation
induces a surjection
\(
\bar G_m/N
\twoheadrightarrow
\mathrm{SL}(2,\mathbb F_3)
\).

Conversely, the standard structure theorem for projective Clifford
groups in odd prime dimension identifies the projective qutrit Clifford
group as the semidirect product of its displacement and symplectic
subgroups \cite{Appleby2005Clifford}. Thus it fits into the split exact
sequence
\[
1
\longrightarrow
\mathbb F_3^2
\longrightarrow
\overline{\mathcal C}_3
\longrightarrow
\mathrm{SL}(2,\mathbb F_3)
\longrightarrow
1.
\]
The tensor-power map
\([C]\mapsto[C^{\otimes m}]\) gives a surjection
\(\overline{\mathcal C}_3\to\bar G_m\), whose restriction to the
displacement subgroup is an isomorphism onto \(N\). It therefore induces
a surjection in the opposite direction,
\(
\mathrm{SL}(2,\mathbb F_3)
\twoheadrightarrow
\bar G_m/N
\).
Since both groups are finite, these two surjections are isomorphisms. The
image of a symplectic section supplies a complement \(L\) to \(N\), and
hence
\[
\bar G_m
=
N\rtimes L
\simeq
\mathbb F_3^2
\rtimes
\mathrm{SL}(2,\mathbb F_3).
\]

Consider the action of \(L\) on the character group \(\widehat N\). The
determinant pairing identifies both \(N\) and \(\widehat N\) with
\(\mathbb F_3^2\), and the induced \(L\)-action is the
standard action of \(\mathrm{SL}(2,\mathbb F_3)\). This action has exactly
two orbits: the zero character and the eight nonzero characters. Indeed,
any two nonzero vectors can be completed to bases with equal determinant,
and the corresponding change of basis belongs to
\(\mathrm{SL}(2,\mathbb F_3)\).

The zero orbit gives the irreducible \(L\)-modules inflated to
\(N\rtimes L\), with \(N\) acting trivially. For
\(i\in\{0,1,2\}\), let
\[
\chi_i(u)=\omega^i,
\qquad
\chi_i(s)=1.
\]
Together with the three two-dimensional types
\(\tau_{01},\tau_{02},\tau_{12}\) and the unique
three-dimensional type \(\nu_3\), these characters give all irreducible
\(L\simeq\mathrm{SL}(2,\mathbb F_3)\)-modules. Their squared dimensions
sum to \(3\cdot1^2+3\cdot2^2+3^2=24=|L|\).

For the nonzero orbit, fix the character
\(
\xi(\bar x^a\bar z^b)
:=
\omega^a
\).
Its stabilizer \(L_\xi=\langle u\rangle\) is cyclic of order \(3\). For \(t=0,1,2\), let
\(\psi_t(u):=\omega^t\) and define
\[
\Gamma_t
:=
\operatorname{Ind}_{N\rtimes L_\xi}^{N\rtimes L}
(\xi\otimes\psi_t).
\]
Then
\[
\dim\Gamma_t
=
[L:L_\xi]
=
8,
\qquad
\left.\Gamma_t\right|_N
\simeq
\bigoplus_{\theta\in L\cdot\xi}\theta.
\]
An endomorphism commuting with \(N\) is diagonal on these eight distinct
weight spaces. Commutation with the transitive \(L\)-action forces all
diagonal entries to agree. Hence
\(
\operatorname{End}_{N\rtimes L}(\Gamma_t)
=
\mathbb C\id
\), so \(\Gamma_t\) is irreducible.

If \(t\ne t'\), an intertwiner
\(\Gamma_t\to\Gamma_{t'}\) restricts on the line carrying \(\xi\) to an
intertwiner between \(\psi_t\) and \(\psi_{t'}\), and hence vanishes on
that line. Since the induced module is generated by its \(\xi\)-weight
line, the intertwiner vanishes identically. Thus
\(\Gamma_0,\Gamma_1,\Gamma_2\) are pairwise inequivalent.

The zero and nonzero orbits therefore give the ten irreducible
\(\bar G_m\)-modules underlying the labels in \(\mathcal F_0\). Their
squared dimensions sum to
\(3\cdot1^2+3\cdot2^2+3^2+3\cdot8^2=216=|\bar G_m|\), so the list is
complete.

It remains to restore the central character of the physical tensor
representation. Write \(m=3k\) and define
\[
\delta_m(C^{\otimes m})
:=
\det(C)^k.
\]
The character \(\delta_m\) is well defined. If
\(C^{\otimes m}=D^{\otimes m}\), then
\(D^{-1}C=\alpha\id_3\) for some \(\alpha\) with \(\alpha^m=1\), and
\(\det(D^{-1}C)^k=\alpha^{3k}=1\). Moreover,
\(\delta_m|_{\mathcal Z_m}=\zeta_m\).

Let
\[
q_m:
\widetilde G_m
\twoheadrightarrow
\bar G_m
\twoheadrightarrow
\bar G_m/N
\simeq
L
\]
be the composite quotient map. With indices on \(\chi_i\) understood
modulo \(3\), define
\begin{equation}
\varepsilon_m
:=
\delta_m\cdot
\bigl(\chi_{-k}\circ q_m\bigr).
\label{eq:supp-clifford-central-untwisting}
\end{equation}
A direct evaluation on the tensor-power generators gives
\(\varepsilon_m(X_m)=\varepsilon_m(Z_m)=\varepsilon_m(P_m)=1\) and
\(\varepsilon_m(F_m)=i^{-k}\).
Since \(q_m\) is trivial on \(\mathcal Z_m\),
\(\varepsilon_m|_{\mathcal Z_m}=\zeta_m\). Twisting a physical
\(\widetilde G_m\)-module by \(\varepsilon_m^{-1}\) therefore removes
its scalar action and produces a \(\bar G_m\)-module. Conversely,
inflation followed by tensoring with \(\varepsilon_m\) restores the
physical central character. Under the label convention fixed above, this
correspondence identifies every occurring type with an element of
\(\mathcal F_0\).

\emph{The case \(3\nmid m\).}
Let
\(\kappa=m\bmod3\in\{1,2\}\). The displacement generators now satisfy
\begin{equation}
Z_mX_m
=
\omega^\kappa X_mZ_m.
\label{eq:supp-clifford-heisenberg-relation}
\end{equation}
Thus
\[
H_m
:=
\langle\omega\id,X_m,Z_m\rangle
\triangleleft
\widetilde G_m
\]
is a nonabelian Heisenberg group of order \(27\), with center
\(\langle\omega\id\rangle\). The physical tensor representation carries
the nontrivial central character
\(\omega\id\mapsto\omega\). By the finite Stone--von Neumann theorem,
the unique irreducible \(H_m\)-module with this central character is
\(V^{(\kappa)}\), with \(\dim V^{(\kappa)}=3\).
Let \(W\) be an irreducible \(\widetilde G_m\)-constituent of
\((\mathbb C^3)^{\otimes m}\). By Clifford's theorem on restrictions of
irreducible representations to normal subgroups, the \(H_m\)-module
\(W|_{H_m}\) is a direct sum of \(\widetilde G_m\)-conjugates of a single irreducible
\(H_m\)-module. The scalar \(\omega\id\) is central in
\(\widetilde G_m\), so these conjugates have the same central character
and are all isomorphic to \(V^{(\kappa)}\).

To make the residual action explicit, fix a Schur sector \(U_\lambda\)
and put
\[
E_r^\lambda
:=
\ker(Z_m-\omega^r\id)\cap U_\lambda,
\qquad
r\in\mathbb F_3.
\]
Equation~\eqref{eq:supp-clifford-heisenberg-relation} gives
\(
X_mE_r^\lambda
=
E_{r+\kappa}^\lambda
\), so the three eigenspaces \(E_r^\lambda\) are mutually orthogonal and
have equal dimension. Let \(M_\lambda:=E_0^\lambda\) and define
\[
\iota_\lambda:
\mathbb C^3\otimes M_\lambda
\longrightarrow
U_\lambda,
\qquad
\iota_\lambda(|a\rangle\otimes v)
:=
X_m^av.
\]
This map is unitary and satisfies
\[
\iota_\lambda^\dagger X_m\iota_\lambda
=
X\otimes\id_{M_\lambda},
\qquad
\iota_\lambda^\dagger Z_m\iota_\lambda
=
Z^\kappa\otimes\id_{M_\lambda}.
\]

Let \(J|j\rangle:=|-j\rangle\) and choose
\[
P_\kappa
:=
\begin{cases}
P,&\kappa=1,\\
P^2,&\kappa=2,
\end{cases}
\qquad
F_\kappa
:=
\begin{cases}
F,&\kappa=1,\\
JF,&\kappa=2.
\end{cases}
\]
Define
\[
\vartheta_{m,\kappa}
:=
\begin{cases}
i^{-(m-1)/3},
&
\kappa=1,
\\
i^{-(m+1)/3},
&
\kappa=2,
\end{cases}
\qquad
\widetilde F_{m,\kappa}
:=
\vartheta_{m,\kappa}F_\kappa.
\]
The exponents are integers because \(3\mid(m-1)\) when \(\kappa=1\),
whereas \(3\mid(m+1)\) when \(\kappa=2\). These first-factor operators
implement the same automorphisms of the Heisenberg generators as
\(P_m,F_m\):
\[
\begin{aligned}
P_\kappa XP_\kappa^{-1}
&=
XZ^\kappa,
&
P_\kappa Z^\kappa P_\kappa^{-1}
&=
Z^\kappa,
\\
\widetilde F_{m,\kappa}X
\widetilde F_{m,\kappa}^{-1}
&=
Z^\kappa,
&
\widetilde F_{m,\kappa}Z^\kappa
\widetilde F_{m,\kappa}^{-1}
&=
X^{-1}.
\end{aligned}
\]
They also obey
\[
P_\kappa^3=\id,
\qquad
\widetilde F_{m,\kappa}^4=\id,
\qquad
\widetilde F_{m,\kappa}^2P_\kappa
\widetilde F_{m,\kappa}^{-2}P_\kappa^{-1}
=
Z^\kappa,
\]
and the scalar correction makes the two cubes carry the same phase:
\[
(\widetilde F_{m,\kappa}P_\kappa)^3
=
e^{im\pi/6}\id_{\mathbb C^3},
\qquad
(F_mP_m)^3
=
e^{im\pi/6}\id_{(\mathbb C^3)^{\otimes m}}.
\]

Transport \(P_m,F_m\) through \(\iota_\lambda\):
\[
\widehat P_\lambda
:=
\iota_\lambda^\dagger P_m\iota_\lambda,
\qquad
\widehat F_\lambda
:=
\iota_\lambda^\dagger F_m\iota_\lambda.
\]
Left-multiplying \(\widehat P_\lambda\) and
\(\widehat F_\lambda\) by the inverses of their respective first-factor
operators yields operators that commute with both
\(X\otimes\id\) and \(Z^\kappa\otimes\id\). Since \(X\) and
\(Z^\kappa\) generate \(M_3(\mathbb C)\), their joint commutant is
\(\id_{\mathbb C^3}\otimes\operatorname{End}(M_\lambda)\). Hence there
are unique \(A_\lambda,B_\lambda\in\mathrm U(M_\lambda)\) such that
\begin{equation}
\widehat P_\lambda
=
P_\kappa\otimes A_\lambda,
\qquad
\widehat F_\lambda
=
\widetilde F_{m,\kappa}\otimes B_\lambda.
\label{eq:supp-clifford-sector-factorization}
\end{equation}

Comparing the tensor Clifford relations with the first-factor relations
gives
\[
A_\lambda^3=\id,
\qquad
B_\lambda^4=\id,
\qquad
B_\lambda^2A_\lambda=A_\lambda B_\lambda^2,
\qquad
(B_\lambda A_\lambda)^3=\id.
\]
These are defining relations for the generators \(u,s\) in
Eq.~\eqref{eq:supp-clifford-sl2-action}: \(s^2\) is central, and
quotienting by \(\langle s^2\rangle\) gives the standard presentation
\(\langle u,s\mid u^3=s^2=(su)^3=1\rangle\simeq A_4\), so the presented
group has order at most \(24\). It surjects onto
\(\mathrm{SL}(2,\mathbb F_3)\), which also has order \(24\), and hence the
surjection is an isomorphism. Thus
\(u\mapsto A_\lambda,\ s\mapsto B_\lambda\)
defines a unitary representation of
\(\mathrm{SL}(2,\mathbb F_3)\) on \(M_\lambda\).

The Heisenberg subgroup acts irreducibly on the
\(\mathbb C^3\) factor and trivially on \(M_\lambda\). Consequently, an
\(H_m\)-invariant subspace has the form
\(\mathbb C^3\otimes K\), and it is invariant under the full tensor-image
group precisely when \(K\) is invariant under \(A_\lambda,B_\lambda\).
Since every occurring constituent belongs to some Schur sector
\(U_\lambda\), the classification reduces to the irreducible
constituents of this residual
\(\mathrm{SL}(2,\mathbb F_3)\)-module.

Tensoring the residual irreducibles with \(V^{(\kappa)}\) gives
\[
\eta_3^{(i)}
:=
V^{(\kappa)}\otimes\chi_i,
\qquad
\eta_6^{(ij)}
:=
V^{(\kappa)}\otimes\tau_{ij},
\qquad
\eta_9
:=
V^{(\kappa)}\otimes\nu_3,
\]
where \(i\in\{0,1,2\}\), while the pair label \(ij\) ranges over
\(01,02,12\). These modules have dimensions \(3,6,9\), respectively,
and exhaust the possible effective types in this central sector.
Therefore every occurring type belongs to \(\mathcal F_1\).
\end{proof}

\subsection{Branching multiplicities}

To determine which members of the families in
Proposition~\ref{prop:supp-clifford-type-families} actually occur, we
restrict each Schur sector \(U_\lambda\) in
Eq.~\eqref{eq:schur-weyl-branching} to the tensor-image group
\(\widetilde G_m\) and denote the resulting branching multiplicities by
\(n_{\eta,\lambda}\).

For each candidate effective type \(\eta\) in the family appropriate to
\(m\), the preceding subsection fixes a compatible linear lift
\(\widehat V_\eta\) to \(\widetilde G_m\) in the physical central
sector, regardless of whether \(\eta\) occurs. By construction, every
\(z\in\mathcal Z_m\) acts on \(\widehat V_\eta\) as
\(\zeta_m(z)\id\). Write
\(\widehat\chi_\eta:\widetilde G_m\to\mathbb C\) for its character.
The branching multiplicity of \(\widehat V_\eta\) in \(U_\lambda\) is
\(n_{\eta,\lambda}
=\dim\operatorname{Hom}_{\widetilde G_m}
(\widehat V_\eta,U_\lambda)\).
The ordinary finite-group character inner product then gives
\[
n_{\eta,\lambda}
=
\frac{1}{|\widetilde G_m|}
\sum_{g\in\widetilde G_m}
\chi_{U_\lambda}(g)\,
\overline{\widehat\chi_\eta(g)}.
\]

For \(C\in\mathcal C_3\), let
\(x_1(C),x_2(C),x_3(C)\) be the eigenvalues of \(C\) in the defining
qutrit representation. For every partition
\(\lambda\vdash m\) with \(\ell(\lambda)\le3\), Schur--Weyl duality
identifies the character of \(U_\lambda\) with the Schur polynomial:
\[
\chi_{U_\lambda}(C^{\otimes m})
=
s_\lambda\!\left(x_1(C),x_2(C),x_3(C)\right).
\]

To pull the normalized character average back to the single-copy
Clifford group, let
\(\pi_m:\mathcal C_3\to\widetilde G_m\) be the surjective homomorphism
\(\pi_m(C)=C^{\otimes m}\), with
\(\ker\pi_m=\{C\in\mathcal C_3:C^{\otimes m}=\id\}\).
Since every fiber is a coset of \(\ker\pi_m\), all fibers have the same
cardinality. Consequently,
\(|\mathcal C_3|=|\ker\pi_m|\,|\widetilde G_m|\), and for every
function \(f\) on \(\widetilde G_m\),
\[
\frac{1}{|\widetilde G_m|}
\sum_{g\in\widetilde G_m}f(g)
=
\frac{1}{|\mathcal C_3|}
\sum_{C\in\mathcal C_3}f(C^{\otimes m}).
\]
Applying this identity to the summand of the character inner product and using
the Schur--Weyl character formula gives
\begin{equation}
n_{\eta,\lambda}
=
\frac{1}{|\mathcal C_3|}
\sum_{C\in\mathcal C_3}
s_\lambda\!\left(
x_1(C),x_2(C),x_3(C)
\right)
\overline{\widehat\chi_\eta(C^{\otimes m})}.
\label{eq:supp-clifford-branching}
\end{equation}

Equation~\eqref{eq:supp-clifford-branching} makes the occurrence
criterion explicit:
\(\eta\in\mathcal I_m\) if and only if
\(n_{\eta,\lambda}>0\) for at least one partition
\(\lambda\vdash m\) with \(\ell(\lambda)\le3\).
For an occurring type \(\eta\),
Eq.~\eqref{eq:subgroup-type-score} gives
\[
h_m(\eta)
=
\frac{1}{d_\eta}
\sum_{\substack{\lambda\vdash m\\ \ell(\lambda)\le3}}
d_\lambda n_{\eta,\lambda}.
\]

\subsection{Global type selection}

The branching formula in
Eq.~\eqref{eq:supp-clifford-branching} reduces both the occurrence test
and the score evaluation for each candidate type to finite character
sums. Comparing these sums over the relevant type families determines
the maximizing type.

\begin{theorem}[Global type selection for qutrit-Clifford testing]
\label{thm:supp-clifford-type-selection}
For every \(m\ge5\), there is a unique effective Clifford type
\(\eta_m^\star\in\mathcal I_m\) such that
\(h_m(\eta_m^\star)=h_{\max}\). It is determined by \(m\bmod6\) as
follows:
\[
\begin{array}{c|cccccc}
m\bmod6
&0&1&2&3&4&5
\\
\hline
\eta_m^\star
&\chi_1
&\eta_3^{(0)}
&\eta_3^{(1)}
&\tau_{01}
&\eta_6^{(01)}
&\eta_3^{(1)}
\end{array}
\]
\end{theorem}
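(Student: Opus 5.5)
The plan is to turn the score \(h_m(\eta)\) into an explicit quasi-polynomial in \(m\) using the branching formula~\eqref{eq:supp-clifford-branching}, and then compare these quasi-polynomials across the finite candidate families of Proposition~\ref{prop:supp-clifford-type-families}.

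\emph{Step 1: reduce the score to a single class sum.} Substituting Eq.~\eqref{eq:supp-clifford-branching} into the definition of \(h_m(\eta)\) and exchanging the sums gives
\[
h_m(\eta)=\frac{1}{d_\eta|\mathcal C_3|}\sum_{C\in\mathcal C_3}\overline{\widehat\chi_\eta(C^{\otimes m})}\sum_{\lambda}s_\lambda(1^3)\,s_\lambda(x(C)).
\]
By the Cauchy identity already used in Eq.~\eqref{eq:supp-torus-score}, the inner sum is the coefficient of \(t^m\) in \(\det(\id-tC)^{-3}=\prod_j(1-tx_j(C))^{-3}\). The outer sum is a class function of \(C\), so it splits over the finitely many conjugacy classes of \(\mathcal C_3\). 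Every eigenvalue is a root of unity, hence each class contributes a quasi-polynomial in \(m\). Its degree is \(3k-1\), where \(k\) is the largest multiplicity of a repeated eigenvalue of \(C\).

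\emph{Step 2: organize the contributions by degree.} There are three kinds of class.

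\begin{itemize}
\item Scalar classes give degree \(8\). On them \(\widehat\chi_\eta(c^{\otimes m})=d_\eta\zeta_m(c)\), and the factor \(d_\eta\) cancels the prefactor \(1/d_\eta\). So within a fixed central family (\(\mathcal F_0\) or \(\mathcal F_1\)) this leading part is the same for every type and cannot affect the comparison.
\item Non-scalar classes with an eigenvalue of multiplicity two give degree \(5\).
\item Classes with three distinct eigenvalues give degree \(2\).
\end{itemize}

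The comparison is therefore decided at order \(m^5\). That coefficient is \(\frac{1}{d_\eta}\sum_{\text{classes}}(\text{class size})\,\overline{\widehat\chi_\eta}\,c_{C}(m)\). Here \(c_C(m)\) depends on \(m\) only through the periodic phases \(x_j(C)^m\), which explains why the answer depends on \(m\bmod 6\) (or a divisor of it). I will tabulate the conjugacy classes of \(\mathcal C_3\) together with their eigenvalue multisets and the lifted characters \(\widehat\chi_\eta\) from the explicit constructions of the previous subsection. This gives, for each residue class and each candidate \(\eta\), the \(m^5\) coefficient, and the maximizer in the stated table should emerge. If two types tie at order \(m^5\), I will carry the expansion to the next order, or appeal to exact evaluation.

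\emph{Step 3: control the error and handle small \(m\).} Bounding the degree-\(\le4\) remainders by explicit constants gives an effective \(M_0\) beyond which the strict \(m^5\) gap forces a unique maximizer. The same bounds confirm occurrence: the winner has \(n_{\eta,\lambda}>0\) for some \(\lambda\), since its score is of order \(m^8\). For \(5\le m<M_0\), I will evaluate \(h_m(\eta)\) exactly from the class sum, which is a finite rational computation, and check uniqueness directly.

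The main obstacle is Step 3. Making the lower-order remainders effective enough that \(M_0\) is small, and ruling out accidental ties at order \(m^5\) between types of different dimension (such as \(\eta_3\) versus \(\eta_6\)), both need care. I would first try to obtain the full quasi-polynomial exactly by partial fractions in \(t\) on each class, which makes the comparison exact rather than asymptotic.
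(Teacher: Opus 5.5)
Your plan is correct and follows the paper's proof essentially step for step: the Cauchy-identity generating function $\det(\id-tC)^{-3}$ split by spectral class, cancellation of the common $\binom{m+8}{8}/216$ scalar contribution, comparison at order $m^5$ with an $m^4$ tie-break (the only leading tie is $\eta_3^{(0)}$ versus $\eta_3^{(1)}$ for $m\equiv1\pmod 6$), and, for the uniform threshold $m\ge5$, the exact partial-fraction route you mention last. The paper implements that last step by writing $h_{c+36n}(\eta)$ as the common binomial term plus a rational polynomial of degree at most $5$ in $n$, then certifying that every shifted winner-minus-competitor polynomial has a positive constant term and nonnegative coefficients, so no effective remainder bounds are needed. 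One correction: the eigenvalue phases are a priori periodic only modulo $12$ (lifts in $\mathscr P$ satisfy $C^{12}=\id$), and modulo $36$ once the simple-spectrum classes are included, so the period $6$ of the leading coefficient is an outcome of the exact computation rather than something the phases themselves explain.
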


\begin{proof}
\emph{Candidate-family reduction.}
For this comparison, we extend the definition of \(h_m(\eta)\) in
Eq.~\eqref{eq:subgroup-type-score} to every candidate type in the family
appropriate to \(m\). By the preceding subsection, every candidate has a
compatible linear lift \(\widehat V_\eta\) to \(\widetilde G_m\) in the physical
central sector, with the character-inner-product formula in
Eq.~\eqref{eq:supp-clifford-branching}. If a candidate is absent,
this formula gives \(n_{\eta,\lambda}=0\) in every Schur sector, and hence
\(h_m(\eta)=0\). Since \((\mathbb C^3)^{\otimes m}\ne\{0\}\), at least one
candidate type occurs, and every occurring type has strictly positive
score. Consequently, any strict maximizer over the full candidate family
belongs to \(\mathcal I_m\).

\emph{Score generating function.}
For \(C\in\mathcal C_3\), define
\[
A_m(C)
:=
\sum_{\substack{\lambda\vdash m\\ \ell(\lambda)\le3}}
d_\lambda
s_\lambda\!\left(x_1(C),x_2(C),x_3(C)\right).
\]
Substituting Eq.~\eqref{eq:supp-clifford-branching} into
Eq.~\eqref{eq:subgroup-type-score} gives
\[
h_m(\eta)
=
\frac{1}{|\mathcal C_3|d_\eta}
\sum_{C\in\mathcal C_3}
A_m(C)\,
\overline{\widehat\chi_\eta(C^{\otimes m})}.
\]
The summand is independent of the scalar representative. If \(C\) is
replaced by \(e^{i\theta}C\), where
\(e^{i\theta}\id_3\in\mathcal C_3\), homogeneity gives
\(A_m(e^{i\theta}C)=e^{im\theta}A_m(C)\), and the tautological
central character gives
\[
\widehat\chi_\eta\!\left((e^{i\theta}C)^{\otimes m}\right)
=
e^{im\theta}\widehat\chi_\eta(C^{\otimes m}).
\]
Complex conjugation cancels the two scalar factors. Since the scalar
subgroup \(\mathcal C_3\cap\mathrm U(1)\id_3\) has order \(12\), averaging
over \(\mathcal C_3\) reduces to averaging over its projective quotient:
\begin{equation}
h_m(\eta)
=
\frac{1}{216d_\eta}
\sum_{\bar C\in\overline{\mathcal C}_3}
A_m(C)\,
\overline{\widehat\chi_\eta(C^{\otimes m})},
\label{eq:supp-clifford-score-character-sum}
\end{equation}
where \(C\) is any representative of \(\bar C\).

Because \(d_\lambda=s_\lambda(1,1,1)\), the Cauchy identity gives the
following identity of formal power series:
\begin{equation}
\sum_{m\ge0}A_m(C)t^m
=
\prod_{j=1}^{3}(1-x_j(C)t)^{-3}
=
\det(\id_3-tC)^{-3}.
\label{eq:supp-clifford-score-generating-function}
\end{equation}
Thus the asymptotic comparison reduces to the pole structure determined
by the spectrum of \(C\).

\emph{Spectral classes and asymptotic expansion.}
The \(216\) projective Clifford elements split into the following
spectral classes:
\[
\begin{array}{c|cc}
\text{class}&\text{size}&\text{representative spectrum}
\\
\hline
\text{identity}&1&(1,1,1)
\\
\mathscr P&33&(\alpha,\alpha,\beta),\ \alpha\ne\beta
\\
\mathscr P'&182&(\alpha,\beta,\gamma),\ \text{pairwise distinct}
\end{array}
\]
A unitary representative with three equal eigenvalues is scalar and
therefore belongs to the projective identity class.

For the identity representative,
\[
\sum_{m\ge0}A_m(\id_3)t^m=(1-t)^{-9},
\qquad
A_m(\id_3)=\binom{m+8}{8}.
\]
Since \(\widehat\chi_\eta(\id_3^{\otimes m})=d_\eta\), this class
contributes exactly \(\frac{1}{216}\binom{m+8}{8}\) to every score and
hence cancels in score differences.

For \(\bar C\in\mathscr P\), choose a representative with spectrum
\((\alpha_C,\alpha_C,\beta_C)\) and set
\(q_C:=\beta_C/\alpha_C\). Equation~\eqref{eq:supp-clifford-score-generating-function}
becomes
\[
\sum_{m\ge0}A_m(C)t^m
=
(1-\alpha_Ct)^{-6}(1-\beta_Ct)^{-3}.
\]
The sixth-order pole at \(t=\alpha_C^{-1}\) supplies the \(m^5\) and
\(m^4\) terms, while the third-order pole at \(t=\beta_C^{-1}\)
contributes only \(O(m^2)\). Extracting the first two orders gives
\[
A_m(C)
=
\alpha_C^m
\left[
\frac{m^5}{5!}(1-q_C)^{-3}
+
\frac{m^4}{8}(1-2q_C)(1-q_C)^{-4}
+
O(m^3)
\right]
+
O(m^2).
\]

Fix \(s\in\mathbb Z/12\mathbb Z\) and restrict to
\(m\equiv s\pmod{12}\).
For the fixed representatives with
\(\overline C\in\mathscr P\),
Table~\ref{tab:supp-clifford-lift-orders} gives
\(\alpha_C^m=\alpha_C^s\).
The lifted character value also depends on \(m\) only through
\(s\), by construction. Denote the
latter by
\(\widehat\chi_\eta^{(s)}(C):=\widehat\chi_\eta(C^{\otimes m})\).
The type-dependent coefficients are therefore
\[
\widetilde B_\eta(s)
:=
\frac{1}{216d_\eta\,5!}
\sum_{\bar C\in\mathscr P}
\alpha_C^s(1-q_C)^{-3}\,
\overline{\widehat\chi_\eta^{(s)}(C)}
\]
and
\[
C_\eta(s)
:=
\frac{1}{216d_\eta}
\sum_{\bar C\in\mathscr P}
\alpha_C^s
\frac{(1-2q_C)(1-q_C)^{-4}}{8}\,
\overline{\widehat\chi_\eta^{(s)}(C)}.
\]
These expressions are independent of the representative of \(\bar C\).
Indeed, replacing \(C\) by \(e^{i\theta}C\), where
\(e^{i\theta}\id_3\in\mathcal C_3\), multiplies
\(\alpha_C^s\overline{\widehat\chi_\eta^{(s)}(C)}\) by
\(e^{i(s-m)\theta}=1\).

Exact evaluation of the finite sums over \(\mathscr P\) gives
\(\widetilde B_\eta(s+6)=\widetilde B_\eta(s)\). We therefore write
\(B_\eta(r)\), where \(r\) is the residue of \(m\) modulo \(6\), while
\(C_\eta(s)\) retains its period-twelve argument.

For \(\bar C\in\mathscr P'\), the three eigenvalues are distinct, and
\[
\sum_{m\ge0}A_m(C)t^m
=
\prod_{\xi\in\{\alpha_C,\beta_C,\gamma_C\}}
(1-\xi t)^{-3}.
\]
Every pole has order at most three, so this class contributes
\(O(m^2)\). It does not affect the \(m^5\) or \(m^4\) comparison, but it
cannot be discarded in the exact finite-\(m\) argument. Because the spectral
classes, candidate families, and residue classes are finite, the remainders in
Eq.~\eqref{eq:supp-clifford-asymptotic-expansion} are uniform over types and
residue classes. Combining the three classes yields
\begin{equation}
h_m(\eta)
=
\frac{1}{216}\binom{m+8}{8}
+
B_\eta(r)m^5
+
C_\eta(s)m^4
+
O(m^3).
\label{eq:supp-clifford-asymptotic-expansion}
\end{equation}

\emph{Eventual period-six selection.}
For fixed \(s\) and its reduction \(r\) modulo \(6\),
Eq.~\eqref{eq:supp-clifford-asymptotic-expansion} gives
\[
h_m(\eta)-h_m(\eta')
=
\bigl(B_\eta(r)-B_{\eta'}(r)\bigr)m^5
+
\bigl(C_\eta(s)-C_{\eta'}(s)\bigr)m^4
+
O(m^3).
\]
Thus a strict difference at order \(m^5\) determines the eventual sign;
only a tie in \(B_\eta(r)\) requires the subleading coefficient.
The exact leading comparisons are summarized below. The scale column
gives the factor multiplying \(B_\eta(r)\), and the final column gives
the next smaller distinct value on that scale:
\[
{\renewcommand{\arraystretch}{1.15}
\begin{array}{c|cccc}
r
&\text{scale}
&\text{maximum}
&\text{maximizer(s)}
&\text{next distinct}
\\
\hline
0&207360&41&\chi_1&9
\\
1&622080&41&\eta_3^{(0)},\eta_3^{(1)}&23
\\
2&622080&55&\eta_3^{(1)}&25
\\
3&207360&25&\tau_{01}&23
\\
4&622080&41&\eta_6^{(01)}&23
\\
5&622080&73&\eta_3^{(1)}&9
\end{array}}
\]
The comparison is strict for \(r=0,2,3,4,5\). For \(r=1\), the only
leading tie is between \(\eta_3^{(0)}\) and \(\eta_3^{(1)}\).

For the two corresponding residue classes \(s=1,7\) modulo \(12\),
the exact subleading values are
\[
82944\,C_{\eta_3^{(0)}}(s)=155,
\qquad
82944\,C_{\eta_3^{(1)}}(s)=91,
\]
and therefore
\[
C_{\eta_3^{(0)}}(s)-C_{\eta_3^{(1)}}(s)
=
\frac{1}{1296}.
\]
Consequently,
\[
h_m\!\left(\eta_3^{(0)}\right)
-
h_m\!\left(\eta_3^{(1)}\right)
=
\frac{1}{1296}m^4+O(m^3)
\]
for \(m\equiv1\pmod6\). Together with the strict \(B\)-coefficient
comparisons, this tie resolution shows that the type listed in
Theorem~\ref{thm:supp-clifford-type-selection} strictly maximizes the score
over the full candidate family for all sufficiently large \(m\) in each
residue class modulo \(6\). For each \(r\), a common threshold is
obtained by taking the maximum over the two residues
\(s\equiv r\pmod6\) and the finitely many competing types. By the
candidate-family reduction, this strict maximizer belongs to
\(\mathcal I_m\) and is the unique occurring type attaining the maximum.

These asymptotic comparisons do not yet prove the uniform threshold
\(m\ge5\), because lower-order terms, including the simple-spectrum
contribution, can affect finitely many query numbers.

\emph{Exact threshold certificate.}
The lifted character factors depend only on \(m\bmod12\).
By Table~\ref{tab:supp-clifford-lift-orders}, every
\(C\in\mathcal C_3\) satisfies \(C^{12}=\id_3\) when
\(\overline C\in\mathscr P\), and \(C^{36}=\id_3\) when
\(\overline C\in\mathscr P'\).
Consequently, fixing \(m\bmod36\) fixes every phase in the
exact score formula.

Fix \(c\in\{0,\ldots,35\}\) and write
\begin{equation}
m=c+36n,
\qquad
n\in\mathbb Z_{\ge0}.
\label{eq:supp-clifford-mod36-parametrization}
\end{equation}
For fixed \(c\), call \(n\in\mathbb Z_{\ge0}\) admissible if
\(c+36n\ge1\), and call \(m=c+36n\) the corresponding admissible
query number.
At admissible query numbers along this progression, the relevant candidate
family is constant because \(3\mid36\), and the candidate listed in the
theorem is constant because \(6\mid36\). Denote the latter candidate by
\(\eta_c^\star\).

The exact partial-fraction forms underlying
Eq.~\eqref{eq:supp-clifford-score-generating-function} show that, for
every candidate \(\eta\) in the relevant family, there is a polynomial
\(P_{\eta,c}\) of degree at most \(5\) such that
\begin{equation}
h_{c+36n}(\eta)
=
\frac{1}{216}\binom{c+36n+8}{8}
+
P_{\eta,c}(n),
\label{eq:supp-clifford-residue-polynomial}
\end{equation}
for every \(n\ge0\) for which \(c+36n\ge1\). Indeed, the identity class
gives the displayed binomial term. Once the phases are fixed, the
contribution from \(\mathscr P\) is a polynomial in \(n\) of degree at most
\(5\), and the simple-spectrum contribution has degree at most \(2\).
This is an exact identity; no asymptotic remainder is used.

These classwise formulas initially produce \(P_{\eta,c}\) over a cyclotomic
field. At every admissible query number, however,
\[
h_m(\eta)
=
\frac{1}{d_\eta}
\sum_{\substack{\lambda\vdash m\\ \ell(\lambda)\le3}}
d_\lambda n_{\eta,\lambda}
\in\mathbb Q.
\]
The binomial term in Eq.~\eqref{eq:supp-clifford-residue-polynomial} is
also rational. Since \(\deg P_{\eta,c}\le5\), its values at six distinct
admissible integers, starting at \(n=1\) when \(c=0\), determine its
coefficients through a rational Vandermonde system. Hence
\(P_{\eta,c}\in\mathbb Q[x]\).

For every candidate \(\nu\ne\eta_c^\star\) in the relevant family, define
\begin{equation}
\Delta_{\nu,c}
:=
P_{\eta_c^\star,c}-P_{\nu,c}
\in\mathbb Q[x],
\qquad
\deg\Delta_{\nu,c}\le5.
\label{eq:supp-clifford-difference-polynomial}
\end{equation}
At every admissible integer \(n\), the identity-class contribution, which is
common to all candidate types, cancels, and therefore
\[
\Delta_{\nu,c}(n)
=
h_{c+36n}(\eta_c^\star)
-
h_{c+36n}(\nu).
\]

For each \(c\), the first index \(n\) such that \(c+36n\ge5\) is
\begin{equation}
n_c
:=
\min\{n\in\mathbb Z_{\ge0}:c+36n\ge5\}
=
\begin{cases}
1,&0\le c\le4,\\
0,&5\le c\le35 .
\end{cases}
\label{eq:supp-clifford-first-admissible-index}
\end{equation}
The progressions in
Eq.~\eqref{eq:supp-clifford-mod36-parametrization}, restricted to
\(n\ge n_c\), partition the integers \(m\ge5\).

After translating each difference polynomial to the beginning of its
range, write
\begin{equation}
\Delta_{\nu,c}(n_c+y)
=
\sum_{j=0}^{5}a_j^{(\nu,c)}y^j,
\qquad
a_j^{(\nu,c)}\in\mathbb Q.
\label{eq:supp-clifford-shifted-difference}
\end{equation}
For every comparison, the exact rational data recorded in
Subsec.~\ref{app:supp-clifford-type-certificates} give
\begin{equation}
a_0^{(\nu,c)}>0,
\qquad
a_j^{(\nu,c)}\ge0
\quad(1\le j\le5).
\label{eq:supp-clifford-certificate-signs}
\end{equation}
Hence, for every \(y\ge0\),
\[
\Delta_{\nu,c}(n_c+y)
\ge
a_0^{(\nu,c)}
>
0,
\]
and therefore \(\Delta_{\nu,c}(n)>0\) for every integer \(n\ge n_c\).

The certificate is exhaustive. Exactly \(12\) residue classes satisfy
\(3\mid c\); their family \(\mathcal F_0\) contains \(10\) candidates
and hence \(9\) competitors. The remaining \(24\) classes fall under the
seven-candidate family \(\mathcal F_1\) and hence have \(6\) competitors.
The certificate therefore covers all \(252\) winner--competitor
comparisons. The smallest shifted constant is \(5/2\), attained only at
\(c=10\), by the competitor \(\eta_3^{(0)}\).
The coefficient of \(y^5\) vanishes only when
\(c\equiv1\pmod6\) and \(\nu=\eta_3^{(1)}\), exactly the residue branch
of the leading asymptotic tie resolved above.

\emph{Completion and sharpness.}
Let \(m\ge5\), choose its unique residue
\(c\in\{0,\ldots,35\}\), and write \(m=c+36n\). Then \(n\ge n_c\) and
\(\eta_m^\star=\eta_c^\star\). Hence, for every candidate
\(\nu\ne\eta_m^\star\),
\[
h_m(\eta_m^\star)-h_m(\nu)
=
\Delta_{\nu,c}(n)
>
0.
\]
Thus \(\eta_m^\star\) strictly maximizes the score over the full candidate
family; by the candidate-family reduction,
\(\eta_m^\star\in\mathcal I_m\), and its strict dominance gives
\[
h_m(\eta_m^\star)=h_{\max}.
\]
Consequently, \(\eta_m^\star\) is the unique type in \(\mathcal I_m\)
attaining this value for every \(m\ge5\).

The threshold is sharp. The exact four-query scores satisfy
\[
h_4\!\left(\eta_3^{(0)}\right)
=
h_4\!\left(\eta_6^{(01)}\right)
=
5.
\]
Thus \(h_{\max}=5\) at \(m=4\), and both types occur and attain the maximum. The
maximizer is not unique, so the uniform uniqueness threshold cannot be
lowered below \(5\).
\end{proof}

The theorem determines only the maximizing effective Clifford type; it
does not assert uniqueness of an optimal probe state, system marginal,
POVM, or protocol. Nor does type selection alone imply a Wigner-positive
obstruction.

\section{Wigner-positive qutrit-Clifford testing}
\label{sec:supp-clifford-wigner}

\subsection{Low-query attainability}
\label{sec:supp-clifford-wigner-low-query}

Section~\ref{sec:supp-clifford-representation} identifies the relevant Clifford
types and carries out the score comparisons underlying the unrestricted optimum.
It remains to decide whether this optimum can be
attained with Wigner-positive probes and measurements. For
\(u=(p,q)\in\mathbb F_3^2\), we use the standard
finite-field qutrit phase-space
convention~\cite{GibbonsHoffmanWootters2004,gross2006hudson}:
\begin{equation}
\begin{aligned}
D_{(p,q)}
&:=
\omega^{2pq}X^pZ^q,
&
A_0
&:=
\sum_{j\in\mathbb F_3}
\lvert j\rangle\langle-j\rvert,
&
A_u
&:=
D_uA_0D_u^\dagger.
\end{aligned}
\label{eq:supp-qutrit-phase-point-convention}
\end{equation}
All labels and exponents are understood modulo \(3\). The tensor-product phase-point
operators and the state and effect Wigner functions are defined in
Eq.~\eqref{eq:qutrit-wigner-functions}.

Exact evaluation of
Eqs.~\eqref{eq:supp-clifford-branching}
and~\eqref{eq:subgroup-type-score} at \(m=3,4\) shows that the maximum
Clifford score is \(5\) in both cases; the complete scores are listed in
Subsec.~\ref{app:supp-clifford-type-certificates}.
Equation~\eqref{eq:optimal-type-II-error} therefore gives
\begin{equation}
\beta_{3,3}^{\star}(0)
=
\beta_{3,4}^{\star}(0)
=
\frac15.
\label{eq:supp-clifford-low-query-benchmark}
\end{equation}

\begin{proposition}[Low-query Wigner-positive attainability]
\label{prop:supp-clifford-low-query-wigner}
For each \(m\in\{3,4\}\), there exists an ancilla-free Wigner-positive
\(m\)-query parallel protocol
\((\sigma^{(m)},M^{(m)})\), with
\(M^{(m)}=\{M_0^{(m)},M_1^{(m)}\}\), such that
\[
\alpha(\sigma^{(m)},M^{(m)})=0,
\qquad
\beta(\sigma^{(m)},M^{(m)})
=
\frac15
=
\beta_{3,m}^{\star}(0).
\]
\end{proposition}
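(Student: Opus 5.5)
We construct the protocols explicitly and verify them by finite computation, instead of invoking an abstract existence argument. Given the target value \(\beta^\star_{3,m}(0)=1/5\) from Eq.~\eqref{eq:supp-clifford-low-query-benchmark}, the natural ansatz is a pure-state return test: a stabilizer probe \(\sigma=|\psi\rangle\langle\psi|\) on \((\mathbb C^3)^{\otimes m}\) together with an accepting effect \(M_0\). The effect should be the projector onto the \(\widetilde G_m\)-invariant subspace spanned by the Clifford orbit of \(|\psi\rangle\); equivalently, it is the isotypic component of a maximizing type cut down to the occupied Schur sectors. Zero type-I error is then automatic, because every \(C^{\otimes m}|\psi\rangle\) lies in the range of \(M_0\).

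For the type-II error we will use the structure exhibited in the proof of Theorem~\ref{thm:supp-support-weight-locking}. If the probe lies in a single maximizing isotypic block, its Schur profile equals \(\boldsymbol\pi^\star_\eta\), and within each active sector its reduction on the multiplicity-tensor-\(S_\lambda\) factor is maximally mixed in the appropriate sense, then Eqs.~\eqref{eq:supp-typewise-prebound}--\eqref{eq:supp-typewise-quantitative} hold with equality and give \(\beta=1/h_{\max}=1/5\). In practice we will compute \(\beta=\Tr[M_0\,\mathcal T_{\mathrm U(3)}(\sigma)]\) directly, via the Schur decomposition \(\mathcal T_{\mathrm U(3)}(\sigma)=\bigoplus_\lambda \id_{U_\lambda}/d_\lambda\otimes T_\lambda\).

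The natural candidates are as follows. At \(m=3\) the unique maximizer has score \(5\), and the candidate probe is a stabilizer state lying in the corresponding type, for example a state built from \(|000\rangle,|111\rangle,|222\rangle\) or a symmetric/antisymmetric stabilizer combination. At \(m=4\) both \(\eta_3^{(0)}\) and \(\eta_6^{(01)}\) score \(5\), so either may be used; we would try a three-dimensional type \(\eta_3^{(0)}\), whose Heisenberg structure suggests a probe such as \(|0\rangle\otimes|\Phi\rangle\) with \(|\Phi\rangle\) a stabilizer state of three qutrits, or a four-qutrit graph state.

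Once a candidate pair is found, Wigner positivity is checked in two parts. For the probe it is automatic, since stabilizer states are Wigner-positive by the discrete Hudson theorem~\cite{gross2006hudson}. For the effect we must verify \(\Tr(M_0A_{\mathbf u})\in[0,1]\) for all \(3^{2m}\) phase points \(\mathbf u\); nonnegativity of \(W(M_1\mid\mathbf u)=1-W(M_0\mid\mathbf u)\) then follows from \(\Tr A_{\mathbf u}=1\). This step is a finite exact computation over cyclotomic rationals.

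The main obstacle is this last check. The invariant projector \(M_0\) is generally not Wigner-positive, since a projector onto a large Clifford-invariant subspace typically has negative phase-point overlaps. If the full isotypic projector fails, the first remedy is to shrink \(M_0\) to the projector onto \(\operatorname{span}\{C^{\otimes m}|\psi\rangle\}\), which still gives \(\alpha=0\) but can only decrease \(\beta\) relative to the isotypic choice, and to recheck. The second remedy is to replace \(M_0\) by a mixture of stabilizer projectors, i.e.\ a Clifford-twirled stabilizer effect, that dominates the orbit span; such mixtures are Wigner-positive by construction, and we would then recompute \(\beta\) to confirm it equals \(1/5\). Choosing the probe so that its orbit span is itself a stabilizer code space, which makes \(M_0\) a stabilizer code projector and hence Wigner-positive, is the route we expect to succeed. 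We will search for such a code among small qutrit stabilizer codes carrying the maximizing type.
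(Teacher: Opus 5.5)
\textbf{There is a genuine gap: the proposal never fixes a protocol, and its leading ansatz provably fails Wigner positivity at $m=3$.}

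Your proposal leaves the existence claim to a search rather than a construction. At $m=3$ the maximizer $\tau_{01}$ is unique and occurs only in $\operatorname{Sym}^3(\mathbb C^3)$, with multiplicity one. Support locking therefore confines every optimal system marginal to the two-dimensional invariant space $\mathcal K=\operatorname{span}\{\lvert G\rangle,\lvert S\rangle\}$ of the paper's proof. This space is both the orbit span of any nonzero vector in it and the full isotypic block.

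Your choice $M_0=\Pi_{\mathcal K}$ does give $\alpha=0$ and $\beta=\Tr(\Pi_{\mathcal K}\Pi_{\mathrm{sym}})/10=1/5$. However, $A_0^{\otimes3}$ fixes both $\lvert G\rangle$ and $\lvert S\rangle$, so $W(\Pi_{\mathcal K}\mid0,0,0)=2$ and $W(\id-\Pi_{\mathcal K}\mid0,0,0)=-1$. This violates your own criterion $W(M_0\mid\mathbf u)\in[0,1]$.

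Your diagnosis points the wrong way. The defect is a Wigner value above one, i.e.\ negativity of the rejecting effect. Shrinking $M_0$ to the orbit span cannot help, since here it already is the orbit span. A probe whose orbit span is itself a stabilizer code space cannot exist, because that span has dimension $2$, not a power of $3$. Your $m=4$ candidate $\lvert0\rangle\otimes\lvert G\rangle$ fails the same way: its orbit span is $\mathbb C^3\otimes\mathcal K$, so the orbit-span effect is $\id_3\otimes\Pi_{\mathcal K}$.

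The missing observation is that, for a probe in $\operatorname{Sym}^3$, the Haar twirl is $\Pi_{\mathrm{sym}}/10$. So $\beta$ sees only the compression of $M_0$ to $\operatorname{Sym}^3$, and $\alpha=0$ with $\beta=1/5$ pins that compression to $\Pi_{\mathcal K}$. The block of $M_0$ on $(\operatorname{Sym}^3)^\perp$ is free, and it is exactly what repairs Wigner positivity.

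The paper enlarges the accepting effect to $M_0=\Pi_{\mathcal K}+\id-\Pi_{\mathrm{sym}}$, i.e.\ $M_1=\Pi_{\operatorname{Sym}^3\ominus\mathcal K}$. Evaluating one representative of each of the four diagonal-Clifford/permutation orbits of $(\mathbb F_3^2)^3$ gives $W(M_1\mid\mathbf u)=\mathbf 1_{\{\lvert\{u_1,u_2,u_3\}\rvert=2\}}$.

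Your code idea can be rescued in the same way, but only with a code strictly larger than the orbit span. The joint $+1$ eigenspace of $X^{\otimes3}$ and $Z^{\otimes3}$ is $\mathcal K\oplus\Lambda^3(\mathbb C^3)$. Its extra vector is antisymmetric, so $\beta$ stays $1/5$. Since $\Tr(X^aZ^bA_{(p,q)})=\omega^{ab+bp-aq}$, a character sum gives $W(M_0\mid\mathbf u)=\mathbf 1_{\{u_1+u_2+u_3=0\}}$.

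For $m=4$ no new search is needed. Append a spectator query qutrit in a Wigner-positive state (the paper uses $\id_3/3$) and set $M_j^{(4)}=M_j^{(3)}\otimes\id_3$. Both errors are unchanged, because the spectator contributes $\Tr[U(\id_3/3)U^\dagger]=1$. Wigner positivity survives these tensor products.
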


\begin{proof}
\emph{Three queries.}
In \(\operatorname{Sym}^3(\mathbb C^3)\), define the orthonormal
vectors
\begin{equation}
\begin{aligned}
\lvert G\rangle
&:=
\frac1{\sqrt3}
\bigl(
\lvert000\rangle
+
\lvert111\rangle
+
\lvert222\rangle
\bigr),
&
\lvert S\rangle
&:=
\frac1{\sqrt6}
\sum_{\pi\in\mathfrak S_3}
\lvert\pi(012)\rangle.
\end{aligned}
\label{eq:supp-clifford-low-query-symmetric-vectors}
\end{equation}
Set
\(
\mathcal K:=\operatorname{span}\{\lvert G\rangle,\lvert S\rangle\}
\)
and
\(
\mathcal Q:=\operatorname{Sym}^3(\mathbb C^3)\ominus\mathcal K
\).
Let \(\Pi_{\mathcal K}\) and \(\Pi_{\mathcal Q}\) denote the corresponding
orthogonal projectors. In the ordered basis
\((\lvert G\rangle,\lvert S\rangle)\), the tensor-power Clifford generators
restrict to
\[
\left.X_3\right|_{\mathcal K}
=
\left.Z_3\right|_{\mathcal K}
=
\id_2,
\qquad
\left.P_3\right|_{\mathcal K}
=
\begin{pmatrix}
1&0\\
0&\omega
\end{pmatrix},
\qquad
\left.F_3\right|_{\mathcal K}
=
\frac1{\sqrt3}
\begin{pmatrix}
1&\sqrt2\\
\sqrt2&-1
\end{pmatrix}.
\]
Thus \(\mathcal K\) is invariant under the tensor-power action of
\(\mathcal C_3\).

Choose the probe and binary POVM
\begin{equation}
\sigma^{(3)}
:=
\lvert G\rangle\langle G\rvert,
\qquad
M_1^{(3)}
:=
\Pi_{\mathcal Q},
\qquad
M_0^{(3)}
:=
\id_3^{\otimes3}-M_1^{(3)}.
\label{eq:supp-clifford-three-query-protocol}
\end{equation}
The two effects are complementary orthogonal projectors, so they form a
binary POVM in which outcome \(0\) accepts the Clifford hypothesis. For every
\(C\in\mathcal C_3\), invariance gives
\(C^{\otimes3}\lvert G\rangle\in\mathcal K\). Since
\(M_1^{(3)}=\Pi_{\mathcal Q}\), the rejection probability vanishes for every
Clifford unitary, and hence
\(\alpha(\sigma^{(3)},M^{(3)})=0\).

Let \(\Pi_{\mathrm{sym}}\) denote the projector onto
\(\operatorname{Sym}^3(\mathbb C^3)\). This space is an irreducible
\(\mathrm U(3)\)-module of dimension \(10\), so the Haar twirl is
\[
\mathcal T_{\mathrm U(3)}(\sigma^{(3)})
=
\frac{\Pi_{\mathrm{sym}}}{10}.
\]
Within the symmetric sector,
\(M_0^{(3)}\) restricts to \(\Pi_{\mathcal K}\), which has rank \(2\).
Therefore
\[
\beta(\sigma^{(3)},M^{(3)})
=
\Tr\!\left(
M_0^{(3)}\frac{\Pi_{\mathrm{sym}}}{10}
\right)
=
\frac{\dim\mathcal K}{10}
=
\frac15.
\]

Since \(\lvert G\rangle\) is a stabilizer state of three qutrits,
\(\sigma^{(3)}\) is Wigner-positive. Diagonal affine-symplectic transformations
and tensor-factor permutations partition \((\mathbb F_3^2)^3\) into four orbit
types: triples with all entries equal, triples with exactly two distinct
entries, triples of three distinct collinear points, and triples of three
distinct noncollinear points. Since \(M_1^{(3)}\) is invariant under the
diagonal Clifford action and tensor-factor permutations, Clifford covariance
makes its Wigner function constant on each orbit. The values of
\(W(M_1^{(3)}\mid\mathbf u)\) on these four orbit types are
\(0,1,0,0\), respectively; the exact evaluations for one representative
of each orbit type are recorded in
Subsec.~\ref{app:supp-wigner-witness-certificates}. Equivalently,
\begin{equation}
W\!\left(M_1^{(3)}\mid\mathbf u\right)
=
\mathbf 1_{\{\lvert\{u_1,u_2,u_3\}\rvert=2\}},
\qquad
\mathbf u=(u_1,u_2,u_3)\in(\mathbb F_3^2)^3.
\label{eq:supp-clifford-three-query-wigner-indicator}
\end{equation}
Thus \(M_1^{(3)}\) is Wigner-positive. Because
\(W(\id_3^{\otimes3}\mid\mathbf u)=1\), the complementary effect satisfies
\[
W\!\left(M_0^{(3)}\mid\mathbf u\right)
=
1-W\!\left(M_1^{(3)}\mid\mathbf u\right)
\in
\{0,1\}.
\]
Hence the probe and both POVM effects are Wigner-positive.

\emph{Four queries.}
Append a maximally mixed fourth query register and let the measurement act
trivially on it:
\begin{equation}
\sigma^{(4)}
:=
\sigma^{(3)}\otimes\frac{\id_3}{3},
\qquad
M_j^{(4)}
:=
M_j^{(3)}\otimes\id_3,
\qquad
j=0,1.
\label{eq:supp-clifford-spectator-lift}
\end{equation}
The fourth qutrit is one of the four query systems, not an ancilla. Since the
measurement acts trivially on the fourth qutrit, its contribution to either
outcome probability is
\[
\Tr\!\left[
\id_3 U(\id_3/3)U^\dagger
\right]
=
1
\qquad
(U\in\mathrm U(3)).
\]
Hence both error probabilities are unchanged:
\[
\alpha(\sigma^{(4)},M^{(4)})=0,
\qquad
\beta(\sigma^{(4)},M^{(4)})=\frac15.
\]
State Wigner functions factor under tensor products, with
\(W_{\id_3/3}(u)=1/9\). For the effects,
\(
W(E\otimes\id_3\mid\mathbf u,u_4)=W(E\mid\mathbf u)
\).
The four-query probe and both POVM effects are therefore Wigner-positive.
Because both protocols have \(\alpha=0\)
and \(\beta=1/5\), Eq.~\eqref{eq:supp-clifford-low-query-benchmark}
shows that each attains the unrestricted optimum.
\end{proof}

\subsection{Finite-window obstruction}
\label{sec:supp-clifford-wigner-obstruction}

For \(m=5,\ldots,9\), phase-space witnesses separate the set of
Wigner-positive system marginals from the locked marginal set by a
positive trace distance, yielding a strict excess type-II-error gap
independently of the final measurement.

\begin{theorem}[Finite-window Wigner-positive gap]
\label{thm:supp-clifford-finite-window-wigner}
For every \(m=5,\ldots,9\), there exists a constant
\(\delta_m>0\) such that every \(m\)-query parallel protocol
satisfying \(\alpha(\sigma,M)=0\) and having a Wigner-positive system
marginal obeys
\[
\beta(\sigma,M)
\ge
\beta_{3,m}^{\star}(0)+\delta_m.
\]
Consequently, no Wigner-positive
\(m\)-query parallel protocol attains \(\beta_{3,m}^{\star}(0)\).
\end{theorem}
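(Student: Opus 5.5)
The plan is to combine the trace-distance stability of Corollary~\ref{cor:supp-trace-stability} with a compactness separation between the Wigner-positive state set and the locked marginal set \(\mathcal E_{\mathrm{lock}}\).

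First, fix \(m\in\{5,\ldots,9\}\). By Theorem~\ref{thm:supp-clifford-type-selection}, the maximizing type \(\eta_m^\star\) is unique. Hence \(\mathcal E_{\mathrm{lock}}\) is the set of states supported on \(\mathcal H_{\eta_m^\star}\) whose joint-sector weights satisfy \(\Tr(\Pi_{\eta_m^\star,\lambda}\omega)=\pi^\star_{\eta_m^\star,\lambda}\) for every \(\lambda\in\Lambda_{\eta_m^\star}\). This is an explicit affine slice, cut out by the linear constraints in Eq.~\eqref{eq:supp-locked-linear}.

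Let \(\mathcal W_m\) be the set of Wigner-positive states on \((\mathbb C^3)^{\otimes m}\). It is a closed convex polytope, defined by finitely many linear inequalities \(\Tr(A_{\mathbf u}\rho)\ge0\) inside the compact state space.

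The core step is to show \(\mathcal W_m\cap\mathcal E_{\mathrm{lock}}=\emptyset\). I would produce an explicit separating witness: a Hermitian operator \(Y\), together with a number \(c\), such that
\[
\Tr(Y\omega)\le c-\gamma \quad\text{for all } \omega\in\mathcal E_{\mathrm{lock}},
\qquad
\Tr(Y\rho)\ge c \quad\text{for all } \rho\in\mathcal W_m,
\]
for some \(\gamma>0\). The natural form is a Farkas-type dual certificate, with \(Y\) a combination of the constraint operators. Concretely, I would aim for
\[
Y=\sum_{\mathbf u}y_{\mathbf u}A_{\mathbf u}+\text{(terms in }\Pi_\eta,\Pi_{\eta,\lambda},\id),
\qquad y_{\mathbf u}\ge0.
\]
On \(\mathcal W_m\) the nonnegative combination of phase-point expectations is \(\ge0\). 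On \(\mathcal E_{\mathrm{lock}}\) the combination must be strictly negative: the linear type and sector terms are fixed by the locking constraints, and the phase-point part is upper-bounded by an operator inequality on the locked face.

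To make the certificate finite and exact, I would exploit symmetry. Both sets are invariant under the diagonal Clifford action and under query permutations; for \(\mathcal E_{\mathrm{lock}}\) this holds because the projectors commute with both actions. So I can twirl and restrict to witnesses that are invariant under both. The phase-point operators then collapse to orbit sums over \((\mathbb F_3^2)^m\), indexed by diagonal-affine-symplectic\(\times\mathfrak S_m\) orbits. This reduces the problem to a small LP or SDP in the orbit variables, whose rational solution can be verified exactly for each \(m\).

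Given disjointness, the distance function is continuous and the sets are compact, so
\[
d_m:=\min\Bigl\{\tfrac12\|\rho-\omega\|_1:\rho\in\mathcal W_m,\ \omega\in\mathcal E_{\mathrm{lock}}\Bigr\}>0.
\]
Applying Eq.~\eqref{eq:supp-trace-stability} to any protocol with \(\alpha=0\) and marginal \(\rho\in\mathcal W_m\) gives \(d_m\le\sqrt{C_{3,m}\varepsilon}\). Therefore \(\delta_m:=d_m^2/C_{3,m}\) works; an explicit value follows by bounding \(d_m\) below by \(\gamma/(2\|Y\|_\infty)\) from the witness.

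The protocol-level consequence follows because partial trace preserves Wigner positivity: tracing out qutrit factors sums the Wigner function, which keeps it nonnegative. So every Wigner-positive probe has a marginal in \(\mathcal W_m\).

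The main obstacle is constructing and certifying the separating witness for each of the five query numbers. The locked slice is not a single state: it allows arbitrary states and coherences within the sectors, so the bound on \(\Tr(Y\omega)\) must hold uniformly over that freedom. I would first try witnesses built from the sector projectors themselves. I would compute the phase-space function \(\Tr(\Pi_{\eta_m^\star,\lambda}A_{\mathbf u})\) on orbit representatives and look for a nonnegative orbit combination that forces the \(\lambda\)-profile of any Wigner-positive state away from \(\boldsymbol\pi^\star_{\eta_m^\star}\). If that fails, I would solve the symmetrized SDP numerically and rationalize the solution. The \(m=3\) case shows the separation cannot be automatic, so each \(m\) needs its own certificate.
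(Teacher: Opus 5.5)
Your plan matches the paper's proof. Both use nonnegative combinations of diagonal-Clifford- and permutation-invariant phase-space orbit averages as witnesses. The paper evaluates these witnesses on the locked set through the joint-commutant multiplicity-space blocks, obtaining a constant negative value $-a_m$. It turns this into the trace-distance gap $a_m/(2\|B_m\|_\infty)$, then combines Corollary~\ref{cor:supp-trace-stability} with partial-trace preservation of Wigner positivity to get $\delta_m=a_m^2/(4\|B_m\|_\infty^2C_{3,m})$, exactly as you outline.

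The only part you leave undone is producing the five exact certificates. In the paper these need no added sector or identity terms, because the reduced blocks already make $\Tr(B_m\omega)$ independent of the within-sector freedom of the locked slice that you flagged as the main obstacle.
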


\begin{proof}
\emph{Locked marginal sets and phase-space witnesses.}
Fix \(m\in\{5,\ldots,9\}\), and let
\(\eta_m^\star\) be the unique maximizing type from
Theorem~\ref{thm:supp-clifford-type-selection}. Call a Schur label
\(\lambda\) active if \(n_{\eta_m^\star,\lambda}>0\), and define its locked
weight by
\[
w_{m,\lambda}
:=
\frac{d_\lambda n_{\eta_m^\star,\lambda}}
     {d_{\eta_m^\star}h_{\max}}.
\]
These weights are the components of the typewise extremal profile
\(\boldsymbol{\pi}_{\eta_m^\star}^{\star}\) defined in
Sec.~\ref{sec:supp-locking}.
The exact-locking clause of
Theorem~\ref{thm:supp-support-weight-locking}, together with
Theorem~\ref{thm:supp-clifford-type-selection}, implies that every optimal system
marginal belongs to the locked marginal set \(\mathcal E_m\), consisting of the
\(m\)-qutrit density operators \(\rho\) satisfying
\[
\operatorname{supp}\rho
\subseteq
\mathcal H_{\eta_m^\star},
\qquad
\Tr\!\left(
\Pi_{\eta_m^\star,\lambda}\rho
\right)
=
w_{m,\lambda}
\quad
\forall\,\lambda:\,
n_{\eta_m^\star,\lambda}>0.
\]
Let \(\mathcal P_m\) be the free marginal set for Wigner positivity, namely the
set of \(m\)-qutrit density operators \(\rho\) satisfying
\[
W_\rho(\mathbf u)\ge0
\quad
\forall\,\mathbf u\in(\mathbb F_3^2)^m.
\]
Since the maximizing type is unique, \(\mathcal E_m\) is precisely
the locked marginal set of Corollary~\ref{cor:supp-trace-stability}
for the qutrit Clifford subgroup. We establish
\begin{equation}
\inf_{\substack{\rho\in\mathcal P_m\\ \omega\in\mathcal E_m}}
\frac12\|\rho-\omega\|_1
>
0,
\qquad
m=5,\ldots,9.
\label{eq:supp-clifford-locked-wigner-distance}
\end{equation}

The separating witnesses are built from normalized phase-space orbit
averages. In the orbit labels below, commas are omitted from multiplicity
patterns, powers indicate repeated multiplicities, and a thin space separates
adjacent powered groups, as in \(2^2\,1^5\). For a line pattern
\(\mu=(\mu_1,\mu_2,\mu_3)\), let \(\mathcal L_\mu\) be the set of
ordered tuples whose entries lie on a common affine line and whose
multiplicities at the three points of that line are
\(\mu_1,\mu_2,\mu_3\) in some order. For an affine-spanning pattern
\(\mu\), let \(\mathcal A_\mu\)
be the set of ordered tuples whose distinct entries affinely span
\(\mathbb F_3^2\) and whose nonzero multiplicities are given by \(\mu\).
Finally, let \(\mathcal D_{330+1}\) be the set of ordered seven-tuples in which
six entries lie on an affine line with multiplicities \(3,3,0\) at its three
points, while the remaining entry lies off that line. For any orbit set
\(\mathcal O\) defined above, let
\begin{equation}
\begin{aligned}
H_{\mathcal O}
&:=
\frac{1}{|\mathcal O|}
\sum_{\mathbf u\in\mathcal O}A_{\mathbf u},
\\
H_\mu^{\mathrm L}
&:=
H_{\mathcal L_\mu},
\qquad
H_\mu^{\mathrm A}
:=
H_{\mathcal A_\mu},
\qquad
H_{330+1}^{\mathrm{LD}}
:=
H_{\mathcal D_{330+1}}.
\end{aligned}
\label{eq:supp-clifford-phase-space-orbit-averages}
\end{equation}
For every such orbit set \(\mathcal O\),
Eq.~\eqref{eq:qutrit-wigner-functions} gives
\begin{equation}
\Tr(H_{\mathcal O}\rho)
=
\frac{3^m}{|\mathcal O|}
\sum_{\mathbf u\in\mathcal O}W_\rho(\mathbf u)
\ge0
\qquad
(\rho\in\mathcal P_m).
\label{eq:supp-clifford-orbit-wigner-nonnegativity}
\end{equation}
Thus every nonnegative linear combination of these orbit averages is
nonnegative on \(\mathcal P_m\), although the resulting witness need not
be positive semidefinite.

\emph{Joint-commutant reduction.}
Each orbit set is invariant under permutations of the \(m\) tensor
factors and under the diagonal qutrit Clifford action. Every Hermitian
operator \(B\) in the real span of the corresponding orbit averages
therefore belongs to the joint commutant of these actions. For an active
Schur sector \(\lambda\), the corresponding joint \(G\)-type/Schur sector is
\[
\operatorname{Ran}\Pi_{\eta_m^\star,\lambda}
\cong
V_{\eta_m^\star}
\otimes
\mathbb C^{\,n_{\eta_m^\star,\lambda}}
\otimes
S_\lambda,
\qquad
f^\lambda=\dim S_\lambda.
\]
Schur's lemma gives a unique Hermitian operator \(K_{B,\lambda}\) on
the multiplicity space such that
\begin{equation}
\Pi_{\eta_m^\star,\lambda}
B
\Pi_{\eta_m^\star,\lambda}
=
\id_{V_{\eta_m^\star}}
\otimes
K_{B,\lambda}
\otimes
\id_{S_\lambda}.
\label{eq:supp-clifford-witness-reduced-block}
\end{equation}

For \(\rho\in\mathcal E_m\), define the normalized multiplicity-space
state
\[
\rho_\lambda
:=
\frac1{w_{m,\lambda}}
\Tr_{V_{\eta_m^\star}\otimes S_\lambda}
\!\left(
\Pi_{\eta_m^\star,\lambda}
\rho
\Pi_{\eta_m^\star,\lambda}
\right).
\]
Then \(\rho_\lambda\) is a density operator on the multiplicity space, and
the \(\lambda\)-sector contribution to \(\Tr(B\rho)\) depends on \(\rho\) only
through \(\rho_\lambda\). Cross-sector coherences do not contribute
because \(B\) preserves every joint \(G\)-type/Schur sector and acts within
each such sector as the identity on the \(V_{\eta_m^\star}\) and \(S_\lambda\) factors.
Hence
\begin{equation}
\Tr(B\rho)
=
\sum_{\lambda:\,n_{\eta_m^\star,\lambda}>0}
w_{m,\lambda}
\Tr\!\left(K_{B,\lambda}\rho_\lambda\right).
\label{eq:supp-clifford-witness-reduced-expectation}
\end{equation}
Thus each exact certificate reduces to evaluating the corresponding
multiplicity-space blocks \(K_{B,\lambda}\).
Subsec.~\ref{app:supp-wigner-witness-certificates} records the
active-sector, locked-weight, and reduced-block data.

\emph{Exact certificates.}
For each witness \(B_m\) below, write
\(K_{m,\lambda}:=K_{B_m,\lambda}\).
The query-dependent witnesses and their constant values on
\(\mathcal E_m\) are summarized below:
\[
{\renewcommand{\arraystretch}{1.35}
\begin{array}{@{}c@{\qquad}c@{\qquad}c@{}}
\toprule
m
&
B_m
&
\Tr(B_m\rho)
\\
\midrule
5
&
H_{221}^{\mathrm L}
&
-\frac{1}{15}
\\
6
&
H_{222}^{\mathrm L}
&
-\frac{1}{5}
\\
7
&
\frac{1}{12}H_{430}^{\mathrm L}
+
\frac{1}{4}H_{331}^{\mathrm L}
+
\frac{2}{3}H_{330+1}^{\mathrm{LD}}
&
-\frac{55}{8064}
\\
8
&
H_{800}^{\mathrm L}
&
-\frac{1}{3}
\\
9
&
39H_{1^9}^{\mathrm A}
+
1134H_{2^2\,1^5}^{\mathrm A}
+
2H_{333}^{\mathrm L}
&
-\frac{16}{805}
\\
\bottomrule
\end{array}}
\]

For \(m=5\), the maximizing type is
\(\eta_5^\star=\eta_3^{(1)}\). The active Schur sectors are
\((5)\) and \((4,1)\), with locked weights
\(w_{5,(5)}=\tfrac{7}{15}\) and
\(w_{5,(4,1)}=\tfrac{8}{15}\).
Take \(B_5:=H_{221}^{\mathrm L}\).
The two active multiplicity spaces
are one-dimensional, so the reduced blocks are scalars. Exact evaluation
of the weighted combination gives
\begin{equation}
\frac7{15}K_{5,(5)}
+
\frac8{15}K_{5,(4,1)}
=
-\frac1{15}.
\label{eq:supp-clifford-five-query-certificate}
\end{equation}
Equation~\eqref{eq:supp-clifford-witness-reduced-expectation} therefore yields
\[
\Tr(B_5\rho)
=
-\frac1{15}
<
0
\qquad
(\rho\in\mathcal E_5).
\]

For \(m=6\), the maximizing type is
\(\eta_6^\star=\chi_1\), and the only active Schur sector is
\((5,1)\). Take
\(B_6:=H_{222}^{\mathrm L}\). Exact evaluation gives
\[
K_{6,(5,1)}
=
-\frac15.
\]
Equation~\eqref{eq:supp-clifford-witness-reduced-expectation} therefore
yields
\[
\Tr(B_6\rho)
=
-\frac15
<
0
\qquad
(\rho\in\mathcal E_6).
\]

For \(m=7\), the maximizing type is
\(\eta_7^\star=\eta_3^{(0)}\). The active sectors are
\((7)\), \((6,1)\), and \((4,3)\), with locked weights
\[
\begin{aligned}
w_{7,(7)}&=\frac12,
&
w_{7,(6,1)}&=\frac13,
&
w_{7,(4,3)}&=\frac16.
\end{aligned}
\]
Take the witness
\[
B_7
:=
\frac1{12}H_{430}^{\mathrm L}
+
\frac14H_{331}^{\mathrm L}
+
\frac23H_{330+1}^{\mathrm{LD}}.
\]
The exact reduced blocks are
\[
K_{7,(7)}
=
-\frac1{96}\id_2,
\qquad
K_{7,(6,1)}
=
-\frac1{192},
\qquad
K_{7,(4,3)}
=
\frac1{1344}.
\]
Here \(\id_2\) acts on the two-dimensional multiplicity
space in the \((7)\)-sector, while the one-dimensional blocks
are identified with their scalar values. Although the
\((4,3)\)-block is positive, Eq.~\eqref{eq:supp-clifford-witness-reduced-expectation} and the locked weights
above give
\[
\Tr(B_7\rho)
=
-\frac{55}{8064}
<
0
\qquad
(\rho\in\mathcal E_7).
\]

For \(m=8\), the maximizing type is
\(\eta_8^\star=\eta_3^{(1)}\), and the active sectors are
\((8)\), \((7,1)\), \((6,1,1)\), \((5,3)\), and \((4,4)\).
Take \(B_8:=H_{800}^{\mathrm L}\). Exact evaluation gives
\[
K_{8,\lambda}
=
-\frac13\id
\]
on the multiplicity space of every active sector \(\lambda\). Since the
locked weights sum to one,
\[
\Tr(B_8\rho)
=
-\frac13
<
0
\qquad
(\rho\in\mathcal E_8).
\]

For \(m=9\), the maximizing type is
\(\eta_9^\star=\tau_{01}\), and the active sectors are
\((9)\), \((8,1)\), \((7,2)\), \((6,3)\), and \((4,4,1)\). Take
\[
B_9
:=
39H_{1^9}^{\mathrm A}
+
1134H_{2^2\,1^5}^{\mathrm A}
+
2H_{333}^{\mathrm L}.
\]
Only the \((9)\)-sector has a two-dimensional multiplicity space;
all other active sectors are multiplicity-free. Fix an orthonormal basis of
this space. For each orbit-average summand \(H\) in \(B_9\),
Eq.~\eqref{eq:supp-clifford-witness-reduced-expectation} gives
\[
\Tr(H\rho)
=
\Tr\!\Bigl[
\Bigl(
w_{9,(9)}K_{H,(9)}
+
\sum_{\substack{\lambda\ne(9)\\
n_{\eta_9^\star,\lambda}>0}}
w_{9,\lambda}K_{H,\lambda}\id_2
\Bigr)
\rho_{(9)}
\Bigr],
\]
where the reduced blocks in the sum are identified with their scalar
values. For the three orbit-average summands in \(B_9\), denote the
resulting effective Hermitian matrices on this two-dimensional space by
\[
K_{1^9}^{\mathrm A},
\qquad
K_{2^2\,1^5}^{\mathrm A},
\qquad
K_{333}^{\mathrm L}.
\]
Exact evaluation gives the matrix identity
\begin{equation}
39K_{1^9}^{\mathrm A}
+
1134K_{2^2\,1^5}^{\mathrm A}
+
2K_{333}^{\mathrm L}
=
-\frac{16}{805}\id_2.
\label{eq:supp-clifford-nine-query-certificate}
\end{equation}
Consequently,
\[
\Tr(B_9\rho)
=
-\frac{16}{805}
<
0
\qquad
(\rho\in\mathcal E_9).
\]

The five calculations above give
\begin{equation}
\Tr(B_m\omega)=-a_m
\quad
(\omega\in\mathcal E_m),
\qquad
(a_5,a_6,a_7,a_8,a_9)
=
\left(
\frac1{15},
\frac15,
\frac{55}{8064},
\frac13,
\frac{16}{805}
\right).
\label{eq:supp-clifford-witness-values}
\end{equation}
Set \(L_m:=\|B_m\|_\infty>0\).
Each \(B_m\) is a nonnegative linear combination of the normalized
orbit averages in Eq.~\eqref{eq:supp-clifford-phase-space-orbit-averages}, so
Eq.~\eqref{eq:supp-clifford-orbit-wigner-nonnegativity} gives
\(\Tr(B_m\rho)\ge0\) for every \(\rho\in\mathcal P_m\).
Hence, for \(\rho\in\mathcal P_m\) and \(\omega\in\mathcal E_m\),
\[
a_m
\le
\left|\Tr[B_m(\rho-\omega)]\right|
\le
L_m\|\rho-\omega\|_1.
\]
Therefore,
\begin{equation}
\inf_{\omega\in\mathcal E_m}
\frac12\|\rho-\omega\|_1
\ge
\frac{a_m}{2L_m}
\qquad
(\rho\in\mathcal P_m),
\label{eq:supp-clifford-witness-trace-distance}
\end{equation}
which proves Eq.~\eqref{eq:supp-clifford-locked-wigner-distance}.

Let \(C_{3,m}\) denote the constant in
Corollary~\ref{cor:supp-trace-stability} for the qutrit Clifford
subgroup. Combining Eq.~\eqref{eq:supp-clifford-witness-trace-distance}
with that corollary shows that every zero-type-I-error protocol with
system marginal in \(\mathcal P_m\) satisfies
\begin{equation}
\beta(\sigma,M)-\beta_{3,m}^{\star}(0)
\ge
\frac{a_m^2}{4L_m^2C_{3,m}}
=:
\delta_m
>
0,
\qquad
m=5,\ldots,9.
\label{eq:supp-clifford-witness-error-gap}
\end{equation}

It remains to pass from system marginals to protocols. Let
\(\sigma_{AR}\) be a Wigner-positive probe, where \(R\) is a possibly
trivial register of \(r\) qutrits, and set
\(\rho_A:=\Tr_R\sigma_{AR}\). The phase-point identity
\(\sum_{\mathbf v}A_{\mathbf v}=3^r\id_R\), together with
Eq.~\eqref{eq:qutrit-wigner-functions}, gives
\[
W_{\rho_A}(\mathbf u)
=
\sum_{\mathbf v\in(\mathbb F_3^2)^r}
W_{\sigma_{AR}}(\mathbf u,\mathbf v)
\ge
0.
\]
Thus a Wigner-positive probe has a Wigner-positive system marginal,
so Eq.~\eqref{eq:supp-clifford-witness-error-gap} applies to every such
zero-type-I-error protocol, even when the final measurement is
unrestricted. In particular, no such protocol attains
\(\beta_{3,m}^{\star}(0)\) for \(m=5,\ldots,9\).

Although Theorem~\ref{thm:supp-clifford-type-selection} determines the
maximizing Clifford type for every \(m\ge5\), the phase-space
certificates above are available only for \(m=5,\ldots,9\).
Wigner-positive attainability for \(m\ge10\) remains open.
\end{proof}

\section{Exact certificates}
\label{sec:supp-certificates}
\renewcommand{\thetable}{S\arabic{table}}

\subsection{Clifford type selection}
\label{app:supp-clifford-type-certificates}

This subsection records the exact finite data used in the proof of
Theorem~\ref{thm:supp-clifford-type-selection}.  All entries are exact
rational or cyclotomic quantities; no numerical rounding is used.  The
representation labels, candidate families, and lift conventions are those
fixed in Sec.~\ref{sec:supp-clifford-representation}.

\emph{Lift orders and phase periodicity.}
Exact enumeration of \(\mathcal C_3\) over
\(\mathbb Q(e^{i\pi/6})\) gives the counts in
Table~\ref{tab:supp-clifford-lift-orders}.
Spectral multiplicities are determined from characteristic
polynomials, and the listed power identities are verified
for all corresponding lifts in \(\mathcal C_3\).

\begin{center}
\begin{minipage}{\textwidth}
\centering
\setlength{\tabcolsep}{12pt}
\renewcommand{\arraystretch}{1.15}
\begingroup\small
\begin{tabular}{@{}lcc@{}}
\toprule
Property
& \(\mathscr P\)
& \(\mathscr P'\)
\\
\midrule
Projective elements
& \(33\)
& \(182\)
\\
Lifts in \(\mathcal C_3\)
& \(396\)
& \(2184\)
\\
Power identity
& \(C^{12}=\id_3\)
& \(C^{36}=\id_3\)
\\
\bottomrule
\end{tabular}
\par\endgroup
\captionsetup{hypcap=false,justification=centering}
\captionof{table}{Spectral phase-order bounds for the finite Clifford
group \(\mathcal C_3\). Each identity holds for every
\(C\in\mathcal C_3\) whose projective class belongs to the
indicated subset.}
\label{tab:supp-clifford-lift-orders}
\end{minipage}
\end{center}

These identities give eigenvalue orders dividing \(12\) and \(36\),
respectively, independently of the choice of lift within
\(\mathcal C_3\).

\emph{Leading and subleading coefficients.}
The asymptotic comparison in
Eq.~\eqref{eq:supp-clifford-asymptotic-expansion} is controlled by
\(B_\eta(r)\), except on the tied branch \(r=1\), where the
subleading coefficient \(C_\eta(s)\) is required.
Tables~\ref{tab:supp-clifford-b-f0} and~\ref{tab:supp-clifford-b-f1}
give the complete leading data.

\begin{center}
\begin{minipage}{\textwidth}
\centering
\renewcommand{\arraystretch}{1.08}
\setlength{\tabcolsep}{2.75pt}
\begin{tabular}{@{}c*{10}{r}@{}}
\toprule
\(r\)
&\(\chi_0\)&\(\chi_1\)&\(\chi_2\)
&\(\tau_{01}\)&\(\tau_{02}\)&\(\tau_{12}\)
&\(\nu_3\)&\(\Gamma_0\)&\(\Gamma_1\)&\(\Gamma_2\)
\\
\midrule
\(0\)&\(9\)&\(41\)&\(-23\)&\(7\)&\(-25\)&\(-9\)&\(9\)&\(0\)&\(8\)&\(-8\)
\\
\(3\)&\(-9\)&\(23\)&\(-41\)&\(25\)&\(-7\)&\(9\)&\(-9\)&\(0\)&\(8\)&\(-8\)
\\
\bottomrule
\end{tabular}
\captionsetup{hypcap=false}
\captionof{table}{Exact scaled coefficients \(207360\,B_\eta(r)\) for
\(\eta\in\mathcal F_0\) and \(r\in\{0,3\}\).}
\label{tab:supp-clifford-b-f0}
\end{minipage}
\end{center}

\begin{center}
\begin{minipage}{\textwidth}
\centering
\renewcommand{\arraystretch}{1.08}
\setlength{\tabcolsep}{3.25pt}
\begin{tabular}{@{}c*{7}{r}@{}}
\toprule
\(r\)
&\(\eta_3^{(0)}\)&\(\eta_3^{(1)}\)&\(\eta_3^{(2)}\)
&\(\eta_6^{(01)}\)&\(\eta_6^{(02)}\)&\(\eta_6^{(12)}\)&\(\eta_9\)
\\
\midrule
\(1\)&\(41\)&\(41\)&\(-55\)&\(23\)&\(-25\)&\(-25\)&\(9\)
\\
\(2\)&\(-41\)&\(55\)&\(-41\)&\(25\)&\(-23\)&\(25\)&\(-9\)
\\
\(4\)&\(23\)&\(23\)&\(-73\)&\(41\)&\(-7\)&\(-7\)&\(-9\)
\\
\(5\)&\(-23\)&\(73\)&\(-23\)&\(7\)&\(-41\)&\(7\)&\(9\)
\\
\bottomrule
\end{tabular}
\captionsetup{hypcap=false}
\captionof{table}{Exact scaled coefficients \(622080\,B_\eta(r)\) for
\(\eta\in\mathcal F_1\) and \(r\in\{1,2,4,5\}\).}
\label{tab:supp-clifford-b-f1}
\end{minipage}
\end{center}

\begin{samepage}
For \(r=1\), the subleading comparison uses the two residue classes
\(s=1,7\pmod{12}\).  Table~\ref{tab:supp-clifford-c-coefficients} records the coefficient
values, which are identical in the two residue classes.

\begin{center}
\begin{minipage}{\textwidth}
\centering
\renewcommand{\arraystretch}{1.08}
\setlength{\tabcolsep}{3.25pt}
\begin{tabular}{@{}c*{7}{r}@{}}
\toprule
\(s\)
&\(\eta_3^{(0)}\)&\(\eta_3^{(1)}\)&\(\eta_3^{(2)}\)
&\(\eta_6^{(01)}\)&\(\eta_6^{(02)}\)&\(\eta_6^{(12)}\)&\(\eta_9\)
\\
\midrule
\(1\)&\(155\)&\(91\)&\(-165\)&\(69\)&\(-59\)&\(-91\)&\(27\)
\\
\(7\)&\(155\)&\(91\)&\(-165\)&\(69\)&\(-59\)&\(-91\)&\(27\)
\\
\bottomrule
\end{tabular}
\captionsetup{hypcap=false}
\captionof{table}{Exact scaled coefficients \(82944\,C_\eta(s)\) for
\(\eta\in\mathcal F_1\) and \(s\in\{1,7\}\).}
\label{tab:supp-clifford-c-coefficients}
\end{minipage}
\end{center}

Thus for \(s\in\{1,7\}\),
\[
C_{\eta_3^{(0)}}(s)-C_{\eta_3^{(1)}}(s)
=
\frac1{1296}.
\]
\end{samepage}

\emph{The exact mod-\(36\) certificate.}
For the shifted difference-polynomial coefficients defined in
Eqs.~\eqref{eq:supp-clifford-difference-polynomial}--\eqref{eq:supp-clifford-certificate-signs},
define the residue-wise minima
\[
\begin{aligned}
a_{c,0}
&:=
\min_{\nu\ne\eta_c^\star}a_0^{(\nu,c)},
&
a_{c,+}
&:=
\min_{\nu\ne\eta_c^\star}
\min_{1\le j\le5}a_j^{(\nu,c)}.
\end{aligned}
\]
Let \(\nu_{c,0}\) be one competitor attaining \(a_{c,0}\), and let
\((\nu_{c,+},j_{c,+})\) be one competitor--degree pair attaining
\(a_{c,+}\).  Table~\ref{tab:supp-clifford-threshold-minima} records these
data for every residue class.

\begin{center}
\begin{minipage}{\textwidth}
\centering
\small
\renewcommand{\arraystretch}{1.28}
\setlength{\tabcolsep}{2.5pt}
\begin{tabular}{@{}c r c r l @ {\hspace{1.2em}}
                        c r c r l @ {\hspace{1.2em}}
                        c r c r l@{}}
\toprule
\multicolumn{5}{c}{\(0\le c\le11\)}
&\multicolumn{5}{c}{\(12\le c\le23\)}
&\multicolumn{5}{c}{\(24\le c\le35\)}
\\
\cmidrule(lr){1-5}\cmidrule(lr){6-10}\cmidrule(lr){11-15}
\(c\)&\(a_{c,0}\)&\(\nu_{c,0}\)&\(a_{c,+}\)&\((\nu_{c,+},j_{c,+})\)
&\(c\)&\(a_{c,0}\)&\(\nu_{c,0}\)&\(a_{c,+}\)&\((\nu_{c,+},j_{c,+})\)
&\(c\)&\(a_{c,0}\)&\(\nu_{c,0}\)&\(a_{c,+}\)&\((\nu_{c,+},j_{c,+})\)
\\
\midrule
\(0\) & \(13559\) & \(\chi_0\) & \(\frac{46656}{5}\) & \((\chi_0,5)\) & \(12\) & \(109\) & \(\chi_0\) & \(\frac{6574}{5}\) & \((\chi_0,1)\) & \(24\) & \(2114\) & \(\chi_0\) & \(\frac{46656}{5}\) & \((\chi_0,5)\) \\
\(1\) & \(2268\) & \(\eta_3^{(1)}\) & \(0\) & \((\eta_3^{(1)},5)\) & \(13\) & \(70\) & \(\eta_3^{(1)}\) & \(0\) & \((\eta_3^{(1)},5)\) & \(25\) & \(575\) & \(\eta_3^{(1)}\) & \(0\) & \((\eta_3^{(1)},5)\) \\
\(2\) & \(5962\) & \(\eta_6^{(01)}\) & \(2916\) & \((\eta_6^{(12)},5)\) & \(14\) & \(75\) & \(\eta_6^{(01)}\) & \(\frac{1595}{2}\) & \((\eta_6^{(01)},1)\) & \(26\) & \(1072\) & \(\eta_6^{(01)}\) & \(2916\) & \((\eta_6^{(12)},5)\) \\
\(3\) & \(3663\) & \(\chi_1\) & \(\frac{2916}{5}\) & \((\chi_1,5)\) & \(15\) & \(\frac{255}{2}\) & \(\chi_1\) & \(\frac{2916}{5}\) & \((\chi_1,5)\) & \(27\) & \(927\) & \(\chi_1\) & \(\frac{2916}{5}\) & \((\chi_1,5)\) \\
\(4\) & \(3533\) & \(\eta_3^{(0)}\) & \(\frac{8748}{5}\) & \((\eta_3^{(0)},5)\) & \(16\) & \(\frac{79}{2}\) & \(\eta_3^{(0)}\) & \(\frac{8923}{20}\) & \((\eta_3^{(0)},1)\) & \(28\) & \(\frac{1243}{2}\) & \(\eta_3^{(0)}\) & \(\frac{8748}{5}\) & \((\eta_3^{(0)},5)\) \\
\(5\) & \(\frac{15}{2}\) & \(\eta_6^{(01)}\) & \(\frac{1333}{10}\) & \((\eta_6^{(01)},1)\) & \(17\) & \(440\) & \(\eta_6^{(01)}\) & \(\frac{18749}{5}\) & \((\eta_6^{(01)},1)\) & \(29\) & \(\frac{8381}{2}\) & \(\eta_6^{(01)}\) & \(\frac{31104}{5}\) & \((\eta_9,5)\) \\
\(6\) & \(7\) & \(\chi_0\) & \(\frac{769}{5}\) & \((\chi_0,1)\) & \(18\) & \(578\) & \(\chi_0\) & \(\frac{25599}{5}\) & \((\chi_0,1)\) & \(30\) & \(5867\) & \(\chi_0\) & \(\frac{46656}{5}\) & \((\chi_0,5)\) \\
\(7\) & \(\frac{25}{2}\) & \(\eta_6^{(01)}\) & \(0\) & \((\eta_3^{(1)},5)\) & \(19\) & \(230\) & \(\eta_3^{(1)}\) & \(0\) & \((\eta_3^{(1)},5)\) & \(31\) & \(1211\) & \(\eta_3^{(1)}\) & \(0\) & \((\eta_3^{(1)},5)\) \\
\(8\) & \(\frac{15}{2}\) & \(\eta_6^{(01)}\) & \(\frac{541}{4}\) & \((\eta_6^{(01)},1)\) & \(20\) & \(\frac{673}{2}\) & \(\eta_6^{(01)}\) & \(\frac{10525}{4}\) & \((\eta_6^{(01)},1)\) & \(32\) & \(2716\) & \(\eta_6^{(01)}\) & \(2916\) & \((\eta_6^{(12)},5)\) \\
\(9\) & \(\frac{59}{2}\) & \(\chi_1\) & \(\frac{5931}{20}\) & \((\chi_1,1)\) & \(21\) & \(\frac{723}{2}\) & \(\chi_1\) & \(\frac{2916}{5}\) & \((\chi_1,5)\) & \(33\) & \(1910\) & \(\chi_1\) & \(\frac{2916}{5}\) & \((\chi_1,5)\) \\
\(10\) & \(\frac{5}{2}\) & \(\eta_3^{(0)}\) & \(\frac{629}{10}\) & \((\eta_3^{(0)},1)\) & \(22\) & \(188\) & \(\eta_3^{(0)}\) & \(\frac{7627}{5}\) & \((\eta_3^{(0)},1)\) & \(34\) & \(\frac{3187}{2}\) & \(\eta_3^{(0)}\) & \(\frac{8748}{5}\) & \((\eta_3^{(0)},5)\) \\
\(11\) & \(82\) & \(\eta_6^{(01)}\) & \(\frac{19481}{20}\) & \((\eta_6^{(01)},1)\) & \(23\) & \(\frac{3069}{2}\) & \(\eta_6^{(01)}\) & \(\frac{31104}{5}\) & \((\eta_9,5)\) & \(35\) & \(\frac{19289}{2}\) & \(\eta_6^{(01)}\) & \(\frac{31104}{5}\) & \((\eta_9,5)\) \\
\bottomrule
\end{tabular}
\captionsetup{hypcap=false,justification=centering}
\captionof{table}{Residue-wise minima of the shifted difference-polynomial
coefficients.  The three column blocks align \(c\), \(c+12\),
and \(c+24\), which share the same residue modulo \(12\).  For each
residue, the table displays one competitor attaining \(a_{c,0}\) and
one competitor--degree pair attaining \(a_{c,+}\).}
\label{tab:supp-clifford-threshold-minima}
\end{minipage}
\end{center}

The minima satisfy \(a_{c,0}>0\) and \(a_{c,+}\ge0\) for every \(c\).
The equality \(a_{c,+}=0\) occurs only when \(c\equiv1\pmod6\).  The global
minimum of the shifted constant coefficients is \(5/2\), attained only at
\(c=10\), by the competitor \(\eta_3^{(0)}\).  Since each minimum is taken over all competitors, these \(36\)
residue-class entries certify all \(252\) winner--competitor comparisons.

\noindent\begin{minipage}{\textwidth}
\emph{Low-query score checks.}
Table~\ref{tab:supp-clifford-low-query-scores} lists the complete scores at
three and four queries, with the candidates ordered as in
Eq.~\eqref{eq:supp-clifford-type-families}.

\begin{center}
\renewcommand{\arraystretch}{1.12}
\begin{tabular}{@{}c@{\hspace{1.5em}}c@{}}
\toprule
\(m=3,\ \mathcal F_0\)
&
\(m=4,\ \mathcal F_1\)
\\
\cmidrule(lr){1-1}\cmidrule(lr){2-2}
\begin{tabular}{@{}c*{10}{r}@{}}
\(\eta\)
&\(\chi_0\)&\(\chi_1\)&\(\chi_2\)
&\(\tau_{01}\)&\(\tau_{02}\)&\(\tau_{12}\)
&\(\nu_3\)&\(\Gamma_0\)&\(\Gamma_1\)&\(\Gamma_2\)
\\
\(h_3(\eta)\)&\(0\)&\(1\)&\(0\)&\(5\)&\(0\)&\(0\)&\(0\)&\(\frac54\)&\(1\)&\(0\)
\end{tabular}
&
\begin{tabular}{@{}c*{7}{r}@{}}
\(\eta\)
&\(\eta_3^{(0)}\)&\(\eta_3^{(1)}\)&\(\eta_3^{(2)}\)
&\(\eta_6^{(01)}\)&\(\eta_6^{(02)}\)&\(\eta_6^{(12)}\)&\(\eta_9\)
\\
\(h_4(\eta)\)&\(5\)&\(1\)&\(0\)&\(5\)&\(\frac52\)&\(1\)&\(\frac53\)
\end{tabular}
\\
\bottomrule
\end{tabular}
\captionsetup{hypcap=false}
\captionof{table}{Exact Clifford-type scores at three and four queries.}
\label{tab:supp-clifford-low-query-scores}
\end{center}

The maximum score is \(5\) at both query numbers.  At \(m=4\), both
\(\eta_3^{(0)}\) and \(\eta_6^{(01)}\) attain the maximum.  These scores provide the benchmark in
Proposition~\ref{prop:supp-clifford-low-query-wigner} and establish the
sharpness of the uniqueness threshold in
Theorem~\ref{thm:supp-clifford-type-selection}.
\end{minipage}

\subsection{Wigner functions and phase-space witnesses}
\label{app:supp-wigner-witness-certificates}

This subsection records the exact phase-space and reduced-block data used in
Proposition~\ref{prop:supp-clifford-low-query-wigner} and
Theorem~\ref{thm:supp-clifford-finite-window-wigner}.  All displayed values
are exact, and the notation and normalization conventions are those of
Subsecs.~\ref{sec:supp-clifford-wigner-low-query}
and~\ref{sec:supp-clifford-wigner-obstruction}.

\emph{Three-query phase-space orbits.}
For the three-query measurement in
Eq.~\eqref{eq:supp-clifford-three-query-protocol},
\[
M_1^{(3)}
=
\Pi_{\mathcal Q}
=
\Pi_{\mathrm{sym}}
-
\lvert G\rangle\langle G\rvert
-
\lvert S\rangle\langle S\rvert.
\]
Consequently,
\[
W\!\left(M_1^{(3)}\mid\mathbf u\right)
=
\Tr\!\left(\Pi_{\mathrm{sym}}A_{\mathbf u}\right)
-
\langle G\rvert A_{\mathbf u}\lvert G\rangle
-
\langle S\rvert A_{\mathbf u}\lvert S\rangle.
\]
Diagonal affine-symplectic transformations and tensor-factor permutations
partition the \(729\) ordered triples into the four orbit types listed in
Table~\ref{tab:supp-wigner-three-query-orbits}.

\begin{table}[!htbp]
\centering
{%
\normalsize
\renewcommand{\arraystretch}{1.18}
\begin{tabular}{@{}lcccc@{}}
\toprule
Orbit type
& Representative
& \(\lvert\mathcal O\rvert\)
& \(W(M_1^{(3)}\mid\mathbf u)\)
& \(W(M_0^{(3)}\mid\mathbf u)\)
\\
\midrule
Identical
& \((0,0,0)\)
& \(9\)
& \(0\)
& \(1\)
\\
Two distinct
& \((0,0,e_1)\)
& \(216\)
& \(1\)
& \(0\)
\\
Three collinear
& \((0,e_1,2e_1)\)
& \(72\)
& \(0\)
& \(1\)
\\
Three noncollinear
& \((0,e_1,e_2)\)
& \(432\)
& \(0\)
& \(1\)
\\
\bottomrule
\end{tabular}
}
\caption{Orbits of ordered triples in \((\mathbb F_3^2)^3\) under diagonal
affine-symplectic transformations and tensor-factor permutations.
Representatives are chosen with $e_1=(1,0)$ and $e_2=(0,1)$
in the fixed phase-space coordinates.}
\label{tab:supp-wigner-three-query-orbits}
\end{table}

The orbit sizes sum to \(729=9^3\).  Define the ordered quadruple
\[
\mathcal V(\mathbf u)
:=
\left(
\Tr\!\left(\Pi_{\mathrm{sym}}A_{\mathbf u}\right),
\langle G\rvert A_{\mathbf u}\lvert G\rangle,
\langle S\rvert A_{\mathbf u}\lvert S\rangle,
W\!\left(M_1^{(3)}\mid\mathbf u\right)
\right).
\]
Exact evaluation at the four representatives gives
\[
\begin{aligned}
\mathcal V(0,0,0)
&=(2,1,1,0),
&
\mathcal V(0,0,e_1)
&=(1,0,0,1),
\\
\mathcal V(0,e_1,2e_1)
&=\left(\frac12,0,\frac12,0\right),
&
\mathcal V(0,e_1,e_2)
&=(0,0,0,0).
\end{aligned}
\]
Since $M_1^{(3)}$ is invariant under the diagonal Clifford
action and tensor-factor permutations, Clifford covariance
makes its Wigner function constant on each orbit.

\begin{samepage}
\emph{Reduced-block evaluation.}
Taking the partial trace of the factorization in
Eq.~\eqref{eq:supp-clifford-witness-reduced-block} over the
\(V_{\eta_m^\star}\) and \(S_\lambda\) factors gives
\begin{equation}
K_{B,\lambda}
=
\frac{1}{d_{\eta_m^\star}f^\lambda}
\Tr_{V_{\eta_m^\star}\otimes S_\lambda}
\!\left(
\Pi_{\eta_m^\star,\lambda}
B
\Pi_{\eta_m^\star,\lambda}
\right).
\label{eq:app-supp-wigner-reduced-block-evaluation}
\end{equation}
\end{samepage}%
The reduced block \(K_{B,\lambda}\) acts on
\(\mathbb C^{n_{\eta_m^\star,\lambda}}\).  Scalar blocks are identified
with their scalar values, while matrix blocks are represented in a fixed
orthonormal multiplicity basis.  For the witnesses \(B_m\) fixed in
Subsec.~\ref{sec:supp-clifford-wigner-obstruction}, write
\(K_{m,\lambda}:=K_{B_m,\lambda}\).
Orbit labels follow the convention of
Subsec.~\ref{sec:supp-clifford-wigner-obstruction}.

\emph{Locked sectors and reduced certificates.}
Table~\ref{tab:supp-wigner-locked-data} records every active Schur sector, its
locked weight, and the multiplicity of the maximizing Clifford type in that
sector, for \(m=5,\ldots,9\).

\begin{center}
\begin{minipage}{\textwidth}
\centering
\renewcommand{\arraystretch}{1.12}
\setlength{\tabcolsep}{8pt}
\begin{tabular}{@{}ccccc@{}}
\toprule
\(m\)
& \(\eta_m^\star\)
& \(\lambda\)
& \(w_{m,\lambda}\)
& \(n_{\eta_m^\star,\lambda}\)
\\
\midrule
\multirow{2}{*}{\(5\)}
& \multirow{2}{*}{\(\eta_3^{(1)}\)}
& \((5)\) & \(7/15\) & \(1\)
\\
& & \((4,1)\) & \(8/15\) & \(1\)
\\
\addlinespace[0.5em]
\(6\)
& \(\chi_1\)
& \((5,1)\) & \(1\) & \(1\)
\\
\addlinespace[0.5em]
\multirow{3}{*}{\(7\)}
& \multirow{3}{*}{\(\eta_3^{(0)}\)}
& \((7)\) & \(1/2\) & \(2\)
\\
& & \((6,1)\) & \(1/3\) & \(1\)
\\
& & \((4,3)\) & \(1/6\) & \(1\)
\\
\addlinespace[0.5em]
\multirow{5}{*}{\(8\)}
& \multirow{5}{*}{\(\eta_3^{(1)}\)}
& \((8)\) & \(15/83\) & \(1\)
\\
& & \((7,1)\) & \(42/83\) & \(2\)
\\
& & \((6,1,1)\) & \(7/83\) & \(1\)
\\
& & \((5,3)\) & \(14/83\) & \(1\)
\\
& & \((4,4)\) & \(5/83\) & \(1\)
\\
\addlinespace[0.5em]
\multirow{5}{*}{\(9\)}
& \multirow{5}{*}{\(\tau_{01}\)}
& \((9)\) & \(22/69\) & \(2\)
\\
& & \((8,1)\) & \(16/69\) & \(1\)
\\
& & \((7,2)\) & \(27/115\) & \(1\)
\\
& & \((6,3)\) & \(64/345\) & \(1\)
\\
& & \((4,4,1)\) & \(2/69\) & \(1\)
\\
\bottomrule
\end{tabular}
\captionsetup{hypcap=false}
\captionof{table}{Active Schur sectors, locked weights, and Clifford multiplicities
for the maximizing types at \(m=5,\ldots,9\).}
\label{tab:supp-wigner-locked-data}
\end{minipage}
\end{center}

For \(m=5\), the witness value on \(\mathcal E_5\) depends only on
the following weighted combination of the two scalar blocks:
\[
\frac7{15}K_{5,(5)}
+
\frac8{15}K_{5,(4,1)}
=
-\frac1{15}.
\]

For \(m=6\), the only active block is
\[
K_{6,(5,1)}
=
-\frac15.
\]

For \(m=7\), the three reduced blocks are
\[
K_{7,(7)}
=
-\frac1{96}\id_2,
\qquad
K_{7,(6,1)}
=
-\frac1{192},
\qquad
K_{7,(4,3)}
=
\frac1{1344}.
\]
For \(m=8\), every active sector satisfies
\[
K_{8,\lambda}
=
-\frac13\id,
\]
where the identity acts on the multiplicity space whose dimension is listed in
Table~\ref{tab:supp-wigner-locked-data}.

\emph{The nine-query matrix certificate.}
As noted in Subsec.~\ref{sec:supp-clifford-wigner-obstruction}, only the
\((9)\)-sector has a two-dimensional multiplicity space; the remaining active
sectors are multiplicity-free.  For each of the three
orbit averages in \(B_9\), combining the locked scalar contributions from these
sectors with the \((9)\)-sector contribution gives an effective Hermitian
matrix.

In a fixed orthonormal basis of this two-dimensional multiplicity space, the
matrices are
\begin{equation}
\begin{aligned}
K_{1^9}^{\mathrm A}
&=
\begin{pmatrix}
-\dfrac3{161} & 0
\\[4pt]
0 & \dfrac1{115}
\end{pmatrix},
\\[4pt]
K_{2^2\,1^5}^{\mathrm A}
&=
\begin{pmatrix}
\dfrac{1787}{3245760}
&
-\dfrac{11\sqrt{35}}{1081920}
\\[4pt]
-\dfrac{11\sqrt{35}}{1081920}
&
-\dfrac{107}{360640}
\end{pmatrix},
\\[4pt]
K_{333}^{\mathrm L}
&=
\begin{pmatrix}
\dfrac{425}{10304}
&
\dfrac{297\sqrt{35}}{51520}
\\[4pt]
\dfrac{297\sqrt{35}}{51520}
&
-\dfrac{83}{7360}
\end{pmatrix}.
\end{aligned}
\label{eq:app-supp-wigner-nine-query-matrices}
\end{equation}
These entries give the cancellation used in
Eq.~\eqref{eq:supp-clifford-nine-query-certificate}:
\[
39K_{1^9}^{\mathrm A}
+
1134K_{2^2\,1^5}^{\mathrm A}
+
2K_{333}^{\mathrm L}
=
-\frac{16}{805}\id_2.
\]
The combined effective operator is therefore proportional to the identity,
so the certificate value is independent of the multiplicity-space state.

\section*{Use of Artificial Intelligence}

The initial motivation for this work arose from the authors'
observation that the optimal type-II error for unitary subgroup
testing established by Hayashi et al.~\cite{hayashi2025predicting}
does not by itself determine the resources required to attain it.
This led to the question of whether the zero-type-I-error
optimum can be attained using only free probes and free
measurements.

In pursuing this question, the authors used large language
models to assist with literature searches and to analyze the
connection between representation-theoretic optimality
conditions and quantum resource constraints, helping to develop
and organize the approach presented here.
These tools also assisted with code development for exact
computations and certificate verification, and with the
refinement of the manuscript's mathematical exposition and
presentation.
The mathematical arguments, computational results,
interpretations, and references were checked by the authors,
who approved the final manuscript and take full responsibility
for its scientific content.

\end{document}